\numberwithin{equation}{section}
\theoremstyle{plain}
\newtheorem{thm}{\protect\theoremname}
  \theoremstyle{plain}
  \newtheorem{conjecture}[thm]{\protect\conjecturename}
  \theoremstyle{remark}
  \newtheorem*{rem*}{\protect\remarkname}
  \theoremstyle{remark}
  \newtheorem{rem}[thm]{\protect\remarkname}
  \theoremstyle{plain}
  \newtheorem{prop}[thm]{\protect\propositionname}
\numberwithin{equation}{section}
\numberwithin{figure}{section}
\renewcommand{\widehat}{\hat}
\renewcommand{\widetilde}{\tilde}
\newcolumntype{C}{>{$\displaystyle} c <{$}}   %makes text not so small in matrices
\renewcommand{\geq}{\geqslant}
\renewcommand{\leq}{\leqslant}
\renewcommand{\Re}{\operatorname{Re}}
\newcommand{\tg}{\operatorname{tg}}
\DeclareMathOperator*{\res}{res}
\def\Ai{{\rm Ai \,}}
\def\bigO{\mathcal{O}}
\newcommand{\restr}[2]{{% we make the whole thing an ordinary symbol
  \left.\kern-\nulldelimiterspace % automatically resize the bar with \right
  #1 % the function
  \vphantom{\big|} % pretend it's a little taller at normal size
  \right|_{#2} % this is the delimiter
  }}
  \providecommand{\propositionname}{Proposition}
  \providecommand{\remarkname}{Remark}
\providecommand{\theoremname}{Theorem}
  \providecommand{\conjecturename}{Conjecture}
  \providecommand{\propositionname}{Proposition}
  \providecommand{\remarkname}{Remark}
\providecommand{\theoremname}{Theorem}
  \providecommand{\conjecturename}{Conjecture}
  \providecommand{\propositionname}{Proposition}
  \providecommand{\remarkname}{Remark}
\providecommand{\theoremname}{Theorem}
\begin{document}

\date{\today}

\title{Hankel determinant and orthogonal polynomials for a Gaussian weight
with a discontinuity at the edge}

\author{A. Bogatskiy\thanks{Department of Higher Mathematics and Mathematical Physics, Department
of Physics, Saint-Petersburg State University. Saint-Petersburg, Russia.}, T. Claeys\thanks{Institut de Recherche en Mathématique et Physique, Université catholique
de Louvain, Chemin du Cyclotron 2, B-1348 Louvain-La-Neuve, Belgium}, A. Its\thanks{Department of Mathematical Sciences, Indiana University -- Purdue
University Indianapolis. Indianapolis, IN 46202-3216, USA.}}
\maketitle
\begin{abstract}
We compute asymptotics for Hankel determinants and orthogonal polynomials
with respect to a discontinuous Gaussian weight, in a critical regime
where the discontinuity is close to the edge of the associated equilibrium
measure support. Their behavior is described in terms of the Ablowitz-Segur
family of solutions to the Painlevé II equation. Our results complement
the ones in \cite{XuZhao11}. As consequences of our results, we conjecture
asymptotics for an Airy kernel Fredholm determinant and total integral
identities for Painlevé II transcendents, and we also prove a new
result on the poles of the Ablowitz-Segur solutions to the Painlevé
II equation. We also highlight applications of our results in random
matrix theory. 
\end{abstract}
%\inputencoding{latin9}\tableofcontents{}\newpage{}

\section{Introduction}

Consider the Hankel determinant, 
\begin{equation}
H_{n}(\lambda_{0},\beta)=\det\left(\int_{-\infty}^{\infty}x^{j+k}w(x)\mathrm{d}x\right)_{j,k=0}^{n-1}=\frac{1}{n!}\dotsint_{-\infty}^{\infty}\prod_{i<j}(x_{i}-x_{j})^{2}\prod_{k=1}^{n}w(x_{k})\mathrm{d}x_{k},\label{det}
\end{equation}
with respect to a discontinuous Gaussian weight of the form 
\begin{equation}
w(x)=e^{-x^{2}}\times\begin{cases}
e^{\pi i\beta}, & x<\lambda_{0}\\
e^{-\pi i\beta}, & x\geq\lambda_{0}
\end{cases},\quad\Re\beta\in\left(-\frac{1}{2},\frac{1}{2}\right],\,\lambda_{0}\in\mathbb{R}.\label{weight}
\end{equation}
The weight is periodic in $\beta$ and we can restrict to the case
$-1/2<\Re\beta\leq1/2$ without loss of generality. If $\beta$ is
purely imaginary, the weight is positive.

We also consider the monic orthogonal polynomials $p_{n}$ of degree
$n$ with respect to the weight $w(x)$ on the real line, defined
by the orthogonality conditions 
\begin{equation}
\int_{-\infty}^{\infty}p_{n}(x)p_{m}(x)w(x)\mathrm{d}x=h_{n}\delta_{nm},\qquad h_{n}=h_{n}(\lambda_{0},\beta).\label{pol}
\end{equation}
Those are connected to the Hankel determinant $H_{n}$ by the well-known
identity $H_{n}(\lambda_{0},\beta)=\prod_{k=0}^{n-1}h_{k}(\lambda_{0},\beta)$.
We denote by $R_{n}=R_{n}(\lambda_{0},\beta)$ and $Q_{n}=Q_{n}(\lambda_{0},\beta)$
the recurrence coefficients in the three-term recurrence relation
\begin{equation}
xp_{n}(x)=p_{n+1}(x)+Q_{n}p_{n}(x)+R_{n}p_{n-1}(x).\label{recur}
\end{equation}

The question which we are concerned with in this paper is the large
$n$ behavior of the Hankel determinants $H_{n}$, the polynomials
$p_{n}(x)$, and their recurrence coefficients $R_{n}$ and $Q_{n}$,
in the regime where the point of discontinuity $\lambda_{0}$ behaves
like $\sqrt{2n}$. They can asymptotically be expressed in terms of
the Ablowitz-Segur solutions to the Painlevé II equation. As important
by-products of the asymptotics for the Hankel determinants, we also
conjecture so-called large gap asymptotics for an Airy kernel Fredholm
determinant and total integral identities for the Ablowitz-Segur solutions
of the Painlevé II equation. Relying on a result of \cite{XuZhao11},
we in addition prove a new result about the poles for those Painlevé
transcendents.

If we let $\lambda_{0}=\lambda\sqrt{2n}$, the large $n$ asymptotics
of the orthogonal polynomials (\ref{pol}), the recurrence coefficients
(\ref{recur}), and the Hankel determinant (\ref{det}) depend dramatically
on whether $|\lambda|<1$ or $|\lambda|>1$, i.e. whether the jump
location $\lambda_{0}$ is inside or outside of the support $[-\sqrt{2n},\sqrt{2n}]$
of the equilibrium measure with Gaussian external field. In the case
$|\lambda|>1$, all the objects of interest behave effectively as
they do for the pure Gaussian weight (i.e., the case where we formally
set $\lambda_{0}=+\infty$); the discontinuity yields an exponentially
small correction only \cite{Johansson98}. In the case $|\lambda|<1$,
the situation is different; the discontinuity of the weight becomes
strongly visible in the large $n$ behavior of the orthogonal polynomials,
the recurrence coefficients, and the Hankel determinant \cite{ItsKras08}.
For the Hankel determinant, it was proved in \cite[equation (1.5)]{ItsKras08}
that 
\begin{multline}
H_{n}(\lambda_{0},\beta)=H_{n}\left(\lambda_{0},0\right)\,G(1+\beta)G(1-\beta)(1-\lambda^{2})^{-3\beta^{2}/2}(8n)^{-\beta^{2}}\enskip\times\\
\times\enskip\exp\left(2in\beta\left(\arcsin\lambda+\lambda\sqrt{1-\lambda^{2}}\right)\right)\left(1+\bigO\left(\frac{\log n}{n^{1-4|\Re\beta|}}\right)\right),\;\left|\Re\beta\right|<\frac{1}{4},\label{as Hankel noncrit}
\end{multline}
as $n\to\infty$, uniformly for $\lambda$ in compact subsets of $(-1,1)$.
Here $G$ is the Barnes' $G$-function, and 
\begin{equation}
H_{n}\left(\lambda_{0},0\right)=(2\pi)^{n/2}2^{-n^{2}/2}\prod_{k=1}^{n-1}k!\label{HnGUE}
\end{equation}
denotes the Hankel determinant corresponding to the pure Gaussian
weight $e^{-x^{2}}$. Asymptotics for the recurrence coefficients
$Q_{n}$ and $R_{n}$ in the case $-1<\lambda<1$ are also given in
\cite{ItsKras08}.

In this paper, we analyze the transition regime where the point $\lambda_{0}$
of discontinuity of the weight is (relatively) close to $\sqrt{2n}$.
More precisely we let 
\begin{equation}
\lambda_{0}=\lambda\sqrt{2n},\qquad\lambda=1+\frac{t}{2}n^{-2/3},\label{scaling lambda0}
\end{equation}
where $t\in\mathbb{R}$. We will see that the asymptotic behavior
of $H_{n}$, $p_{n}$, $R_{n}$, and $Q_{n}$ depends in a non-trivial
way on the parameter $t$ in \eqref{scaling lambda0}. The asymptotic
behavior is described in terms of a family of solutions to the Painlevé
II equation 
\begin{equation}
u_{tt}=tu+2u^{3},\label{P2}
\end{equation}
with the asymptotic behavior 
\begin{equation}
u(t;\kappa)\sim\kappa{\rm Ai}(t),\qquad t\to+\infty,\label{u+}
\end{equation}
where ${\rm Ai}$ denotes the Airy function, and 
\begin{equation}
u(t;\kappa)=\frac{1}{(-t)^{1/4}}\sqrt{2i\beta}\sin\phi(t;\beta)+\bigO\left(\frac{1}{t^{2-3|\Re\beta|}}\right),\quad t\to-\infty,\label{u-}
\end{equation}
with 
\begin{equation}
\phi(t;\beta)=-\frac{\pi}{4}-i\log\frac{\Gamma(-\beta)}{\Gamma(\beta)}+\frac{2}{3}(-t)^{3/2}-\frac{3}{2}i\beta\log\left(-t\right)-3i\beta\log2,\qquad\kappa^{2}=1-e^{-2\pi i\beta},\qquad|\Re\beta|<\frac{1}{2}.\label{theta}
\end{equation}
For $0<\kappa<1$, these solutions are known as the Ablowitz-Segur
solutions \cite{AblowSegu77} of the second Painlevé equation. They
are uniquely characterized either by \eqref{u+} or by \eqref{u-}.
Moreover, it is known that $u(t;\kappa)$ has no singularities for
$t$ on the real line if $\kappa\in i\mathbb{R}$ or if $|\kappa|<1$.
For $\kappa\in\mathbb{R}\setminus[-1,1]$, or equivalently $|\Re\beta|=1/2$,
it is known that $u(\tau;\kappa)$ does have real poles \cite{Bertola12}.
Relying on a result from \cite{XuZhao11}, we will prove the following
result, stating that $u$ has no real poles for any $\kappa\in\mathbb{C}\setminus((-\infty,-1]\cup[1,+\infty))$,
or equivalently for any $\beta$ with $|\Re\beta|<1/2$. 
\begin{thm}
\label{theorem poles} Let $u(t;\kappa)$ be the solution to the Painlevé
II equation \eqref{P2} characterized by \eqref{u+}. If $\kappa\in\mathbb{C}\setminus((-\infty,-1]\cup[1,+\infty))$
is fixed, then $u(t;\kappa)$ has no poles at real values of $t$. 
\end{thm}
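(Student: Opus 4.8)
The plan is to establish a connection between the absence of real poles of $u(t;\kappa)$ and the solvability of a Riemann--Hilbert problem, and to exploit this via the Hankel-determinant asymptotics. Since $u(t;\kappa)$ is the Ablowitz--Segur transcendent associated with the discontinuous Gaussian weight \eqref{weight} through the scaling \eqref{scaling lambda0}, its poles on the real $t$-axis correspond precisely to the values of $t$ at which the relevant local parametrix problem at the edge fails to be solvable. The standard strategy (following the vanishing-lemma philosophy) is: \textbf{(i)} recall from \cite{XuZhao11} that for the purely imaginary $\beta$ case — equivalently $|\kappa|<1$ with $\kappa$ real, where the weight is positive — the solution $u(t;\kappa)$ is pole-free on $\mathbb{R}$, because the corresponding Hankel determinant $H_n(\lambda_0,\beta)$ is a determinant of a positive-definite moment matrix and hence is strictly positive; \textbf{(ii)} extend this from $\kappa\in(-1,1)$ (equivalently $\beta\in i\mathbb{R}$) to all $\beta$ with $|\Re\beta|<1/2$ by an analyticity argument in the parameter $\beta$.

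More concretely, I would first fix $t\in\mathbb{R}$ and consider the map $\beta\mapsto u(t;\kappa(\beta))$ with $\kappa^2 = 1-e^{-2\pi i\beta}$. The transcendent $u(t;\kappa)$ is known to be meromorphic in both $t$ and in the monodromy parameter $\kappa$ (equivalently, in $\beta$); the poles in $t$ are isolated and depend continuously on $\beta$. The key structural input is the identification of the set of poles with the zero set of a Fredholm determinant: $u(t;\kappa)$ has a pole at $t=t_0$ if and only if the Airy-type kernel Fredholm determinant (the $\tau$-function $F(t)$ whose logarithmic derivatives express $u^2$) vanishes at $t_0$. For $\kappa\in(-1,1)$ real this determinant is that of a trace-class operator $\gamma^2 K_{\mathrm{Ai}}$ with $\gamma^2=\kappa^2\in(0,1)$ acting on $L^2(t_0,\infty)$; since $0\le \gamma^2 K_{\mathrm{Ai}}< I$ as an operator (the Airy kernel has operator norm $<1$ on any half-line, or at least $\le 1$, and multiplication by $\kappa^2<1$ makes it strictly less), the determinant $\det(I-\gamma^2 K_{\mathrm{Ai}})$ is strictly positive, so there are no real poles. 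This reproduces the result of \cite{XuZhao11} in that regime.

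The extension step is the heart of the argument. I would argue that the quantity controlling the poles — call it $\tau(t;\beta)$, the relevant $\tau$-function / Fredholm determinant — is an \emph{entire} function of $\beta$ in the strip $|\Re\beta|<1/2$ for each fixed real $t$ (this entire dependence follows from the Riemann--Hilbert construction: the jump matrices depend entirely, in fact analytically, on $\beta$ through $e^{\pm\pi i\beta}$, and the RHP is solvable for $\beta$ in a full complex neighborhood of $i\mathbb{R}$). Having $\tau(t;\cdot)$ entire and nonvanishing on the imaginary axis is not by itself enough; one needs to preclude zeros off the axis. Here I would invoke the total-integral / monotonicity identities available for these transcendents, or more robustly, the fact that $\kappa\mapsto\det(I-\kappa^2 K_{\mathrm{Ai}})$ extends analytically to all $\kappa\in\mathbb{C}\setminus((-\infty,-1]\cup[1,\infty))$ and is nonvanishing there because $\kappa^2 K_{\mathrm{Ai}}$ has spectrum in $\mathbb{C}\setminus[1,\infty)$ (the eigenvalues of $K_{\mathrm{Ai}}$ on $L^2(t_0,\infty)$ lie in $[0,1)$, so $\kappa^2$ times them avoids $1$ exactly when $\kappa^2\notin[1,\infty)$, i.e. $\kappa\notin(-\infty,-1]\cup[1,\infty)$). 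Then $\tau(t;\beta)\ne 0$, hence no real pole of $u$.

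The main obstacle I anticipate is making rigorous the precise identification of the pole set of $u(t;\kappa)$ with the zero set of $\det(I-\kappa^2 K_{\mathrm{Ai}})|_{L^2(t,\infty)}$ \emph{uniformly in the complex parameter} $\kappa$ (or $\beta$) — in particular, justifying that the Fredholm determinant's analytic continuation in $\kappa$ is exactly the $\tau$-function governing $u$, and that the spectral statement ``$\mathrm{spec}(K_{\mathrm{Ai}})\subset[0,1)$ on every half-line'' holds (the bound $\|K_{\mathrm{Ai}}\|<1$ rather than $\le 1$ on $L^2(t,\infty)$ for finite $t$). A clean way around the last point is to use that $K_{\mathrm{Ai}}$ on $L^2(t,\infty)$ has strictly decreasing largest eigenvalue tending to $1$ only as $t\to-\infty$, so for every finite $t$ the spectrum is contained in a compact subset of $[0,1)$; combined with analyticity in $\kappa$, the non-vanishing of the determinant on $\mathbb{C}\setminus((-\infty,-1]\cup[1,\infty))$ follows, and with it the theorem. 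Once the identification is in place, the rest is a short analyticity-plus-spectral-theory argument, so I would concentrate the effort there and lean on \cite{XuZhao11} for the $\tau$-function/Painlevé correspondence.
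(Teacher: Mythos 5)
Your approach is genuinely different from the paper's, and it has a gap. The paper's proof is short and direct: it invokes \cite[Corollary~1]{XuZhao11}, where a \emph{vanishing lemma} establishes solvability of the model Riemann--Hilbert problem $\Psi_0^{XZ}(\zeta;s)$ for \emph{every} real $s$ whenever $|\Re\beta|<1/2$ --- not merely for purely imaginary $\beta$. Via the explicit conjugation \eqref{relation Psi0} this gives solvability of the RH problem for $\Psi_0$ for all real $\tau$, so $y(\tau;\beta)$ defined by \eqref{eq: y in terms of m or mPhi without derivatives} is finite on $\mathbb{R}$, and the asymptotics \eqref{eq: asymptotic for y(tau) at +Infty} (proved in Section~\ref{sec: as u+}) identify $u=\sqrt{y}$ with the Ablowitz--Segur solution characterized by \eqref{u+}. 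In particular, no ``extension from $\beta\in i\mathbb{R}$'' step is needed --- the vanishing lemma already covers the entire parameter range, so your step~(ii) solves a problem the paper never has.

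Two things in your plan are not sound as stated. First, your step~(i) argues pole-freeness of $u$ on the grounds that the finite-$n$ Hankel determinants $H_n(\lambda_0,\beta)$ are positive for a positive weight; but positivity of each $H_n$ (a finite-dimensional fact) says nothing directly about poles of the limiting Painlev\'e transcendent --- the passage from the finite-$n$ objects to $u$ is precisely the asymptotic RH analysis whose validity is at issue. Second, your pivot to the Fredholm determinant $\det\bigl(I-\kappa^2\restr{K_{\mathrm{Ai}}}{[t,\infty)}\bigr)$ and the spectral fact $\mathrm{spec}\,\restr{K_{\mathrm{Ai}}}{[t,\infty)}\subset[0,1)$ is a clean and correct computation \emph{provided} one already knows that the real poles of $u(\cdot;\kappa)$ coincide with the real zeros of this determinant for all $\kappa$ under consideration. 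You correctly identify this identification as the ``main obstacle,'' but you do not supply it, and it is exactly the nontrivial content of a proof: the Tracy--Widom relation \eqref{TW} is derived under the standing assumption that $u$ is the pole-free Ablowitz--Segur transcendent, so invoking it to \emph{prove} pole-freeness is circular unless the tau-function/determinant correspondence is established independently (e.g.\ through the integrable-operator/RH machinery of Its--Izergin--Korepin--Slavnov or Palmer). Without that lemma your argument does not close, whereas the paper sidesteps the entire issue by appealing directly to the vanishing lemma for the model RH problem. As a side remark, if the determinant/pole identification were supplied, your spectral argument would actually give the marginally stronger conclusion that $\kappa=\pm1$ is also admissible --- consistent with the known pole-freeness of the Hastings--McLeod solution --- so the route is not wrong in spirit, merely incomplete.
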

In the case $\kappa=0$, we simply have $u(t;\kappa)=0$; the unique
Painlevé II solution satisfying \eqref{u+} with $\kappa=\pm1$ (which
means formally that $\beta=-i\infty$) is known as the Hastings-McLeod
solution.

The function $y(t;\beta)=u(t;\kappa)^{2}$ solves the Painlevé XXXIV
equation 
\begin{equation}
y_{tt}=4y^{2}+2ty+\frac{(y_{t})^{2}}{2y}.\label{P34}
\end{equation}

The function $y(t,\beta)$ and equation \eqref{P34} are, in fact,
the objects which directly appear in our double scaling analysis of
$H_{n}$, $p_{n}$, $R_{n}$ and $Q_{n}$. Our next result describes
the asymptotics or the Hankel determinants $H_{n}(\lambda_{0},\beta)$. 
\begin{thm}
\label{theorem hankel} Let $|\Re\beta|<1/2$ and let $H_{n}(\lambda_{0},\beta)$
be the Hankel determinant (\ref{det}) corresponding to the weight
(\ref{weight}), with $\lambda_{0}$ given by (\ref{scaling lambda0}).
If $\kappa^{2}=1-e^{-2\pi i\beta}$, we have 
\begin{equation}
H_{n}\bigl(\lambda_{0},\beta)=e^{i\pi\beta n}H_{n}\left(\lambda_{0},0\right)\exp\left(-\int_{t}^{\infty}(\tau-t)u(\tau;\kappa)^{2}\mathrm{d}\tau\right)(1+o(1)),\qquad n\to\infty,\label{hankelas}
\end{equation}
uniformly for $t\in[-M,\infty)$ for any $M>0$ and for $\beta$ in
compact subsets of $|\Re\beta|<1/2$, where $H_{n}\left(\lambda_{0},0\right)$
is given in \eqref{HnGUE}. 
\end{thm}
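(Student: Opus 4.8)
The plan is to establish \eqref{hankelas} via the Deift--Zhou steepest descent method applied to the Riemann--Hilbert (RH) problem for orthogonal polynomials with weight \eqref{weight}, together with differential identities in the parameters. First I would set up the $2\times 2$ RH problem $Y$ characterizing $p_n$ (standard Fokas--Its--Kitaev), then perform the usual sequence of transformations: rescale $x=\sqrt{2n}\,z$, introduce the $g$-function/equilibrium measure for the Gaussian field so that $Y\mapsto T$ has the correct exponential behavior, open lenses around the band $[-1,1]$ to get $S$, and build the global parametrix from the pure Gaussian problem together with Airy parametrices at the soft edges $z=\pm1$. The novelty is that the jump discontinuity of the weight sits at $z=\lambda = 1 + \tfrac t2 n^{-2/3}$, i.e. inside the Airy scaling neighborhood of the right edge $z=1$. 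So near $z=1$ the standard Airy parametrix must be replaced by a \emph{local parametrix built out of a model RH problem on the real line with a jump discontinuity at a point at distance $\sim t$ from the turning point}; this model problem is exactly the one whose solution is encoded by the Painlevé~II / XXXIV function $u(\cdot;\kappa)$ with $\kappa^2 = 1-e^{-2\pi i\beta}$ as in \eqref{u+}--\eqref{theta}. This is the same local model analyzed in \cite{XuZhao11} and in the Its--Krasovsky circle of ideas; I would cite/adapt its solvability (which for $|\Re\beta|<1/2$ is guaranteed precisely because $u(t;\kappa)$ has no real poles — Theorem~\ref{theorem poles}) and its matching with the global parametrix on the boundary of the disk, giving $S = (I+o(1))\cdot(\text{parametrix})$ with a small-norm error, uniformly for $t\in[-M,\infty)$ and $\beta$ in compact subsets.

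Once the RH analysis yields asymptotics for $Y$ (hence for $h_n$, $R_n$, $Q_n$), the route to the Hankel determinant is through a differential identity. I would use $\partial_{\lambda_0}\log H_n$ or, more conveniently, $\partial_\beta \log H_n$, expressed via the recurrence coefficients / the RH solution at a fixed reference point, following the standard Its--Krasovsky--Toeplitz/Hankel determinant technology. Concretely: compute $\frac{\mathrm d}{\mathrm dt}\log\!\big(H_n(\lambda_0,\beta)/(e^{i\pi\beta n}H_n(\lambda_0,0))\big)$ as $n\to\infty$ using the local parametrix, and recognize the leading term as $\frac{\mathrm d}{\mathrm dt}\big(-\int_t^\infty(\tau-t)u(\tau;\kappa)^2\,\mathrm d\tau\big) = \int_t^\infty u(\tau;\kappa)^2\,\mathrm d\tau$ — this is where the Hamiltonian structure of Painlevé~XXXIV \eqref{P34} enters and the $(\tau-t)$-weighted integral appears naturally (it is the standard $\sigma$-form relation $\sigma'(t) = -\int_t^\infty u^2$, $\sigma(t) = -\int_t^\infty(\tau-t)u^2$). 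Integrating in $t$ from $t$ to $+\infty$, where by \eqref{u+} the integral $\int_t^\infty(\tau-t)u(\tau;\kappa)^2\,\mathrm d\tau \to 0$ superexponentially and $H_n(\lambda_0,\beta)\to e^{i\pi\beta n}H_n(\lambda_0,0)(1+o(1))$ (the $|\lambda|>1$ regime governed by \cite{Johansson98}, since $\lambda\to$ just inside but effectively the large-$t$ end recovers the Gaussian/pure-edge behavior), fixes the integration constant to be trivial. That yields \eqref{hankelas}.

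The main obstacle, and where most of the work lies, is twofold: (i) constructing the local parametrix at the right edge and proving the matching estimate \emph{uniformly} down to $t=-M$ — near $t=-M$ the jump point has moved a bounded distance into the Airy regime, the model RH problem is still solvable (no poles), but one must control the solution's dependence on $t$ and $\beta$ and verify the error term $I+O(n^{-1/3})$ or better on the disk boundary; (ii) justifying the interchange of the $n\to\infty$ limit with the $t$-integration of the differential identity, i.e. uniform control of the error in $\partial_t\log H_n$ over $t\in[-M,\infty)$ including the large-$t$ tail, so that integrating the asymptotic identity is legitimate. A secondary technical point is handling the non-Hermitian case $\Re\beta\neq0$ (the weight is complex), where positivity is lost and one must argue that $h_n\neq0$ / the RH problem is solvable for large $n$ — this again follows from small-norm theory once the local parametrix exists. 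I would also double-check the precise form of the Painlevé parametrix matching to confirm the sign and the factor $e^{i\pi\beta n}$, which tracks the $e^{\pm\pi i\beta}$ jump of $w$ across $\lambda_0$ and the count of nodes to its left; this is a short but error-prone bookkeeping step. The uniformity in $\beta$ on compacts of $|\Re\beta|<1/2$ is inherited from uniform solvability of the model problem, itself a consequence of Theorem~\ref{theorem poles}.
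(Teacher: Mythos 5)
Your proposal is a correct outline, and it coincides essentially with the paper's \emph{second}, self-contained proof (Section~\ref{section: Hankel}): the same chain $Y\mapsto T\mapsto S\mapsto R$, the Painlev\'e~XXXIV local parametrix at the right edge with solvability supplied by Theorem~\ref{theorem poles}, a differential identity for $\partial_{\lambda_0}\log H_n$ expressed through $(Y^{-1}Y')_{21}(\lambda_0)$, the asymptotic reduction of that identity to a quantity involving the model RH problem, and fixing the constant of integration by sending $t\to+\infty$. The paper implements your step ``recognize the leading term as $\int_t^\infty u^2$'' more concretely: it writes the identity in terms of $r(\tau)=(\Psi_0^{-1}\Psi_0')_{21}(0;\tau)$, proves from the Lax pair that $r'(\tau)=-\tfrac{2\pi i}{1-e^{-2\pi i\beta}}\,u(\tau;\kappa)^2$, and then an integration by parts produces the $(\tau-t)$-weighted integral; that matches what you describe modulo a sign slip in your parenthetical $\sigma$-form remark (with $\sigma(t)=-\int_t^\infty(\tau-t)u^2$ one has $\sigma'(t)=+\int_t^\infty u^2$, as your own displayed computation gives).

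What you should be aware of is that the paper also gives a much \emph{shorter} primary proof (Section~\ref{section: proof hankel TW}) that bypasses the RH machinery entirely for this particular theorem. The key observation is that $e^{-\pi i n\beta}H_n(\lambda_0,\beta)/H_n(\lambda_0,0)$ is, for every finite $n$, exactly the Fredholm determinant $\det\bigl(1-\kappa^2 K_n|_{[\lambda_0,\infty)}\bigr)$ of the GUE correlation kernel $K_n$ on $[\lambda_0,\infty)$; the Airy limit of $K_n$ at the soft edge, in a strong enough (trace-norm) sense, then yields convergence of the Fredholm determinant to $\det(1-\kappa^2 K_{\rm Ai}|_{[t,\infty)})$ uniformly on $t\in[-M,\infty)$, and the Tracy--Widom identity~\eqref{TW} converts this directly into the Painlev\'e~II integral in~\eqref{hankelas}. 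That route takes less than a page and needs no differential identity and no constant-of-integration argument. Your RH-based route is the one the paper calls ``lengthy but self-contained''; it is still needed in the paper for the recurrence coefficients and the Plancherel--Rotach asymptotics, but not, strictly, for the Hankel determinant itself. Otherwise your plan is sound, including the point that the $e^{i\pi\beta n}$ prefactor and the fixing of the integration constant come from the $t\to+\infty$ limit, and that uniformity in $\beta$ on compact subsets of $|\Re\beta|<1/2$ is inherited from solvability of the model problem.
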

Theorem \ref{theorem hankel} has two consequences which are not directly
related to the Hankel determinants or orthogonal polynomials studied
in this paper, but which are of independent interest. To describe
them, we note first that the exponential in \eqref{hankelas} can
be recognized as the Tracy-Widom formula for the Fredholm determinant
$\det\left(1-\kappa^{2}\restr{K_{\mathrm{Ai}}}{[t,+\infty)}\right)$,
where $\restr{K_{\mathrm{Ai}}}{[t,+\infty)}$ is the integral operator
with kernel 
\begin{equation}
K_{\Ai}(x,y)=\frac{\Ai(x)\Ai'(y)-\Ai(y)\Ai'(x)}{x-y}
\end{equation}
acting on $[t,+\infty)$. Indeed, it was shown in \cite{TracyWido94}
that 
\begin{equation}
\det\left(1-\kappa^{2}\restr{K_{\mathrm{Ai}}}{[t,+\infty)}\right)=\exp\left(-\int_{t}^{\infty}(\tau-t)u(\tau;\kappa)^{2}\mathrm{d}\tau\right).\label{TW}
\end{equation}

This observation, together with a strengthened version of the Hankel
determinant asymptotics \eqref{as Hankel noncrit}, allows us to formulate
the following conjecture about the $t\to-\infty$ asymptotics of $\det\left(1-\kappa^{2}\restr{K_{\mathrm{Ai}}}{[t,+\infty)}\right)$. 
\begin{conjecture}
\label{conj Airy}Let $\kappa\in\mathbb{C}\setminus((-\infty,-1]\cup[1,+\infty))$
and define $\beta$ by 
\begin{equation}
\kappa^{2}=1-e^{-2\pi i\beta},\qquad|\Re\beta|<1/4.\label{def beta}
\end{equation}
As $t\to-\infty$, we have 
\begin{equation}
\log\det\left(1-\kappa^{2}\restr{K_{\mathrm{Ai}}}{[t,+\infty)}\right)=-\frac{4}{3}i\beta\left(-t\right)^{3/2}-\frac{3}{2}\beta^{2}\log\left(-t\right)+\log\left(G\left(1+\beta\right)G\left(1-\beta\right)\right)-3\beta^{2}\log2+o(1),\label{det Airy}
\end{equation}
or equivalently in form of a total integral identity 
\begin{equation}
\lim_{t\to-\infty}\left(-\int_{t}^{\infty}(\tau-t)u(\tau;\kappa)^{2}d\tau+\frac{4}{3}i\beta\left(-t\right)^{3/2}+\frac{3}{2}\beta^{2}\log\left(-t\right)\right)=\log\left(G\left(1+\beta\right)G\left(1-\beta\right)\right)-3\beta^{2}\log2.\label{total integral u}
\end{equation}
\end{conjecture}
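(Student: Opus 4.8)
The plan is to obtain \eqref{total integral u} — which is equivalent, via \eqref{TW}, to \eqref{det Airy} — by \emph{matching} the two available asymptotic descriptions of the Hankel determinant $H_n(\lambda_0,\beta)$ against each other. Theorem~\ref{theorem hankel} describes $H_n(\lambda_0,\beta)$ in the edge regime \eqref{scaling lambda0} with $t$ bounded below, through the Painlevé~II transcendent $u(\,\cdot\,;\kappa)$; the formula \eqref{as Hankel noncrit} of \cite{ItsKras08} describes $H_n(\lambda_0,\beta)$ for $\lambda_0=\lambda\sqrt{2n}$ with $\lambda$ in a compact subset of $(-1,1)$, through explicit elementary functions and Barnes $G$-functions. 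In the transitional regime where \eqref{scaling lambda0} holds but $t=t(n)\to-\infty$ slowly, so that $\lambda=1+\tfrac t2 n^{-2/3}$ tends to $1$ from below, both formulas ought to be valid; since they describe the same number, equating them and letting $n\to\infty$ should pin down the entire $t\to-\infty$ expansion of $-\int_t^\infty(\tau-t)u(\tau;\kappa)^2\,d\tau$, the constant term included.

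Concretely, I would take $t=t(n)\to-\infty$ with, say, $n^{1/10}\le|t(n)|\le n^{1/5}$, so that $1-\lambda^2=(-t)n^{-2/3}(1+o(1))$ and hence $\log(1-\lambda^2)=\log(-t)-\tfrac23\log n+o(1)$. Since $\bigl(\arcsin\lambda+\lambda\sqrt{1-\lambda^2}\bigr)'=2\sqrt{1-\lambda^2}$, a short computation gives $\arcsin\lambda+\lambda\sqrt{1-\lambda^2}=\tfrac\pi2-\tfrac23(-t)^{3/2}n^{-1}+O\bigl((-t)^{5/2}n^{-5/3}\bigr)$, so $2in\beta\bigl(\arcsin\lambda+\lambda\sqrt{1-\lambda^2}\bigr)=i\pi\beta n-\tfrac43 i\beta(-t)^{3/2}+o(1)$. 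Feeding these into \eqref{as Hankel noncrit}, the powers of $n$ coming from $(8n)^{-\beta^2}$ and from $(1-\lambda^2)^{-3\beta^2/2}$ cancel exactly, and taking logarithms yields
\begin{equation*}
\log\frac{H_n(\lambda_0,\beta)}{H_n(\lambda_0,0)}=i\pi\beta n-\tfrac43 i\beta(-t)^{3/2}-\tfrac32\beta^2\log(-t)+\log\bigl(G(1+\beta)G(1-\beta)\bigr)-3\beta^2\log2+o(1).
\end{equation*}
On the other hand, the logarithm of \eqref{hankelas} is $\log\frac{H_n(\lambda_0,\beta)}{H_n(\lambda_0,0)}=i\pi\beta n-\int_t^\infty(\tau-t)u(\tau;\kappa)^2\,d\tau+o(1)$. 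Comparing the two, the $i\pi\beta n$ cancel and what is left is precisely \eqref{total integral u}, valid for each $n$ and each $t$ in the overlap window above; since that window exhausts all sufficiently negative $t$ as $n\to\infty$ and the $o(1)$ is uniform over it, the genuine limit $t\to-\infty$ follows, and with it \eqref{det Airy}.

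The step I expect to be the real obstacle — and the reason \eqref{det Airy} is only conjectured — is justifying the uniformity that the matching tacitly uses, since neither input formula has been proved in the overlap regime. What is required is: (i)~a strengthening of Theorem~\ref{theorem hankel} in which the $o(1)$ in \eqref{hankelas} holds uniformly for $t$ in a range growing at least like a fixed small power of $n$ (e.g.\ $t\in[-n^{1/5},\infty)$); in the Riemann--Hilbert proof this amounts to controlling, uniformly as $t\to-\infty$, the discrepancy between the Painlevé~II edge parametrix and the outer parametrix, which worsens as $|t|$ grows; and (ii)~a strengthening of \eqref{as Hankel noncrit}, valid uniformly for $\lambda=1-\delta$ with $\delta\to0$ at least as slowly as $\delta\sim n^{-2/3+\varepsilon}$, obtained by rerunning the steepest-descent analysis of \cite{ItsKras08} as the jump point $\lambda_0$ coalesces with the soft edge $\sqrt{2n}$ of the equilibrium measure. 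These two overlap provided $\varepsilon<1/5$, which is what makes the scheme non-vacuous; everything else is the elementary bookkeeping above.

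As a partial independent check, the $-\tfrac43 i\beta(-t)^{3/2}$ and $-\tfrac32\beta^2\log(-t)$ terms in \eqref{det Airy} can also be recovered directly from the known $t\to-\infty$ behaviour \eqref{u-}--\eqref{theta} of $u(\,\cdot\,;\kappa)$, by substituting $u(\tau;\kappa)^2=\tfrac{2i\beta}{(-\tau)^{1/2}}\sin^2\phi(\tau;\beta)+\cdots$ into the integral, separating the non-oscillatory part, and disposing of the oscillatory part by repeated integration by parts; only the constant $\log\bigl(G(1+\beta)G(1-\beta)\bigr)-3\beta^2\log2$ is genuinely new. An alternative route to that constant, bypassing Hankel determinants altogether, would be to differentiate the left-hand side of \eqref{total integral u} in $\beta$ using the Hamiltonian/tau-function structure of the Painlevé~II Riemann--Hilbert problem, integrate back in $\beta$, and fix the integration constant by a separate ``constant problem'' analysis in the spirit of Deift--Its--Krasovsky; that, too, is a substantial undertaking.
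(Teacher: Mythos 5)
Your matching argument is essentially identical to the paper's own motivation in Section~2.2: both compare the edge regime expansion \eqref{hankelas} against the bulk expansion \eqref{as Hankel noncrit} taken with $\lambda=1+\tfrac t2 n^{-2/3}$ and $-t$ large, observe the cancellation of the $n^{-\beta^2}$ factors, and extract the constant $\log\bigl(G(1+\beta)G(1-\beta)\bigr)-3\beta^2\log2$. You also correctly pinpoint the same missing ingredient the authors flag — uniform control of the error terms in the overlap window as $\lambda\uparrow1$, i.e.\ the strengthened \eqref{as Hankel noncrit improved} — and even echo their remark about an alternative route via $\beta$-differentiation of the tau-function; this is why the statement remains a conjecture in the paper.
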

\begin{rem*}
Similar asymptotics for the Airy kernel determinant in the case $\kappa=1$
were proved in \cite{DeiItsKra08,BaiBucDiF08}: we then have 
\begin{equation}
\log\det\left(I-\restr{K_{\mathrm{Ai}}}{[t,+\infty)}\right)=\frac{t^{3}}{12}-\frac{1}{8}\log\left|t\right|+c_{0}+\mathcal{O}\left(t^{-3}\right),\qquad t\to-\infty,\label{eq: k 1 airy-det asymptotic}
\end{equation}
where $c_{0}=\log2/24+\zeta'(-1)$ and $\zeta$ is the Riemann $\zeta$-function.
As $\kappa\to1$, it was shown recently in \cite{Bothner15} that
\begin{equation}
\log\det\left(I-\kappa^{2}\restr{K_{\mathrm{Ai}}}{[t,+\infty)}\right)=\frac{t^{3}}{12}-\frac{1}{8}\log\left|t\right|+c_{0}+o\left(1\right),\qquad t\to-\infty,
\end{equation}
as long as $\kappa<1$, and $\kappa\to1$ sufficiently rapidly so
that 
\begin{equation}
-\frac{\log(1-\kappa^{2})}{(-t)^{3/2}}>\frac{2\sqrt{2}}{3}.
\end{equation}
The total integrals of different expressions involving the second
Painlevé transcendent were studied in \cite{BaiBuDiIt09}. The integral
\eqref{total integral u} does not belong to the type which can be
handled by the technique of \cite{BaiBuDiIt09}. Indeed, like the
similar integral corresponding to equation \eqref{eq: k 1 airy-det asymptotic},
the integral in \eqref{total integral u} belongs to the third, most
difficult type of total integrals of Painlevé functions as classified
in the end of Section 6 of \cite{BaiBuDiIt09}. This means that the
evaluation of this integral goes beyond the analysis of the Riemann-Hilbert
problem corresponding to the Ablowitz-Segur Painlevé II transcendent.
As we already indicated, the proof of \eqref{total integral u} can
be achieved via an improvement of the error term in \eqref{as Hankel noncrit}.
Another possibility is to use certain differential identities for
the Airy determinant in \eqref{TW} with respect to $\kappa$. We
intend to consider these issues in our next publication.

Additionally, the asymptotics of the PXXXIV transcendent $y\left(t;\beta\right)=u\left(t;\kappa\right)^{2}$
as $t\to-\infty$ can be calculated directly by the same method as
the ones for $t\to+\infty$. We will not present this computation
since it is mostly identical to the one in \cite{ItsKras08}, and
alternatively this asymptotics can be obtained using the connection
formulae for the Painlevé II equation \cite{Kapaev92}. Moreover,
the following singular asymptotics take place when $\Re\beta=1/2$.\end{rem*}
\begin{thm}
\label{thm: u- as}Let $u(t;\kappa)$ be the solution to the Painlevé
II equation \eqref{P2} characterized by \eqref{u+} and let $\kappa^{2}=1-e^{-2i\pi\beta}=1+e^{2\pi\gamma}$,
$\beta=1/2+i\gamma$, $\gamma\in\mathbb{R}$. Then $y\left(t;\beta\right)=u\left(t;\kappa\right)^{2}$
is a solution to the Painlevé XXXIV equation \eqref{P34} and has
the following asymptotics as $t\to-\infty$, away from the zeros of
trigonometric functions appearing in the denominators: 
\begin{gather}
y\left(t;\frac{1}{2}+i\gamma\right)=\frac{-t}{\cos^{2}\widetilde{\phi}}+\frac{1}{\sqrt{-t}}\left(-\gamma+\frac{1}{2}\tg\widetilde{\phi}+\frac{2\gamma}{\cos^{2}\widetilde{\phi}}+\frac{3\left(12\gamma^{2}-1\right)\sin\widetilde{\phi}}{16\cos^{3}\widetilde{\phi}}\right)+\mathcal{O}\left(\frac{1}{t^{2}}\right),\label{eq:1.22}\\
\mbox{where }\widetilde{\phi}(t;\gamma)=\frac{2}{3}\left(-t\right)^{3/2}+\frac{3}{2}\gamma\log(-t)+3\gamma\log2-\arg\Gamma\left(\frac{1}{2}+i\gamma\right).
\end{gather}

\end{thm}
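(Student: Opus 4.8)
The plan is to apply the Deift--Zhou nonlinear steepest descent method to the Riemann--Hilbert problem for the Painlevé II $\Psi$-function associated, via the (Flaschka--Newell) Lax pair, to the solution $u(t;\kappa)$ characterized by \eqref{u+}; its Stokes data, in a suitable normalization, is $s_2=0$, $s_1=-s_3$ with $1-s_1 s_3 = 1-\kappa^2 = e^{-2\pi i\beta}$. For $t\to-\infty$ one rescales $\zeta = (-t/4)^{1/2}z$, so that $e^{\pm i\theta}$ with $\theta(\zeta;t)=\tfrac43\zeta^3+t\zeta$ becomes $e^{\pm i(-t)^{3/2}\vartheta(z)}$, $\vartheta(z)=\tfrac16 z^3-\tfrac12 z$, with two simple turning points at $z=\pm1$ and $\vartheta(\pm1)=\mp\tfrac13$ (this is the source of the $\tfrac23(-t)^{3/2}$ in $\widetilde{\phi}$). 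One then performs a $g$-function transformation to normalize the behaviour at $\infty$, opens lenses along the band $[-1,1]$, and is left with a global parametrix whose jump across $[-1,1]$ has constant part governed by $1-\kappa^2$, together with local parametrices at $z=\pm1$ built from parabolic-cylinder functions with parameter $\beta$. The residual error solves a small-norm problem with jumps exponentially close to $I$ away from $z=\pm1$ and of size $O((-t)^{-1/2})$ near $z=\pm1$; one extracts $y(t;\beta)=u(t;\kappa)^2$ from the $1/\zeta$ and $1/\zeta^2$ coefficients of the resulting $\Psi$, exactly as in the $t\to+\infty$ analysis of this paper and of \cite{ItsKras08}.

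\textbf{The critical feature.} The new phenomenon, responsible for the poles and for the $1/\cos^2\widetilde{\phi}$ factor, is that for $\Re\beta=1/2$ we have $1-\kappa^2=e^{-2\pi i\beta}=-e^{2\pi\gamma}<0$, so the scalar (Szegő-type) function solving the band jump has a jump by a negative number and the relevant endpoint exponent is $-\beta=-\tfrac12-i\gamma$, with \emph{half-integer} real part. At such an exponent the standard matching between the outer parametrix and the parabolic-cylinder parametrices at $z=+1$ and $z=-1$ degenerates: the two endpoints no longer decouple, and one must instead solve a coupled $2\times2$ model problem that glues the two local parametrices together. Its solvability reduces to the invertibility of a matrix whose determinant is, up to nonvanishing factors, proportional to $\cos\widetilde{\phi}(t;\gamma)$, the constant $-\arg\Gamma(\tfrac12+i\gamma)$ inside $\widetilde{\phi}$ arising from the phases of the connection coefficients of the parabolic-cylinder functions at the two turning points. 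Away from the zeros of $\cos\widetilde{\phi}$ this model problem is solvable, the small-norm argument applies, and reading off $y$ yields \eqref{eq:1.22}, with the subleading $(-t)^{-1/2}$ terms (the $\tg\widetilde{\phi}$, $1/\cos^2\widetilde{\phi}$ and $\sin\widetilde{\phi}/\cos^3\widetilde{\phi}$ contributions) produced by carrying the parabolic-cylinder parametrices to second order; at the zeros of $\cos\widetilde{\phi}$ the construction breaks down --- these are precisely the real poles of $y=u^2$, in agreement with \cite{Bertola12} and complementing Theorem \ref{theorem poles}.

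\textbf{Main obstacle and an alternative.} The main obstacle is exactly this coupled critical-endpoint analysis: setting up and solving the glued model problem uniformly in $t$ on compact subsets away from the poles, controlling the error problem there, and expanding the parabolic-cylinder parametrices far enough to recover the full $(-t)^{-1/2}$ coefficient; everything else is the routine bookkeeping of steepest descent, essentially as in \cite{ItsKras08}. As a shortcut and an independent consistency check, one can instead invoke Kapaev's connection formulae for Painlevé II \cite{Kapaev92}: with the Stokes data above and $\kappa^2=1+e^{2\pi\gamma}>1$ one is in the singular oscillatory branch of those formulae, and squaring the resulting asymptotics for $u(t;\kappa)$ reproduces \eqref{eq:1.22} after translating phase conventions into $\widetilde{\phi}$.
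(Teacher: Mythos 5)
The paper does not actually present a proof of this theorem: after stating it, it remarks that ``this computation is based on an undressing procedure adapted from \cite{BothnIts12}'' and explicitly declines to give the derivation, noting also that the $\mathcal{O}((-t)^{-1/2})$ correction is new while the leading term was already obtainable from \cite{Kapaev92} and \cite{BothnIts12}. Your proposal therefore cannot be matched against an in-paper argument; it must be judged on its own feasibility, and there are two genuine issues.

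First, the mechanism you propose for the $\Re\beta=1/2$ degeneration --- a \emph{coupled} $2\times2$ model problem that ``glues the two local parametrices together'' at $z=\pm1$, with solvability governed by the determinant $\propto\cos\widetilde{\phi}$ --- is not the mechanism the authors have in mind, and it is not spelled out to a level where one could check it works. In the undressing/Schlesinger approach of \cite{BothnIts12}, the two turning points still decouple; what changes is that the global parametrix is dressed by an explicit rational (in the spectral variable) factor that shifts the local monodromy exponent from $\Re\nu=1/2$ into the regular range $|\Re\nu|<1/2$ at the cost of introducing a $t$-dependent pole, and it is the (scalar) determinant of that dressing factor --- not a cross-endpoint coupling matrix --- that produces the $\cos^{-2}\widetilde{\phi}$ singularity and accounts for the jump from $y\sim(-t)^{-1/2}$ in the generic case to $y\sim(-t)$ here. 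Your picture of ``gluing'' the parabolic-cylinder parametrices across the whole band is a different and, as stated, unproven construction: the standard small-norm apparatus requires the local parametrices to match the outer one on disjoint disks, and you have not explained how the coupled model problem is posed, why it is solvable away from zeros of $\cos\widetilde{\phi}$, or how it feeds into the error problem. As written, this is the central gap.

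Second, your proposed shortcut --- ``invoke Kapaev's connection formulae \cite{Kapaev92} and square the resulting asymptotics for $u$'' --- only reproduces the leading term $(-t)/\cos^2\widetilde{\phi}$. The paper is explicit that the $(-t)^{-1/2}$ correction in \eqref{eq:1.22} is a new result not contained in \cite{Kapaev92} or \cite{BothnIts12}, so this cannot serve even as an ``independent consistency check'' for the full statement, only for its leading order. Recovering the subleading coefficient requires carrying the dressing transformation and the local parametrix expansions to second order, precisely the part you flag as the ``main obstacle'' but do not carry out.
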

Asymptotics of this type in relation to the second Painlevé equation
have been previously obtained via different methods in \cite{Kapaev92}
and in \cite{BothnIts12}, but the second term is a new result of
the present work. This computation is based on an undressing procedure
adapted from \cite{BothnIts12}. Thus we will not present the derivation
of \eqref{eq:1.22} either.

For the recurrence coefficients $R_{n}$ and $Q_{n}$, we have the
following result, which was partially obtained before in \cite{XuZhao11},
see Remark \ref{remark XZ} below. 
\begin{thm}
\label{theorem rec} Let $R_{n}$ and $Q_{n}$ be the recurrence coefficients
defined in (\ref{recur}), associated to the orthogonal polynomials
with respect to the weight (\ref{weight}). Let $|\Re\beta|<1/2$
and let $\lambda_{0}$ be given by \eqref{scaling lambda0}. Then,
as $n\to\infty$, the recurrence coefficients have the following expansions,
\begin{equation}
R_{n}(\lambda_{0},\beta)=\frac{n}{2}-\frac{1}{2}u(t;\kappa)^{2}n^{1/3}+\mathcal{O}(1),\label{Rnas}
\end{equation}
and 
\begin{equation}
Q_{n}(\lambda_{0},\beta)=-\frac{1}{\sqrt{2}}u(t;\kappa)^{2}n^{-1/6}+\mathcal{O}\left(n^{-1/2}\right),\label{Qnas}
\end{equation}
uniformly for $t\in[-M,\infty]$ for any $M>0$, where $\kappa$ is
given by \eqref{def beta}. Additionally, we have the asymptotics
of the normalizing coefficients $h_{n}$:
\begin{equation}
h_{n}=\frac{\pi\sqrt{2n}n^{n}}{2^{n}e^{n}}e^{i\pi\beta}\left(1+n^{-1/3}v\left(t;\kappa\right)\mathrm{d}t+n^{-2/3}\frac{1}{2}\left(v\left(t;\kappa\right)^{2}-u\left(t;\kappa\right)^{2}\right)+\mathcal{O}\left(n^{-1}\right)\right),\label{eq: h-n asymptotics}
\end{equation}
where 
\[
v\left(t;\kappa\right)=\int_{t}^{\infty}u\left(\tau;\kappa\right)^{2}\mathrm{d}\tau.
\]
\end{thm}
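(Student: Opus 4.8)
The plan is to extract the three quantities from the Riemann-Hilbert (RH) steepest-descent analysis that also underlies Theorem~\ref{theorem hankel}. One starts from the Fokas-Its-Kitaev RH problem $Y$ for the monic orthogonal polynomials $p_n$, normalised so that $Y(z)=\bigl(I+Y_1z^{-1}+Y_2z^{-2}+\bigO(z^{-3})\bigr)z^{n\sigma_3}$ as $z\to\infty$. The recurrence coefficients and the norms are read off from the first two correction matrices through the classical identities $R_n=(Y_1)_{12}(Y_1)_{21}$, $h_n=-2\pi i\,(Y_1)_{12}$, and a similar formula for $Q_n$ in terms of $(Y_2)_{12}$ and $(Y_1)_{12}$; thus the whole problem reduces to determining $Y_1$ and $Y_2$ to sufficient accuracy.

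I would then run the standard chain $Y\to T\to S\to R$ for a Gaussian-type weight with an edge singularity: rescale $z\mapsto\sqrt{2n}\,z$ so that the support of the equilibrium measure becomes $[-1,1]$; normalise at infinity with the semicircle $g$-function and its Lagrange constant $\ell$; open lenses around $[-1,1]$; and match the resulting $S$ against a global parametrix $N$ built from the semicircle RH problem together with a local parametrix $P$ on a fixed disk around the edge $z=1$. Since the jump $e^{\pm2\pi i\beta}$ of the weight sits at $\lambda_0=\lambda\sqrt{2n}$ with $\lambda=1+\tfrac{t}{2}n^{-2/3}$, i.e.\ within an $n^{-2/3}$-window of the edge, the usual Airy parametrix coalesces with the confluent-hypergeometric parametrix attached to the jump, and the merged edge parametrix must be built from the model RH problem $\Psi$ for the Painlevé XXXIV equation \eqref{P34} (equivalently, the Ablowitz-Segur Painlevé II problem), in the local variable $\zeta=n^{2/3}f(z)$ for the natural conformal map $f$; the jump then sits at the fixed $\zeta$-point determined by $t$, and the large-$\zeta$ expansion of $\Psi$ proceeds in half-integer powers of $\zeta$. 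This construction, and the matching estimate $PN^{-1}=I+\bigO(n^{-1/3})$ on the boundary circle uniformly for $t\in[-M,\infty)$, is exactly the content of the RH analysis carried out for Theorem~\ref{theorem hankel}, and it is there that the identification $\kappa^{2}=1-e^{-2\pi i\beta}$ between the Fisher-Hartwig data and the Painlevé monodromy is forced. Setting $R=SN^{-1}$ off the disk and $R=SP^{-1}$ inside, the final RH problem for $R$ has jumps $I+\bigO(n^{-1/3})$, so $R(z)=I+n^{-1/3}R^{(1)}(z)+n^{-2/3}R^{(2)}(z)+\bigO(n^{-1})$ uniformly for $z$ off the contours and for $t\in[-M,\infty)$, where $R^{(1)}$ and $R^{(2)}$ are produced by residue calculus from the subleading coefficients in the large-$\zeta$ expansion of $\Psi$; those coefficients carry $u(t;\kappa)^{2}=y(t;\beta)$ and its antiderivative $v(t;\kappa)=\int_t^\infty u(\tau;\kappa)^2\,\mathrm{d}\tau$.

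With $R$ in hand the conclusion is bookkeeping. Off the disk one has $Y(\sqrt{2n}\,z)=(2n)^{\tfrac n2\sigma_3}R(z)N(z)e^{n g(z)\sigma_3}e^{-\tfrac n2\ell\sigma_3}$ (up to the precise placement of the constant factors); expanding every factor as $z\to\infty$ and inserting the expansions into the identities of the first paragraph, the $g$-function and the normalising constants produce the leading term $n/2$ in \eqref{Rnas} and the pure-Gaussian prefactor $\pi\sqrt{2n}\,n^{n}2^{-n}e^{-n}$ together with the phase $e^{i\pi\beta}$ in \eqref{eq: h-n asymptotics}, while the Painlevé corrections enter through $R^{(1)}$ and $R^{(2)}$, giving the $n^{1/3}$ term $-\tfrac12 u(t;\kappa)^2$ in \eqref{Rnas}, the $n^{-1/6}$ term $-\tfrac{1}{\sqrt2}u(t;\kappa)^2$ in \eqref{Qnas} (the extra half-power of $n$ relative to the $R_n$ correction reflecting the different rescaling of $Q_n$), and the $v$- and $(v^2-u^2)$-terms in \eqref{eq: h-n asymptotics} via the identity $v'=-u^2$. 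The $\bigO(1)$, resp.\ $\bigO(n^{-1/2})$, errors in \eqref{Rnas}--\eqref{Qnas} require controlling $R$ down to order $n^{-2/3}$, which is why $R^{(2)}$ is needed. As a consistency check one may also recover the $R_n$ and $h_n$ asymptotics from Theorem~\ref{theorem hankel} via $h_n=H_{n+1}/H_n$ and $R_n=h_n/h_{n-1}$, provided one keeps track of how the scaling parameter $t$ drifts under $n\mapsto n\pm1$; the coefficient $Q_n$, however, genuinely requires the subleading term $Y_2$ of the direct RH analysis.

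The hard part does not lie in the present theorem but in the ingredient it imports from Theorem~\ref{theorem hankel}: constructing the coalesced edge parametrix $P$ out of the Painlevé XXXIV / II Lax pair and proving $PN^{-1}=I+\bigO(n^{-1/3})$ uniformly for $t\in[-M,\infty)$, including the regime $t\to+\infty$, in which $\Psi$ has to degenerate smoothly into the Airy parametrix and $u(t;\kappa)\to0$, so that the results interpolate continuously with the pure-Gaussian asymptotics \eqref{as Hankel noncrit}. Granting that analysis, the derivation of \eqref{Rnas}--\eqref{eq: h-n asymptotics} is routine; the only genuinely delicate point is to retain enough terms in the large-$z$ expansions of $N$ and of $R$ so that the stated error orders $\bigO(1)$, $\bigO(n^{-1/2})$ and $\bigO(n^{-1})$ are actually attained rather than swamped in the algebra.
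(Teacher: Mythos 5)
Your outline matches the paper's actual proof for $R_n$ and $h_n$: both run the chain $Y\mapsto T\mapsto S\mapsto R$, expand the jump of $R$ on $\partial U^1$ as $G_R=I-G_1n^{-1/3}+G_2n^{-2/3}+\mathcal O(n^{-1})$ with coefficients carrying $m_{21}^\Phi$ and $m_{11}^\Phi$ from the large-$\zeta$ expansion of the Painlev\'e XXXIV model problem, extract the $1/z$-coefficient $m^R$ of $R$ by residue calculus, and then read off $R_n=(Y_1)_{12}(Y_1)_{21}$ and $h_n=-2\pi i\,(Y_1)_{12}$ after unfolding the normalisations. (The relation $y=-i(m_{21}^\Phi)'=u^2$ and the identity $m_{11}^\Phi=\tfrac12(m_{21}^\Phi)^2-\tfrac i2(m_{21}^\Phi)'$ then put the result in the advertised Painlev\'e form.)

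Where you genuinely diverge from the paper is $Q_n$. You propose the standard Fokas--Its--Kitaev identity expressing $Q_n$ through the $1/z^2$-coefficient $(Y_2)_{12}$ together with $(Y_1)_{12}$, which would require pushing the large-$z$ expansions of $R$, $P^{(\infty)}$ and $g$ one order further and computing $m^{R;2}$ by a second iteration of the residue calculus. The paper instead sidesteps $Y_2$ entirely: multiplying the recurrence $xp_n=p_{n+1}+Q_np_n+R_np_{n-1}$ by $p_n(x)w(x)$, integrating over $\mathbb R$, and integrating by parts using $d(e^{-x^2})=-2xe^{-x^2}dx$, the only contribution that survives (orthogonality kills the bulk integrals, and the boundary terms at $\pm\infty$ vanish) is the one produced by the jump of $w$ at $\lambda_0$, giving the \emph{exact} identity $Q_n=-h_n^{-1}p_n(\lambda_0)^2\,e^{-\lambda_0^2}\sinh(i\pi\beta)$. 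Since $p_n(\lambda_0)$ is exactly what the Plancherel--Rotach Theorem~\ref{theorem pn} evaluates—and that theorem is wanted in its own right—the $Q_n$ asymptotics become a one-line consequence of $h_n$ and $p_n(\lambda_0)$, with no need for $Y_2$ and no second order of small-norm estimates. Both routes lead to \eqref{Qnas}, but the paper's is sharper: it exploits the specific structure of a jump-type weight and yields Theorems~\ref{theorem rec} and~\ref{theorem pn} in one pass, whereas your variant costs an extra order in every expansion. Your remark that the $h_n$ formula can be cross-checked against $h_n=H_{n+1}/H_n$ and Theorem~\ref{theorem hankel} also appears in the paper.

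One smaller point: what you call "constructing the coalesced edge parametrix and proving $PN^{-1}=I+\mathcal O(n^{-1/3})$" is indeed a prerequisite, but this part was already done by Xu--Zhao. The step that is actually new here, and on which all three asymptotics \eqref{Rnas}--\eqref{eq: h-n asymptotics} hinge for their interpretation, is the identification of the Painlev\'e XXXIV function $y=-i(m_{21}^\Phi)'$ appearing in the parametrix with $u(t;\kappa)^2$ for the Ablowitz--Segur solution $u$ characterised by \eqref{u+}; that is the content of Section~\ref{sec: as u+} and is independent of Theorem~\ref{theorem hankel}, so you should flag it explicitly as the nontrivial imported ingredient rather than attributing it to the Hankel-determinant theorem.
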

\begin{rem}
The formal substitution, 
\[
t=-2(1-\lambda)n^{2/3}
\]
in the asymptotics for the recurrence coefficients transforms them,
with the help of the asymptotic expansion (\ref{u-}), into the non-critical
asymptotics obtained in \cite{ItsKras08}. This important fact indicates,
at least on the formal level, that the description of the transition
regime in the large $n$ behavior of the recurrence coefficients is
complete. 
\end{rem}

\begin{rem}
\label{remark XZ} The general form of \eqref{Rnas} and \eqref{Qnas}
was formally suggested in \cite{Its11} (together with the asymptotic
characterization of the Painlevé II function $u(t;\kappa)$) and it
was proved by Xu and Zhao in \cite{XuZhao11}. They obtained their
asymptotic expansions in terms of $\widehat{u}(t)=2^{1/3}u(2^{-1/3}\tau)^{2}$.
It was noted that this is a solution of a Painlevé XXXIV equation,
but no asymptotics for $\widehat{u}(t)$ as $t\to\pm\infty$ were
obtained, and thus the authors of \cite{XuZhao11} did not identify
$\widehat{u}$ in terms of the Ablowitz-Segur solution characterized
by \eqref{u+} or \eqref{u-}. In fact, assuming the matching of the
estimates (\ref{Rnas}) and (\ref{Qnas}) with the non-critical formulae
of \cite{ItsKras08}, asymptotics for $\widehat{u}(t)$ as $t\to-\infty$
were deduced heuristically. There is, however, no independent derivation
of it which is needed for the rigorous completion of the analysis
of the transition regime in question. The $+\infty$ - characterization
of the Painlevé transcendent $\widehat{u}(t)$, even heuristically,
is not given in \cite{XuZhao11}. 
\end{rem}
As an additional result, we also obtain an analog of the Plancherel-Rotach
asymptotics for classical Hermite polynomials \cite{Planchere29}. 
\begin{thm}
\label{theorem pn}Let $p_{n}(x)$ be the degree $n$ monic orthogonal
polynomial with respect to the weight (\ref{weight}), and let $\lambda_{0}$
be given by \eqref{scaling lambda0}. Let $\left|\Re\beta\right|<1/2$.
Then, as $n\rightarrow\infty$, 
\begin{equation}
p_{n}\left(\lambda_{0}\right)=\frac{\sqrt{2\pi}}{\kappa}\left(\frac{ne}{2}\right)^{n/2}n^{1/6}e^{tn^{1/3}}u\left(t;\kappa\right)\left(1+\mathcal{O}\left(n^{-1/3}\right)\right),\label{PR as}
\end{equation}
with $\kappa$ given by \eqref{def beta}. \end{thm}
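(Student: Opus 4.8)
The plan is to extract \eqref{PR as} from the Riemann-Hilbert (RH) analysis that underlies Theorems~\ref{theorem hankel} and~\ref{theorem rec}, by evaluating the $(1,1)$ entry of the RH solution for $p_n$ at the single point $z=\lambda_0$. Recall that $p_n(z)=Y_{11}(z)$, where $Y$ solves the standard $2\times2$ RH problem for the weight \eqref{weight}: $Y$ is analytic off $\mathbb{R}$, $Y_+(x)=Y_-(x)\left(\begin{smallmatrix}1 & w(x)\\ 0 & 1\end{smallmatrix}\right)$ on $\mathbb{R}$, and $Y(z)z^{-n\sigma_3}\to I$ as $z\to\infty$. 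Since $p_n$ is entire, $Y_{11}$ continues analytically across $\mathbb{R}$, so $p_n(\lambda_0)$ is a one-sided boundary value of $Y_{11}$ at $\lambda_0$. First I would carry out the usual steps: rescale $z\mapsto\sqrt{2n}\,z$, normalize with the $g$-function of the semicircle equilibrium measure ($Y\mapsto T$), and open lenses ($T\mapsto S$). By \eqref{scaling lambda0} the point $\lambda_0$ sits at distance $\frac{t}{\sqrt2}n^{-1/6}$ from the right soft edge $\sqrt{2n}$, i.e.\ exactly on the Airy scale; in the edge coordinate it sits at the Painlevé variable $t$ itself (up to the explicit rescaling of Remark~\ref{remark XZ}). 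Hence $\lambda_0$ lies inside the disk where the edge parametrix $P$ is used, where $S=R\,P$ with $R=I+\bigO(n^{-1/3})$ uniformly for $t\in[-M,\infty]$ and $\beta$ in compacts --- the same small-norm estimate that drives the other theorems.

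The crucial point is that in the present regime $P$ is not the Airy parametrix but the model RH problem built from the Painlevé II/XXXIV functions, because the jump of $w$ at $\lambda_0$ collides with the edge: in the edge coordinate the turning point sits at $0$ and the Fisher-Hartwig discontinuity at $t$, and the model solution $\Phi(\,\cdot\,;t)$ is precisely the one whose compatibility equations produce \eqref{P2} and in whose construction $u(t;\kappa)$ occurs as a distinguished matrix entry / expansion coefficient. One then obtains $p_n(\lambda_0)$ by unravelling $Y=e^{\frac{n\ell}{2}\sigma_3}\,T\,e^{n(g-\ell/2)\sigma_3}$, $T\approx R\,P$, $P=E\,\Phi\,W$ (with $E$ the analytic prefactor matching $\Phi$ to the global parametrix $N$ and $W$ the triangular lens factors), all evaluated at $z=\lambda_0$, and collecting the scalar contributions: (i) $N$ together with the leading behaviour of $g$ at infinity gives $(ne/2)^{n/2}$; (ii) the matching factor $E$ contributes the power $n^{1/6}$ (from the conformal map to the Airy variable) and, with the constant prefactors of $N$ and $\Phi$, the number $\sqrt{2\pi}$; (iii) the linear part of $ng$ at the edge produces $e^{tn^{1/3}}$, the branch-point part of $g$ near the edge being absorbed into the $e^{(\frac23\zeta^{3/2}+t\zeta^{1/2})\sigma_3}$-behaviour of $\Phi$; (iv) $R(\lambda_0)=I+\bigO(n^{-1/3})$ gives the error; and (v) the relevant boundary value of $\Phi$ at its Fisher-Hartwig point equals, by the explicit model construction, a constant times $u(t;\kappa)$, the constant carrying the factor $\kappa^{-1}=\bigl(1-e^{-2\pi i\beta}\bigr)^{-1/2}$ inherited from the connection data \eqref{theta}.

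The main obstacle is step (v) together with the precise constant in (ii): one must show that, after all the matrix factors are multiplied out, the coefficient of $u(t;\kappa)$ in \eqref{PR as} is exactly $\sqrt{2\pi}/\kappa$ and nothing else $\beta$-dependent. This requires careful bookkeeping of the normalization of $E$ at the edge, of the branch choices of $\zeta^{\pm\beta}$ near the Fisher-Hartwig point, and of the precise slot in which $u(t;\kappa)$ sits inside $\Phi$ --- the last being exactly the identification already established in the proofs of Theorems~\ref{theorem hankel} and~\ref{theorem rec}, so in practice this is bookkeeping rather than new analysis. Two checks pin the constant down: as $t\to+\infty$, \eqref{u+} gives $u(t;\kappa)/\kappa\to\Ai(t)$, so \eqref{PR as} must collapse to the classical Plancherel-Rotach asymptotics for monic Hermite polynomials at the soft edge; and the formal substitution $t=-2(1-\lambda)n^{2/3}$ together with \eqref{u-} must reproduce the non-critical $|\lambda|<1$ asymptotics of $p_n(\lambda_0)$ from \cite{ItsKras08}.
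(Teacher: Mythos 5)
Your proposal follows essentially the same route as the paper's proof: write $p_n(\lambda_0)=Y_{11}(\lambda_0)$, unwind $Y\mapsto T\mapsto S\mapsto R$, use $S=R\,P^{(1)}$ inside $U^1$, and extract the $(2,1)$ entry $c(\tau)$ of the constant matrix in the $\xi\to0$ expansion of $\Psi_0$, which the paper ties to $u(t;\kappa)$ through the Lax-pair relation \eqref{eq: connection c^2 to m_21} and \eqref{eq: def of y(tau)} rather than through the connection data \eqref{theta} as you suggest. One small clarification worth making: since the Lax pair only gives $(ic)^2=2\pi u^2/\kappa^2$, the sign must be fixed, which the paper does by computing the $\tau\to+\infty$ asymptotics \eqref{c as} of $c$ via the $\Psi_0\mapsto A\mapsto B\mapsto C\mapsto D$ chain from Section \ref{sec: as u+} — your Plancherel-Rotach consistency check is implicitly playing that role.
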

\begin{rem}
Using the asymptotic behavior \eqref{u+} for $u$ as $t\to+\infty$,
\eqref{PR as} matches formally with the classical Plancherel-Rotach
asymptotics for the Hermite polynomials \cite{Planchere29}: 
\begin{equation}
p_{n}\left(\lambda_{0}\right)=\sqrt{2\pi}\left(\frac{ne}{2}\right)^{n/2}n^{1/6}e^{tn^{1/3}}\Ai\left(t\right)\left(1+\mathcal{O}\left(n^{-1/3}\right)\right),\qquad n\to\infty,\label{PR Hermite}
\end{equation}
where $p_{n}$ are the monic Hermite polynomials.

On the other hand, if we let $\beta\to0$, or equivalently $\kappa\to0$,
we have (see equations \eqref{sign} and \eqref{c Airy} below) that
\begin{equation}
u(t;\kappa)=0,\qquad\lim_{\kappa\to0}\frac{1}{\kappa}u(t;\kappa)=\Ai(t),\label{u kappa0}
\end{equation}
and this allows us to recover \eqref{PR Hermite} also in this limit. 
\end{rem}

\begin{rem}
Consider the case of purely imaginary $\beta$, i.e. $\beta=i\gamma$,
$\gamma\in\mathbb{R}$. Then the three-term relation \eqref{recur}
generates in the usual way (see e.g. \cite{Deift99}) a symmetric
on $l_{2}$ Jacobi operator, $L^{0}$, defined by the semi-infinite
matrix
\[
L_{n,m}^{0}=R_{n+1}^{1/2}\delta_{n+1,m}+Q_{n}\delta_{n,m}+R_{n}^{1/2}\delta_{n-1,m},\quad n,m\geq0,\;R_{0}=0
\]
whose domain is $D=\left\{ p=\left(p_{0},p_{1},\ldots\right)^{T}\in l_{2}:\;p_{k}=0\mbox{ for sufficiently large }k\right\} $.
Since the moment problem for the measure 
\[
\mathrm{d}\mu\left(x\right)=w\left(x\right)\mathrm{d}x
\]
with the density $w(x)$ given by \eqref{weight} is determinate,
the operator $L^{0}$ is essentially self-adjoint and $\mathrm{d}\mu\left(x\right)$
is the spectral measure of its closure $L\equiv\bar{L^{0}}$. Therefore,
the results of our last two theorems provide an insight into the properties
of semi-infinite Jacobi matrices, i.e. discrete Schrödinger operators
on a half-line, whose spectral densities have discontinuities. In
the earlier works \cite{Johansson98,ChenPrue05,ItsKras08} it was
demonstrated that the discontinuities in the spectral density are
responsible for the oscillatory pattern in the large $n$ asymptotics
of the entries of the Jacobi matrix (in the coordinate asymptotics
of the potentials of the discrete Schrödinger operator). More precisely,
the oscillations occur when the point of the jump of the density is
inside the support of the corresponding equilibrium measure. If the
jump is outside, the behavior of the potentials $R_{n}$ and $Q_{n}$
is monotone. Formulae \eqref{Rnas}, \eqref{Qnas} and \eqref{PR as}
describe the corresponding transition regime. The formulae show that
if the jump happens near the edge of the support then the large n
(coordinate) asymptotics is governed by the Ablowitz-Segur solution
of the second Painlevé equation and the parameters of the solution
are explicitly related to the size of the jump. We actually believe
that this fact is universal, i.e. the transitional formulae will be
the same even if the Gaussian background in the spectral measure is
replaced by an arbitrary exponential weight.
\end{rem}
\medskip{}

Our proofs of Theorem \ref{theorem rec} and Theorem \ref{theorem pn}
are based on the nonlinear steepest descent method of Deift and Zhou
(or, rather on its adaptation \cite{DeKrMcVeZ99a} to the Riemann-Hilbert
(RH) problems related to the orthogonal polynomials \cite{FokItsKit92}).
This method was applied in \cite{XuZhao11} to the case of a discontinuous
Gaussian weight with the point of discontinuity scaled as in \eqref{scaling lambda0}.
We will rely on the transformations and results from this paper, but
we will adapt them in such a way that we can identify the function
$u(t;\kappa)$ as the Painlevé II solution with asymptotics \eqref{u+}
and \eqref{u-}. The RH analysis is presented in Section \ref{section: RH OP},
and the proofs of Theorem \ref{theorem rec} and Theorem \ref{theorem pn}
are given in Section \ref{section: asymptotics recur}.

Theorem \ref{theorem hankel} can be proved in two different ways.
The first one is very short and relies on the Tracy-Widom formula
\eqref{TW} and on known asymptotic results in the Gaussian Unitary
Ensemble. This proof will be given in Section \ref{section: proof hankel TW}.
The second proof, given in Section \ref{section: Hankel}, is lengthy
but has the advantage of being self-contained. It relies on the RH
analysis which we need anyways for the asymptotics of the orthogonal
polynomials and their recurrence coefficients. As is always the case
in the asymptotic analysis of Hankel and Toeplitz determinants, the
move from the asymptotics for the orthogonal polynomials and its recurrence
coefficients to the asymptotics for the Hankel determinants is nontrivial.
One has to address the ``constant of integration problem'' (c.f.
\cite{DeiItsKra11}) which we do with the help of relevant differential
identities for the Hankel determinant $H_{n}(\lambda_{0},\beta)$.

In the RH analysis, we will identify the function $u(t;\kappa)$ as
the solution to the Painlevé II equation with asymptotics (\ref{u+})--(\ref{u-})
using Lax pair arguments and an asymptotic analysis for a certain
model RH problem (see Section \ref{sec: as u+}), which is equivalent
to the one which appeared in \cite{XuZhao11}. Solvability of this
model RH problem was proved in \cite{XuZhao11}, and we prove Theorem
\ref{theorem poles} as a consequence of this in Section \ref{section: poles}.

The analysis in this paper shows similarities with the work \cite{ItsKuiOes09}
where a Painlevé XXXIV function appeared in a parametrix for a different
type of critical edge behavior in unitary random matrix ensembles,
namely with a root singularity instead of the jump singularity which
we consider here. The RH problem which we study differs, however,
from the one analyzed in \cite{ItsKuiOes09}. This yields, in particular,
serious technical differences in the analysis of the large positive
$t$ behavior of the Painlevé transcendent. %we are dealing with, comparing with the one appearing in \cite{Its2009}.%At the same time, the large negative $t$ analysis is quite similar%to \cite{Its2009}.
\begin{rem}
As it has already been indicated, it is Painlevé XXXIV equation \eqref{P34}
and the corresponding model RH problem that appear naturally during
the asymptotic analysis of the orthogonal polynomials $p_{n}(x)$.
The solution $y(t;\beta)$ which emerges in this analysis is characterized
by its RH data. We need to transform this characterization into the
asymptotic behavior of $y(t;\beta)$ as $t\to\pm\infty$. Because
of the relation $y=u^{2}$ between the solutions of Painlevé XXXIV
equation \eqref{P34} and the solutions of Painlevé II equation \eqref{P2},
one could think that the needed asymptotics could be extracted from
the work of A. Kapaev \cite{Kapaev92}, where the complete list of
the global asymptotics of the second Painlevé transcendent is presented.
However, to be able to use the results of \cite{Kapaev92} one needs
to connect the RH data of $y(t)$ with the RH data of $u(t)$. A well-known
though still striking fact (see e.g. Chapter 5 of \cite{FokItKaNo06})
is that there is no simple relation between the Lax pair and the RH
problem for the Painlevé XXXIV equation \eqref{P34} and the standard
Flaschka-Newell Lax pair (which is used in \cite{Kapaev92}) and the
RH problem for the Painlevé II equation \eqref{P2}. Hence one does
not know a priori the asymptotics of $u(t)$. There exists, however,
a simple relation between the Lax pair and the RH problem for the
Painlevé XXXIV equation \eqref{P34} and the Lax pair and the RH problem
for the nonuniform second Painlevé equation 
\begin{equation}
q_{tt}=tq+2q^{3}-\frac{1}{2},\label{eq: p12}
\end{equation}
so that one can use \cite{Kapaev92} and determine the asymptotics
of $q(t)$. Unfortunately, now the problem with translation of the
asymptotics of $q(t)$ into the asymptotics of $y(t)$ arises. The
fact of the matter is that the relation between the Painlevé functions
$y(t)$ and $q(t)$ is more complicated than the relation between
the Painlevé functions $y(t)$ and $u(t)$. Indeed, one has that 
\begin{equation}
y(t)=2^{-1/3}U\Bigl(-2^{1/3}t\Bigr),\quad U(t)=q^{2}(t)+q'(t)+\frac{t}{2}\label{eq: qy}
\end{equation}
(see e.g. \cite[Appendix A]{ItsKuiOes09}). This formula virtually
destroys the asymptotic information which one could obtain for the
function $q(t)$ from \cite{Kapaev92}. For instance, one finds from
\cite{Kapaev92} that the function $q(t)$ behaves as $\sim\sqrt{-t/2}$
as $t\to-\infty$. This, as we know \textit{ a posteriori,} must translate
to the exponentially decaying asymptotics of $y(t)$ as $t\to+\infty$.
It is extremely difficult to verify this directly using \eqref{eq: qy}:
one has to prove cancellation of an asymptotic series in all orders
of magnitude. Even worse is the situation with the asymptotics of
$q(t)$ as $t\to+\infty$. It is singular (and is described in terms
of the cotangent function) and, after substitution into \eqref{eq: qy}
it should transform into an oscillatory smooth decaying asymptotics.
We refer the reader to \cite{ItsKuiOes09}, where a similar phenomenon
had already been encountered, for more details. The above discussion
makes it clear that, in spite of the simple relation to the second
Painlevé function $u(t)$, an independent asymptotic analysis of the
Painlevé XXXIV function $y(t)$ is necessary. Of course, it is enough
to evaluate the asymptotics of $y(t;\beta)$ either for $t\to+\infty$
or for $t\to-\infty$, since the one-end asymptotics will enable us
to identify the function $u(x)$ and use \cite{Kapaev92} to determine
its asymptotics on the another end. We have chosen to evaluate the
asymptotics of $y(t;\beta)$ as $t\to+\infty$. The relevant nonlinear
steepest descent analysis is presented in Section \ref{sec: as u+}.
This analysis has some new technical features which are specifically
indicated at the beginning of Section \ref{sec: as u+}.
\end{rem}

\subsection{Applications}

We conclude this introduction by indicating some applications of our
results.

\subsubsection{Random matrix moment generating function}

Consider the $n$-dimensional Gaussian Unitary Ensemble (GUE) normalized
such that the joint eigenvalue probability distribution is given by
\begin{equation}
\frac{1}{Z_{n}}\prod_{1\leq i<j\leq n}(x_{i}-x_{j})^{2}\prod_{j=1}^{n}e^{-x_{j}^{2}}\mathrm{d}x_{j},\qquad x_{1},\ldots,x_{n}\in\mathbb{R}.
\end{equation}
The partition function $Z_{n}$ is then equal to $n!H_{n}\left(\lambda_{0},0\right)$,
with $H_{n}\left(\lambda_{0},0\right)$ given in \eqref{HnGUE}. For
an $n\times n$ GUE matrix, define the random variable $X_{\lambda_{0},n}$
as 
\begin{equation}
X_{\lambda_{0},n}=\mbox{number of eigenvalues greater than \ensuremath{\lambda_{0}}}.
\end{equation}
It is natural to ask how the average of $X_{\lambda_{0},n}$ or its
variance behaves for large $n$. The Hankel determinant with a discontinuous
Gaussian weight carries information about such quantities. Indeed,
the moment generating function of the random variable $X_{\lambda_{0},n}$,
which is defined as $M_{\lambda_{0},n}(y):=\mathbb{E}_{n}\left(e^{yX_{\lambda_{0},n}}\right)$,
can be expressed as 
\begin{equation}
M_{\lambda_{0},n}(y)=\frac{1}{Z_{n}}\int_{\mathbb{R}^{n}}\prod_{1\leq i<j\leq n}(x_{i}-x_{j})^{2}\prod_{j=1}^{n}\left(e^{-x_{j}^{2}}\times\begin{cases}
1, & x_{j}<\lambda_{0}\\
e^{y}, & x_{j}\geq\lambda_{0}
\end{cases}\times\mathrm{d}x_{j}\right).\label{eq: def M}
\end{equation}
This is in fact the ratio of two Hankel determinants, one with a discontinuous
Gaussian weight, and one with a regular Gaussian weight: if we write
$y=-2\pi i\beta$, we have 
\begin{equation}
M_{\lambda_{0},n}(y)=\frac{e^{-\pi in\beta}H_{n}(\lambda_{0},\beta)}{H_{n}\left(\lambda_{0},0\right)}.\label{moment gen}
\end{equation}
This is true for any $n$ and $\lambda_{0}$.

The large $n$ asymptotics for the Hankel determinant $H_{n}$ proved
in Theorem \ref{theorem hankel} together with the explicit expression
\eqref{HnGUE} for $H_{n}\left(\lambda_{0},0\right)$, immediately
give information about the moment generating function as $n\to\infty$
if $\lambda_{0}$ is scaled as in \eqref{scaling lambda0}.

Expanding the moment generating function for small values of $y$,
we have 
\begin{equation}
M_{\lambda_{0},n}(y)=1+y\mathbb{E}_{n}(X_{\lambda_{0},n})+\frac{y^{2}}{2}\mathbb{E}_{n}(X_{\lambda_{0},n}^{2})+\mathcal{O}\left(y^{3}\right),\qquad y\to0,\label{moments}
\end{equation}
so the average and variance of $X_{\lambda_{0},n}$ can be read off
immediately from the small $\beta$ asymptotics for the Hankel determinant.

In particular, differentiating \eqref{hankelas} with respect to $\beta$
and using \eqref{moment gen} and \eqref{moments}, we obtain 
\begin{equation}
\lim_{n\to\infty}\mathbb{E}_{n}(X_{\lambda_{0},n}^{k})=\frac{1}{(-2\pi i)^{k}}\frac{\mathrm{d}^{k}}{\mathrm{d}\beta^{k}}\restr{\left(\exp\left(-\int_{t}^{\infty}(\tau-t)u(\tau;\kappa)^{2}\mathrm{d}\tau\right)\right)}{\beta=0},
\end{equation}
with $\kappa$ given by \eqref{def beta}, which means that the large
$n$ limit of the moments of the random variable $X_{\lambda_{0},n}$
can be expressed in terms of the Ablowitz-Segur Painlevé II solutions
$u(\tau;\kappa)$ and its $\kappa$-derivatives evaluated at $\kappa=0$.
Note that this differentiation is justified since the asymptotics
\eqref{hankelas} are known to be uniform in a small neighborhood
of $\beta=0$. The first $\kappa$-derivative of $u$ is the Airy
function, by \eqref{u kappa0}, and this implies that 
\begin{equation}
\lim_{n\to\infty}\mathbb{E}_{n}(X_{\lambda_{0},n})=\int_{t}^{+\infty}(\tau-t)\Ai(\tau)^{2}\mathrm{d}\tau=\frac{1}{3}\left(2t^{2}\Ai(t)^{2}-\Ai(t)\Ai'(t)-2t\Ai'(t)^{2}\right).
\end{equation}
The same formula can be derived directly from $\lim_{n\to\infty}\mathbb{E}_{n}\left(X_{\lambda_{0},n}\right)=\int_{t}^{+\infty}\rho(\tau)\mathrm{d}\tau$
, where $\rho(t)=K_{\Ai}(t,t)=\Ai'(t)^{2}-\Ai''(t)\Ai(t)$ is the
density for the largest eigenvalue. Similarly, the behavior of higher
moments can also be studied via just the correlation functions $\rho_{m}\left(x_{1},\ldots,x_{m}\right)=\det\left(K_{\Ai}\left(x_{i},x_{j}\right)\right)_{i,j=1}^{m}$.
We would like to thank Peter Forrester for pointing out this fact.

\subsubsection{Largest eigenvalue in a thinned GUE}

The second application is connected to the so-called \textit{thinning
procedure} in the GUE. Consider the $n$ eigenvalues $x_{1}\geq\ldots\geq x_{n}$
of a GUE matrix, and apply the following thinning or filtering procedure
to them: for each eigenvalue independently, we remove it with probability
$s\in(0,1)$. This leads us to a particle configuration, where the
number of remaining particles can be any integer $\ell$ between $0$
and $n$, and we denote those particles by $\mu_{1}\geq\ldots\geq\mu_{\ell}$.
Below, we show that the largest particle distribution in this process
can be expressed in terms of a Hankel determinant with discontinuous
Gaussian weight. More precisely, we have 
\begin{equation}
{\rm Prob}_{s}\left(\mu_{1}\leq\lambda_{0}\right)=M_{\lambda_{0},n}(\log s),\label{prob thinning}
\end{equation}
where $M_{\lambda_{0},n}(t)$ is defined in \eqref{moment gen}.

To prove (\ref{prob thinning}), write $E_{n}(k,\lambda_{0})$ for
the probability that a $n\times n$ GUE matrix has exactly $k$ eigenvalues
bigger than $\lambda_{0}$. If we want none of the thinned or filtered
particles $\mu_{1},\ldots,\mu_{\ell}$ to be bigger than $\lambda_{0}$,
that means that all GUE eigenvalues which are bigger than $\lambda_{0}$
have to be removed by the thinning procedure. Therefore, we have 
\begin{equation}
{\rm Prob}_{s}(\mu_{1}\leq\lambda_{0})=\sum_{k=0}^{n}E_{n}(k,\lambda_{0})s^{k},\label{prob thinning 2}
\end{equation}
since each eigenvalue is removed independently with probability $s$.

Using the integral representation (\ref{eq: def M}), it is on the
other hand straightforward to show that 
\begin{equation}
M_{\lambda_{0},n}(\log s)=\sum_{k=0}^{n}E_{n}(k,\lambda_{0})s^{k}.\label{prob thinning 3}
\end{equation}
Alternatively, this follows from the equation 
\begin{equation}
E_{n}(k,\lambda_{0})=\frac{1}{k!}\left(\frac{d}{ds}\right)^{k}M_{\lambda_{0},n}(\log s),
\end{equation}
which is well-known and proved, for example, in \cite[Ch. 6 and 24]{Mehta04}.
Combining (\ref{prob thinning 2}) with (\ref{prob thinning 3}),
we obtain \eqref{prob thinning}. Consequently, by \eqref{moment gen}
and \eqref{hankelas}, 
\begin{equation}
\lim_{n\to\infty}{\rm Prob}_{s}(\mu_{1}\leq\lambda_{0})=\lim_{n\to\infty}M_{\lambda_{0},n}(\log s)=\exp\left(-\int_{t}^{+\infty}(\tau-t)u(\tau;\kappa)^{2}\mathrm{d}\tau\right),\label{prob thinning 4}
\end{equation}
where $s=1-\kappa^{2}$. This relation, without the Hankel determinant,
was discussed previously in \cite{BohCarPat09,BohigPato04}, where
a transition was observed from the Tracy-Widom distribution (at $s=0$)
to the Weibull distribution (at $s=1$). It is challenging, however,
to describe explicitly the transition asymptotic regime from the behavior
\eqref{det Airy} corresponding to $\beta=-\frac{i}{2\pi}\ln s$,
$0<s\leq1$ to the Tracy-Widom asymptotic behavior, 
\begin{equation}
\ln\det\left(1-\restr{K_{\Ai}}{\left[t,+\infty\right)}\right)=\frac{1}{12}t^{3}-\frac{1}{8}\ln\left(-t\right)+\frac{1}{24}\ln2+\zeta'\left(-1\right)+o\left(1\right),\quad t\to-\infty,
\end{equation}
corresponding to $s=0$, i.e. $\beta=-i\infty$ or $\kappa=1$. Here,
$\zeta$ is the Riemann zeta-function. Similar transition regime for
the sine-kernel determinant has been already described in \cite{BotDeItKr14}
in terms of elliptic functions, and the presence of a very interesting
cascade-type asymptotic behavior has been detected (see also \cite{Dyson95}
where the problem was analyzed, on a heuristic level, for the first
time ). In the case of the Airy-kernel, the question is still open,
although on the level of the logarithmic derivatives, i.e. on the
level of the Painlevé function $u\left(t;\kappa\right)$, the transition
asymptotics from the Ablowitz-Segur case ($\kappa<1$) to the Hastings-McLeod
($\kappa=1$) case has already been found in \cite{Bothner15}.

\subsubsection{Random partitions}

The Airy kernel Fredholm determinant can be interpreted in terms of
random partitions. The Plancherel measure on the set of partitions
of $N\in\mathbb{N}$ is a well-known measure which has its origin
in representation theory. It can be defined in an elementary way by
the following procedure. Take a permutation $\sigma$ in $S_{N}$
and define $x_{1}$ as the maximal length of an increasing subsequence
of $\sigma$. Next, we define $x_{2}$ by requiring that $x_{1}+x_{2}$
is the maximal total length of two disjoint increasing subsequences
of $\sigma$. We proceed in this way, and define $x_{k}$ recursively
by imposing that $x_{1}+\cdots+x_{k}$ is the maximal total length
of $k$ disjoint increasing subsequences of $\sigma$, and we continue
until $x_{1}+\cdots+x_{k}=N$. This procedure associates a partition
$x_{1}\geq\cdots\geq x_{n}$ of $N$ to a permutation $\sigma\in S_{N}$.
The uniform measure on $S_{N}$ induces a measure on the set of partitions
of $N$, which is the Plancherel measure.

We now take a random partition $x_{1}\geq\cdots\geq x_{n}$ of $N$
with respect to the Plancherel measure. Then, the particles $N^{-1/6}(x_{i}-2\sqrt{N})$
converge to the Airy process as $N\to\infty$, see e.g. \cite{Romik15}.
Therefore, if we apply the filtering procedure which removes each
component $x_{i}$ of the partition independently with probability
$s$, we obtain a new partition $\mu_{1}\geq\cdots\geq\mu_{m}$ of
a number $\ell\leq N$. Using similar arguments as in \cite{Romik15},
it follows that 
\begin{equation}
\lim_{N\to\infty}{\rm Prob}_{s}\left(N^{-1/6}(\mu_{1}-2\sqrt{N})\leq t\right)=\det\left(1-\left(1-s\right)\restr{K_{\mathrm{Ai}}}{[t,+\infty)}\right).\label{eq: partitions as}
\end{equation}
Note that the conjectured integral identity \eqref{total integral u}
is of value in relation to \eqref{prob thinning 4} and \eqref{eq: partitions as}.

We want to conclude this section by making the following general remark.
From the point of view of the random matrix theory the examples considered
in this section indicate that, in fact, it is the whole Ablowitz-Segur
family of the Painlevé II transcendents that could appear in the theory
and not only the Hastings-McLeod solution. Regarding the second appearance,
it has already been known due to Bohigas et. al. \cite{BohCarPat09,BohigPato04},
however the first and the third examples are apparently new.

\section{\label{section: proof hankel TW}Theorem \ref{theorem hankel} and
Conjecture \ref{conj Airy}}

\subsection{Proof of Theorem \ref{theorem hankel}}

Denote $K_{n}$ for the GUE eigenvalue correlation kernel 
\begin{equation}
K_{n}(x,y)=e^{-\left(x^{2}+y^{2}\right)/2}\sum_{k=0}^{n-1}H_{k}(x)H_{k}(y),
\end{equation}
built out of normalized degree $k$ Hermite polynomials $H_{k}$,
orthonormal with respect to the weight $e^{-x^{2}}$. Define $G_{\lambda_{0},n}(\kappa)$
by 
\begin{equation}
G_{\lambda_{0},n}(\kappa)=M_{\lambda_{0},n}(\log(1-\kappa^{2}))=\frac{1}{Z_{n}}\intop_{\mathbb{R}^{n}}\prod_{1\leq i<j\leq n}(x_{i}-x_{j})^{2}\prod_{j=1}^{n}\left(e^{-x_{j}^{2}}\times\begin{cases}
1, & x_{j}<\lambda_{0}\\
1-\kappa^{2}, & x_{j}\geq\lambda_{0}
\end{cases}\times\mathrm{d}x_{j}\right).\label{def M}
\end{equation}
By \eqref{moment gen}, we have 
\begin{equation}
G_{\lambda_{0},n}(\kappa)=M_{\lambda_{0},n}(\log(1-\kappa^{2}))=\frac{e^{-\pi in\beta}H_{n}(\lambda_{0},\beta)}{H_{n}\left(\lambda_{0},0\right)},\label{GHankel}
\end{equation}
with $\kappa^{2}=1-e^{-2\pi i\beta}$.

\medskip{}
Similarly as in \eqref{prob thinning 3}, we have 
\begin{equation}
G_{\lambda_{0},n}(\kappa)=M_{\lambda_{0},n}(\log(1-\kappa^{2}))=\sum_{k=0}^{n}(1-\kappa^{2})^{k}E_{n}(k,\lambda_{0})=\det\left(1-\kappa^{2}\restr{K_{n}}{[\lambda_{0},+\infty)}\right),\label{Gdet}
\end{equation}
where $\restr{K_{\Ai}}{[t,+\infty)}$ is the integral operator with
kernel $K_{n}$ acting on $[\lambda_{0},+\infty)$, and the determinant
is the Fredholm determinant (the proof of the last equality in a more
general setting can be found in \cite[§23.3]{Mehta04}).

\medskip{}

Another well-known result is the convergence of the kernel $K_{n}$
to the Airy kernel 
\begin{equation}
K_{\Ai}(x,y)=\frac{\Ai(x)\Ai'(y)-\Ai(y)\Ai'(x)}{x-y}
\end{equation}
if $x,y$ are scaled properly around $\sqrt{2n}$: 
\begin{equation}
\frac{1}{\sqrt{2}n^{1/6}}K_{n}\left(\sqrt{2n}+\frac{u}{\sqrt{2}n^{1/6}},\sqrt{2n}+\frac{v}{\sqrt{2}n^{1/6}}\right)=K_{\Ai}(u,v)+e^{-c(|u|+|v|)}o(1),
\end{equation}
uniformly for $u,v>-M$, $M>0$, for some $c>0$. Using a slightly
stronger version of this Airy kernel limit, as in \cite{DeiftGioe07},
one shows the convergence of the associated Fredholm determinants:
if we scale $\lambda_{0}$ as in \eqref{scaling lambda0}, we have
\begin{equation}
\lim_{n\to\infty}G_{\lambda_{0},n}(\kappa)=\lim_{n\to\infty}\det\left(1-\kappa^{2}\restr{K_{n}}{[\lambda_{0},+\infty)}\right)=\det\left(1-\kappa^{2}\restr{K_{\Ai}}{[t,+\infty)}\right),\label{lim Fredholm}
\end{equation}
uniformly for $t\in(-M,+\infty)$ for any $M>0$, where $\restr{K_{\Ai}}{[t,+\infty)}$
is the integral operator with kernel $K_{\Ai}$ acting on $L^{2}(t,+\infty)$.

Using the Tracy-Widom formula \eqref{TW} together with \eqref{GHankel}
and \eqref{lim Fredholm}, we obtain 
\begin{equation}
H_{n}(\lambda_{0},\beta)=e^{\pi in\beta}H_{n}\left(\lambda_{0},0\right)\exp\left(-\int_{t}^{\infty}(\tau-t)u(\tau;\beta)^{2}\mathrm{d}\tau\right)(1+o(1)),
\end{equation}
as $n\to\infty$. This proves \eqref{hankelas}.

\subsection{Motivation of Conjecture \ref{conj Airy}}

In the case where $\lambda_{0}=\lambda\sqrt{2n}$ with $\lambda\in(-1,1)$,
asymptotics for the Hankel determinants $H_{n}(\lambda_{0},\beta)$
were obtained in \cite{ItsKras08} and are given by \eqref{as Hankel noncrit}.
The dependence of the error term on $\lambda$ was not made explicit
in \cite{ItsKras08}, but it can be seen from their analysis that
the error term in \eqref{as Hankel noncrit} gets worse if $\lambda$
approaches $\pm1$. We hope that by a careful inspection of the estimates
in \cite{ItsKras08}, one can strengthen the error term and obtain
\begin{multline}
H_{n}(\lambda\sqrt{2n},\beta)=H_{n}\left(\lambda_{0},0\right)\,G(1+\beta)G(1-\beta)(1-\lambda^{2})^{-3\beta^{2}/2}(8n)^{-\beta^{2}}\enskip\times\\
\times\enskip\exp\left(2in\beta\left(\arcsin\lambda+\lambda\sqrt{1-\lambda^{2}}\right)\right)\left(1+\bigO\left(\frac{1}{(n^{2/3}(1-\lambda))^{\gamma}}\right)\right),\;\left|\Re\beta\right|<\frac{1}{4},\label{as Hankel noncrit improved}
\end{multline}
for some $\gamma>0$. The error term must be uniform as $\lambda\uparrow1$
at a sufficiently slow rate such that $n^{2/3}(1-\lambda)$ is sufficiently
large, say larger than some fixed $M>0$.

We now take $\lambda=1+tn^{-2/3}/2$ with $-t>2M$. On the one hand,
we can apply \eqref{as Hankel noncrit improved}. Expanding the right-hand
side of \eqref{as Hankel noncrit improved} for large $n$, we obtain,
after a straightforward calculation, 
\begin{multline}
\log H_{n}(\lambda_{0},\beta)-\log H_{n}\left(\lambda_{0},0\right)-\pi in\beta\\
=-\frac{4}{3}i\beta(-t)^{3/2}-\frac{3}{2}\beta^{2}\log(-t)+\log\left(G(1-\beta)G(1+\beta)\right)-3\beta^{2}\log2+\epsilon_{n}(t),\label{Expansion large gap 1}
\end{multline}
where $|\epsilon_{n}(t)|\leq c/|t|^{\gamma}+d\left|t\right|^{5/2}n^{-2/3}$
for some $c,d,\gamma>0$, if $n$ and $-t$ are sufficiently large.
We thank the referees for pointing out the $n$-dependence of this
error term.

On the other hand, by \eqref{hankelas}, we have 
\begin{equation}
\log H_{n}(\lambda_{0},\beta)-\log H_{n}\left(\lambda_{0},0\right)-\pi in\beta=\log\det\left(1-\kappa^{2}\restr{K_{\Ai}}{[t,+\infty)}\right)+o(1),\qquad n\to\infty.\label{Expansion large gap 2}
\end{equation}

Comparing \eqref{Expansion large gap 1} with \eqref{Expansion large gap 2},
we obtain 
\begin{multline}
\log\det\left(1-\kappa^{2}\restr{K_{\Ai}}{[t,+\infty)}\right)\\
=-\frac{4}{3}i\beta(-t)^{3/2}-\frac{3}{2}\beta^{2}\log(-t)+\log\left(G(1-\beta)G(1+\beta)\right)-3\beta^{2}\log2+\epsilon_{n}(t)+o(1),\label{Expansion large gap 3}
\end{multline}
as $n\to\infty$. Letting first $n\to\infty$ and then $t\to-\infty$,
we obtain \eqref{det Airy}. The total integral identity \eqref{total integral u}
now follows easily from \eqref{det Airy} and \eqref{TW}.

\section{RH analysis of orthogonal polynomials\label{section: RH OP} }

\subsection{Overview of transformations\label{sub:Overview-of-transformations}}

Following \cite{FokItsKit92} (see also \cite{Deift99} and \cite{Its11a}),
consider the RH problem for the matrix-valued function $Y(z)$ analytic
in both upper and lower open half-planes with the following jump condition
on the real axis:

\begin{equation}
Y_{+}(x)=Y_{-}(x)\begin{pmatrix}1 & w(x)\\
0 & 1
\end{pmatrix},\;x\in\mathbb{R},\label{eq: jump for Y and weight}
\end{equation}
where $Y_{\pm}(x)$ is the limit of $Y(x)$ as $z$ approaches $x$
from the upper (+) or lower (-) half plane, and with $w(x)$ given
by \eqref{weight}. $Y$ has the asymptotic condition 
\begin{equation}
Y(z)=\left(I+O\left(\frac{1}{z}\right)\right)z^{n\sigma_{3}}\mbox{ as }z\rightarrow\infty,\label{eq: asymptotic condition for Y}
\end{equation}
where $\sigma_{3}$ is the third Pauli matrix 
\[
\sigma_{3}=\begin{pmatrix}1 & 0\\
0 & -1
\end{pmatrix}.
\]

The explicit solution of this problem is 
\begin{equation}
Y(z)=\begin{pmatrix}p_{n}(z) & {\displaystyle \left(2\pi i\right)^{-1}\int_{-\infty}^{\infty}\frac{p_{n}(x)w(x)}{x-z}\mathrm{d}x}\\
-2\pi ih_{n-1}^{-1}p_{n-1}(z) & {\displaystyle -h_{n-1}^{-1}\int_{-\infty}^{\infty}\frac{p_{n-1}(x)w(x)}{x-z}\mathrm{d}x}
\end{pmatrix},\label{eq: explicit solution of the OP RH problem}
\end{equation}
where $p_{n}$ and $p_{n-1}$ are the monic orthogonal polynomials
of degree $n$ and $n-1$ with respect to the weight $w(x)=w(x;\lambda_{0},\beta)$
defined in \eqref{weight}, and $h_{n-1}=\int_{-\infty}^{+\infty}p_{n-1}(x)^{2}w(x)\mathrm{d}x$.

This RH problem for $Y$ has been studied asymptotically, for large
$n$ and with $\lambda_{0}$ scaled as in \eqref{scaling lambda0},
in \cite{XuZhao11}. We give an overview of the series of transformations
constructed in this asymptotic analysis, but refer the reader to \cite{XuZhao11}
for more details. Define the function $T(z)$ as 
\begin{gather}
T(z)=e^{-n\frac{l}{2}\sigma_{3}}\left(2n\right)^{-n\sigma_{3}/2}Y\left(\sqrt{2n}\cdot z\right)e^{n\left(\frac{l}{2}-g(z)\right)\sigma_{3}},\quad z\in\mathbb{C}\setminus\mathbb{R},\label{eq: def of T(z)}
\end{gather}
where 
\begin{equation}
l=-1-2\log2,\quad g(z)=\int_{-1}^{1}\log(z-s)\psi(s)\mathrm{d}s,\;z\in\mathbb{C}\setminus(-\infty,1],\;\psi(s)=\frac{2}{\pi}\sqrt{1-s^{2}}.\label{eq: def of psi(z), l, g(z)}
\end{equation}

Here, the logarithm is in its principle branch with branch cut in
the negative direction, and $\psi(s)>0$ on $\left(-1,1\right)$.
As usual, this $g(z)$ satisfies certain variational relations:
\begin{eqnarray}
g_{+}(z)+g_{-}(z) & = & 2z^{2}+l,\quad z\in\left(-1,1\right),\label{eq: variational property of g 1}\\
g_{+}(z)+g_{-}(z) & < & 2z^{2}+l,\quad z\in\mathbb{R}\setminus\left[-1,1\right].\label{eq: variational property of g 2}
\end{eqnarray}
Additionally, its jump across the real line is described by 
\begin{equation}
g_{+}(z)-g_{-}(z)=\begin{cases}
2\pi i, & z\leq-1,\\
2\pi i\int_{z}^{1}\psi(s)\,\mathrm{d}s, & z\in\left[-1,1\right],\\
0, & z\geq1.
\end{cases}\label{eq: jump of g}
\end{equation}

Let $\psi(z)$ be the analytic continuation of $\psi$ into $\mathbb{C}\setminus\left(\left(-\infty,-1\right]\cup\left[1,\infty\right)\right)$.
Introduce the function $h(z)$ as follows: 
\begin{eqnarray}
h(z) & = & -\pi i\int_{1}^{z}\psi(y)\mathrm{d}y,\;z\in\mathbb{C}\setminus\left(\left(-\infty,-1\right]\cup\left[1,\infty\right)\right),\label{eq: def of h(z)}
\end{eqnarray}
and define a piecewise analytic function $S$ in lens-shaped regions
(see Fig. \ref{fig:The-opening-of lenses}) as follows: 
\begin{equation}
S(z)=T(z)\cdot\begin{cases}
I, & \mbox{outside the lenses},\\
\begin{pmatrix}1 & 0\\
-e^{-i\pi\beta}e^{-2nh(z)} & 1
\end{pmatrix}, & \mbox{in the upper half-lens},\\
\begin{pmatrix}1 & 0\\
e^{-i\pi\beta}e^{2nh(z)} & 1
\end{pmatrix}, & \mbox{in the lower half-lens}.
\end{cases}\label{eq: def of S(z)}
\end{equation}
\begin{figure}
\centering\scalebox{0.8}{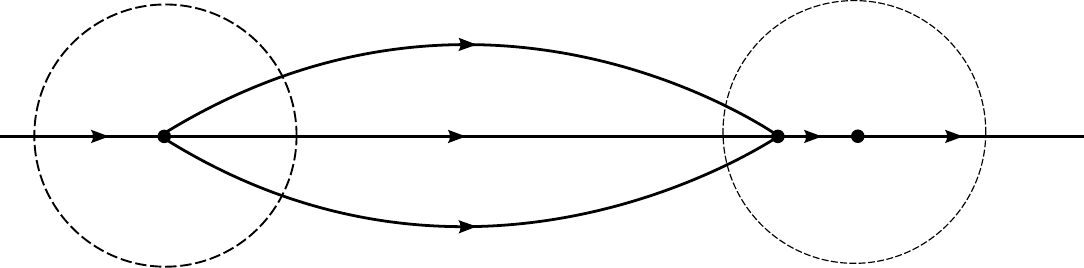}\protect\caption{Opening of lens, case $t<0$.\label{fig:The-opening-of lenses}}
\end{figure}

As shown in \cite{XuZhao11}, the function $S$ has jumps on the lens-shaped
contour shown in Fig. \ref{fig:The-opening-of lenses}. As $n\to\infty$,
the jump matrices tend to the identity matrix everywhere except on
$(-1,1)$ and in small disks $U^{-1}$ and $U^{1}$ around $-1$ and
$1$. To obtain asymptotics for $S$, an outer parametrix and local
parametrices near $-1$ and $+1$ have to be constructed.

\subsection{Outer parametrix}

For $z$ outside small disks around $-1$ and $+1$, $S$ can be approximated
for large $n$ by an outer parametrix $P^{(\infty)}$, which is analytic
except on $[-1,1]$, tends to the identity as $z\to\infty$, and has
the jump relation 
\begin{equation}
P_{+}^{(\infty)}(z)=P_{-}^{(\infty)}(z)\cdot\begin{pmatrix}0 & e^{\pi i\beta}\\
-e^{-\pi i\beta} & 0
\end{pmatrix}.\quad z\in(-1,1).\label{eq: jumps of P(infty)}
\end{equation}
It is given explicitly (see e.g. \cite{ItsKras08}) as 
\begin{equation}
P^{(\infty)}(z)=\frac{1}{2}e^{i\pi\beta\sigma_{3}/2}\begin{pmatrix}a_{0}+a_{0}^{-1} & -i\left(a_{0}-a_{0}^{-1}\right)\\
i\left(a_{0}-a_{0}^{-1}\right) & a_{0}+a_{0}^{-1}
\end{pmatrix}e^{-i\pi\beta\sigma_{3}/2},\label{eq: definition of P(infty)}
\end{equation}
where 
\begin{equation}
a_{0}(z)=\left(\frac{z-1}{z+1}\right)^{1/4}.\label{eq: def of a_0, a, b}
\end{equation}
The branch of $a_{0}$ is chosen so that $a_{0}(z)\to1$ as $z\to\infty$.

\subsection{Local parametrix near 1\label{sub:Local-parametrix-near-1}}

In order to obtain asymptotics for $S$ also in neighborhoods of $\pm1$,
local parametrices have been constructed in \cite{XuZhao11}. Near
$-1$, this local parametrix was built using the Airy function, but
we do not need its explicit form. Near $+1$, it was built using a
model RH problem associated to the Painlevé XXXIV equation.

The local parametrix $P^{(1)}(z)$ is analytic in $U^{1}$, except
for $z$ on the jump contour for $S$, and it has the same jump relations
as $S$ for $z$ on the jump contour for $S$, inside $U^{1}$. On
the boundary $\partial U^{1}$, it satisfies the matching condition
\begin{equation}
P^{(1)}(z)\cdot P^{(\infty)}(z)^{-1}=I+\bigO(n^{-1/3})\mbox{ as }n\rightarrow\infty,\mbox{ uniformly for }z\in\partial U^{1}.\label{eq:gluing condition}
\end{equation}
It takes the form 
\begin{equation}
P^{(1)}(z)=E(z)\Phi(\zeta(z);\tau)e^{\frac{2}{3}\zeta(z)^{3/2}\sigma_{3}}e^{-i\pi\beta\sigma_{3}/2},\label{eq: P(1) parametrix}
\end{equation}
where $E$ is an analytic function in $U^{1}$, $\Phi$ will be specified
below, and $\zeta(z)$ is a conformal map near $1$. The conformal
map $\zeta(z)$ and the parameter $\tau$ are given by 
\begin{equation}
\zeta(z)=\left(-\frac{3}{2}nh(z)\right)^{2/3},\qquad\tau=\zeta(\lambda)=\zeta\left(1+\frac{t}{2}n^{-2/3}\right).\label{eq: definition of zeta}
\end{equation}
Here, the multivalued power is the principle branch in $\mathbb{C}\setminus\left(-\infty,0\right)$.
The Taylor expansion at $1$ is 
\begin{eqnarray}
\zeta(z) & = & 2n^{2/3}\left(z-1\right)\left(1+\frac{1}{10}\left(z-1\right)+\mathcal{O}\left(z-1\right)^{2}\right)\mbox{ as }z\rightarrow1,\label{eq: expansion zeta}\\
\tau & = & t+\mathcal{O}\left(n^{-2/3}\right)\mbox{ as }n\to\infty.\label{eq: expansion tau}
\end{eqnarray}

The analytic pre-factor $E$ can be expressed as 
\begin{equation}
E(z)=P^{(\infty)}(z)e^{i\pi\beta\sigma_{3}/2}\frac{1}{\sqrt{2}}\begin{pmatrix}1 & i\\
i & 1
\end{pmatrix}\zeta(z)^{-\sigma_{3}/4},\label{eq: definition of E(z)}
\end{equation}
and $\Phi(\zeta;\tau)$ is given by 
\begin{equation}
\Psi_{0}(\zeta;\tau)=\begin{pmatrix}1 & \frac{i\tau^{2}}{4}\\
0 & 1
\end{pmatrix}\Phi(\zeta+\tau;\tau),\label{def Phi}
\end{equation}
where $\Psi_{0}(\xi;\tau)$ is the solution to the following RH problem.

%\begin{figure}[h]%\begin{centering}%\scalebox{1.3}{\svginput{Problem-for-Psi-clean.pdf_tex}}%\par\end{centering}%\protect\caption{The RH problem for $\Phi\left(\zeta\left(z\right)\right)$.\label{fig:The-RH-problem for Psi}}%\end{figure}
\begin{figure}[h]
\centering{}\scalebox{0.8}{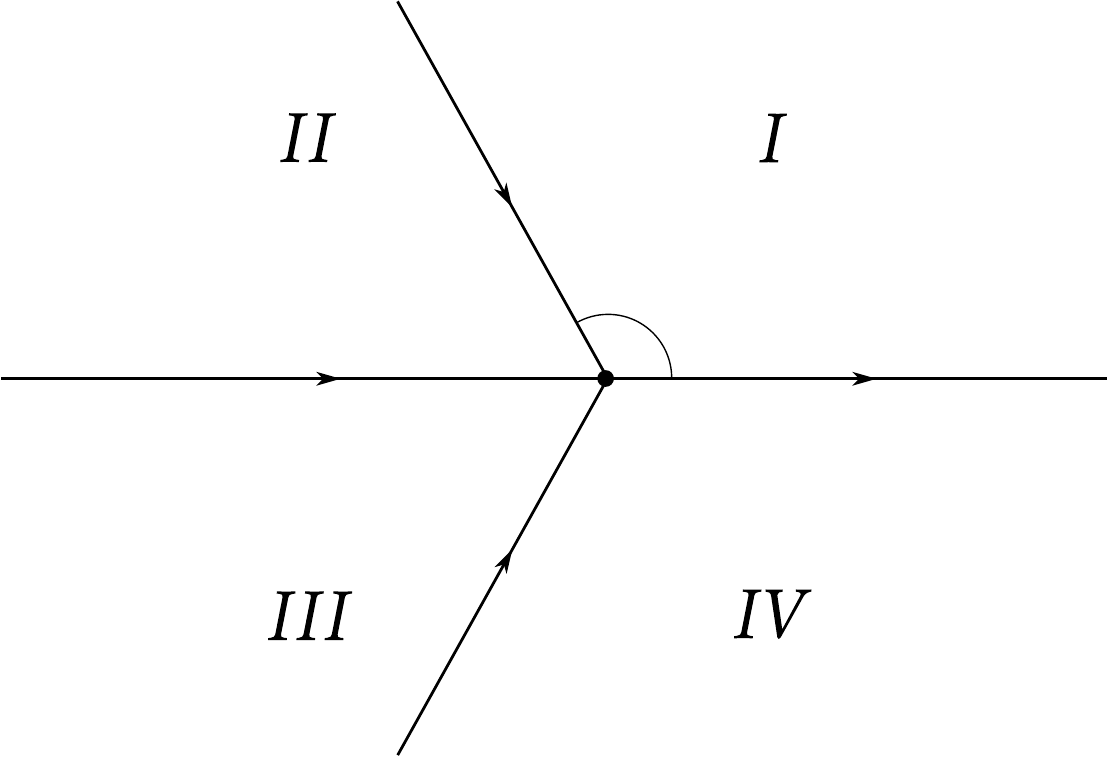}\protect\protect\caption{The RH problem for $\Psi_{0}(\xi)$. The rays meet at $\xi=0$. The
union of the rays is referred to as $\Gamma_{\Psi_{0}}$.\label{fig: Model problem contours}}
\end{figure}

$\Psi_{0}$ is analytic off the contour shown in Fig.\ \ref{fig: Model problem contours}
and satisfies the following jump and asymptotic conditions: 
\begin{equation}
\Psi_{0+}(\xi)=\Psi_{0-}(\xi)\cdot\begin{cases}
\begin{pmatrix}1 & e^{-2\pi i\beta}\\
0 & 1
\end{pmatrix}, & \xi\in\gamma_{1},\\
\begin{pmatrix}1 & 0\\
1 & 1
\end{pmatrix}, & \xi\in\gamma_{2}\cup\gamma_{4},\\
\begin{pmatrix}0 & 1\\
-1 & 0
\end{pmatrix}, & \xi\in\gamma_{3}.
\end{cases}\label{eq:model problem jumps-1}
\end{equation}
\begin{eqnarray}
\Psi_{0}(\xi) & = & \left(I+\frac{m}{\xi}+\mathcal{O}\left(\frac{1}{\xi^{2}}\right)\right)\xi^{\sigma_{3}/4}\frac{1}{\sqrt{2}}\begin{pmatrix}1 & -i\\
-i & 1
\end{pmatrix}e^{-\left(\frac{2}{3}\xi^{3/2}+\tau\xi^{1/2}\right)\sigma_{3}}\mbox{ as }\xi\rightarrow\infty,\label{eq:model asymptotic at infinity}\\
\Psi_{0}(\xi) & = & \left(\begin{pmatrix}a & b\\
c & d
\end{pmatrix}+\mathcal{O}(\xi)\right)\left(I+\frac{\kappa^{2}}{2\pi i}\begin{pmatrix}0 & 1\\
0 & 0
\end{pmatrix}\log\xi\right)M(\xi)\mbox{ as }\xi\rightarrow0,\label{eq: model behavior at zero}
\end{eqnarray}
where the matrix elements of $m$ as well as $a$, $b$, $c$, $d$
are some functions of $\tau$, $\kappa$ is given by \eqref{def beta},
and $M$ is a piecewise constant function defined as follows 
\begin{equation}
M(\xi)=\begin{cases}
\begin{pmatrix}1 & 0\\
0 & 1
\end{pmatrix}, & \xi\in I,\\
\begin{pmatrix}1 & 0\\
-1 & 1
\end{pmatrix}, & \xi\in II,\\
\begin{pmatrix}1-e^{-2\pi i\beta} & -e^{-2\pi i\beta}\\
1 & 1
\end{pmatrix}, & \xi\in III,\\
\begin{pmatrix}1 & -e^{-2\pi i\beta}\\
0 & 1
\end{pmatrix}, & \xi\in IV.
\end{cases}\label{eq:constant factors M}
\end{equation}
All multivalued functions above are in their principle branches with
branch cuts along the negative half axis.$\Psi_{0}$ is uniquely determined
by the above conditions. Note that all higher order terms in the expansions
of $\Psi_{0}$ are also functions of $\tau$.

The function $P^{(1)}$ defined in \eqref{eq: P(1) parametrix} is
the same as the one in \cite{XuZhao11}, but it has to be noted that
our functions $\Psi_{0}$ is defined in a slightly different way compared
to \cite{XuZhao11}, which will be convenient later on. We have the
relation 
\begin{equation}
\Psi_{0}^{XZ}(\zeta;s)=\begin{pmatrix}0 & i\\
i & 0
\end{pmatrix}2^{-\sigma_{3}/6}\Psi_{0}\left(\xi=2^{2/3}\zeta;\tau=-2^{-1/3}s\right),\label{relation Psi0}
\end{equation}
where $\Psi_{0}^{XZ}$ denotes the solution to the model RH problem
of \cite{XuZhao11}.

By \eqref{def Phi} and \eqref{eq:model asymptotic at infinity},
it is straightforward to verify that $\Phi$ admits the asymptotic
expansion

\begin{equation}
\Phi(\zeta;\tau)=\left(I+\frac{m^{\Phi}}{\zeta}+\mathcal{O}\left(\frac{1}{\zeta^{2}}\right)\right)\zeta^{\sigma_{3}/4}\frac{1}{\sqrt{2}}\begin{pmatrix}1 & -i\\
-i & 1
\end{pmatrix}e^{-\frac{2}{3}\zeta^{3/2}\sigma_{3}}\mbox{ as }\zeta\rightarrow\infty,\label{eq: asymptotic condition for Phi(z)}
\end{equation}
where we have the following relation between $m=m(\tau)$ and $m^{\Phi}=m^{\Phi}(\tau)$,
\begin{equation}
m=\begin{pmatrix}{\displaystyle m_{11}^{\Phi}+\frac{i\tau^{2}}{4}m_{21}^{\Phi}+\frac{\tau}{4}-\frac{\tau^{4}}{32}} & {\displaystyle m_{12}^{\Phi}-\frac{i\tau^{2}}{4}m_{11}^{\Phi}+\frac{\tau^{4}}{16}m_{21}^{\Phi}-\frac{i\tau^{3}}{12}+\frac{i\tau^{6}}{192}}\\
{\displaystyle m_{21}^{\Phi}+\frac{i\tau^{2}}{4}} & {\displaystyle m_{22}^{\Phi}-\frac{i\tau^{2}}{4}m_{21}^{\Phi}-\frac{\tau}{4}+\frac{\tau^{4}}{32}}
\end{pmatrix}.\label{eq: m in terms of mPhi}
\end{equation}

\subsection[Lax pair for \texorpdfstring{$\Psi_{0}$}{Psi0} and the Painlevé
XXXIV equation]{\label{section: Lax}Lax pair for \boldmath{$\Psi_{0}$} and
the Painlevé XXXIV equation }

From the RH conditions for $\Psi_{0}$, there is a standard procedure
to deduce differential equations with respect to the variable $\xi$
and the parameter $\tau$. Here, our approach deviates from the one
in \cite{XuZhao11}.

Consider the functions $U:=\frac{\partial\Psi_{0}}{\partial\xi}\Psi_{0}^{-1}$
and $V=\frac{\partial\Psi_{0}}{\partial\tau}\Psi_{0}^{-1}$. Because
the jump matrices for $\Psi_{0}$ are independent of $\xi$ and $\tau$,
$U$ and $V$ are meromorphic functions of $\xi$. Using the behavior
of $\Psi_{0}$ at infinity and $0$ given in \eqref{eq:model asymptotic at infinity}
and \eqref{eq: model behavior at zero}, we obtain after a straightforward
calculation that $\Psi_{0}$ satisfies the Lax pair 
\begin{eqnarray}
\frac{\partial\Psi_{0}}{\partial\xi}(\xi;\tau) & = & U(\xi;\tau)\Psi_{0}(\xi;\tau),\quad U(\xi;\tau)=V(\xi;\tau)+\begin{pmatrix}0 & -i\tau/2\\
0 & 0
\end{pmatrix}+\frac{\kappa^{2}}{2\pi i\xi}\begin{pmatrix}-ac & a^{2}\\
-c^{2} & ac
\end{pmatrix},\label{eq:Lax system xi}\\
\frac{\partial\Psi_{0}}{\partial\tau}(\xi;\tau) & = & V(\xi;\tau)\Psi_{0}(\xi;\tau),\quad V(\xi;\tau)=-i\xi\begin{pmatrix}0 & 1\\
0 & 0
\end{pmatrix}-i\begin{pmatrix}-m_{21} & 2m_{11}\\
-1 & m_{21}
\end{pmatrix},\label{eq:Lax system tau}
\end{eqnarray}
where $a,b,c,d$ and the matrix $m$, which are functions of the parameter
$\tau$ (and also of $\beta$), were defined in \eqref{eq:model asymptotic at infinity}-\eqref{eq: model behavior at zero}.
We can also refine the expansion for $\frac{\partial\Psi_{0}}{\partial\tau}\Psi_{0}^{-1}$
as $\xi\rightarrow\infty$: 
\begin{equation}
\frac{\partial\Psi_{0}}{\partial\tau}\Psi_{0}^{-1}-V(\xi,\tau)=\frac{1}{\xi}\frac{\mathrm{d}m}{\mathrm{d}\tau}-\frac{i}{\xi}\left[m,\begin{pmatrix}0 & 0\\
-1 & 0
\end{pmatrix}\right]-\frac{i}{\xi}\left[m,\begin{pmatrix}0 & 1\\
0 & 0
\end{pmatrix}\right]-\frac{i}{\xi}\left[\begin{pmatrix}0 & 1\\
0 & 0
\end{pmatrix}m,m\right]+\mathcal{O}\left(\frac{1}{\xi^{2}}\right).\label{eq: Lax additional equation}
\end{equation}
Since this expression obviously has to be zero, equating its $(21)$
entry to zero gives us the useful relation 
\begin{equation}
m_{11}=\frac{1}{2}m_{21}^{2}-\frac{i}{2}m'_{21}.\label{eq:m11 through m21}
\end{equation}

Note that $m_{21}^{2}$ is the square of the matrix element $m_{21}$.
The compatibility condition of the Lax system (\ref{eq:Lax system xi})-(\ref{eq:Lax system tau}),
\begin{equation}
V_{\xi}-U_{\tau}=\left[U,V\right],\label{eq: Lax compatibility condition}
\end{equation}
becomes 
\begin{multline}
-i\begin{pmatrix}0 & 1\\
0 & 0
\end{pmatrix}+i\begin{pmatrix}-m_{21}' & 2m_{11}'\\
0 & m_{21}'
\end{pmatrix}-\frac{\kappa^{2}}{2\pi i\xi}\begin{pmatrix}-\left(ac\right)' & \left(a^{2}\right)'\\
-\left(c^{2}\right)' & \left(ac\right)'
\end{pmatrix}=\\
=\begin{pmatrix}\frac{\tau}{2} & -\tau m_{21}\\
0 & -\frac{\tau}{2}
\end{pmatrix}-i\frac{\kappa^{2}}{2\pi i}\begin{pmatrix}c^{2} & -2ac\\
0 & -c^{2}
\end{pmatrix}-\frac{i\kappa^{2}}{2\pi i\xi}\begin{pmatrix}2c^{2}m_{11}-a^{2} & 2a^{2}m_{21}-4acm_{11}\\
2c^{2}m_{21}-2ac & a^{2}-2c^{2}m_{11}
\end{pmatrix}.\label{eq: Lax compatibility equation}
\end{multline}
This equation can be separated into two equations for each power of
$\xi$. From the resulting system one can extract the equations 
\begin{eqnarray}
\frac{\kappa^{2}}{2\pi i}c^{2}=m_{21}'(\tau)-\frac{i\tau}{2} & = & \left(m_{21}^{\Phi}\right)',\label{eq: connection c^2 to m_21}\\
\frac{\kappa^{2}}{2\pi i}\left(ac-\frac{i\tau^{2}}{4}c^{2}\right) & = & \left(m_{11}^{\Phi}\right)',\label{eq: (ac-itau^2 c^2) is m11Phi'}
\end{eqnarray}
and 
\begin{equation}
\left(1+2i\tau m_{21}-4m_{11}'\right)^{2}+4\left(2m_{21}'-i\tau\right)\left(2im_{11}''+2im_{11}\left(\tau+2im_{21}'\right)+\tau m_{21}'+m_{21}\right)=0,\label{eq: Lax differential equation mixed}
\end{equation}
which, with the help of \eqref{eq:m11 through m21}, reduces to 
\begin{equation}
1+32\tau\left(m_{21}'\right)^{2}+32i\left(m_{21}'\right)^{3}+4im_{21}''-4\left(m_{21}''\right)^{2}-4i\tau m_{21}'''+8m_{21}'\left(m_{21}'''-i\tau^{2}\right)=0.\label{eq: Lax differential equation for m_21}
\end{equation}
This equation is a disguised version of the 34th Painlevé equation
for the function 
\begin{equation}
y(\tau)=-im_{21}'(\tau)-\frac{\tau}{2}=-i\left(m_{21}^{\Phi}\right)'(\tau),\label{eq: def of y(tau)}
\end{equation}
namely, 
\begin{equation}
y_{\tau\tau}=4y^{2}+2\tau y+\frac{(y_{\tau})^{2}}{2y}.\label{eq: 34th painleve for y}
\end{equation}

Equation \eqref{eq: connection c^2 to m_21} also provides us with
another representation of $y(\tau)$: 
\begin{equation}
y(\tau)=i\lim_{\xi\rightarrow0}\left[\xi\frac{\mathrm{d}\Psi_{0}(\xi)}{\mathrm{d}\xi}\Psi_{0}^{-1}(\xi)\right]_{21}.\label{eq:Limit definition of y(tau)}
\end{equation}
Moreover, from \eqref{eq:m11 through m21}, we obtain an additional
expression for $y$ which does not involve derivatives: 
\begin{equation}
y(\tau)=2m_{11}(\tau)-m_{21}^{2}(\tau)-\tau/2.\label{eq: y in terms of m or mPhi without derivatives}
\end{equation}

\subsection{Proof of Theorem \ref{theorem poles}}

\label{section: poles}

In \cite[Corollary 1]{XuZhao11}, it was proved using vanishing lemma
techniques that the RH problem for $\Psi_{0}^{XZ}(\zeta;s)$ is solvable
for all real values of $s$ if $\beta$ is such that $\left|\arg e^{-2i\pi\beta}\right|<\pi$,
and thus for all $\beta$ such that $|\Re\beta|<1/2$. Because of
the explicit relation \eqref{relation Psi0}, this implies that the
RH problem for $\Psi_{0}$ is also solvable for all real values of
$\tau$ if $|\Re\beta|<1/2$. This in turn implies that the function
$y(\tau)=y(\tau;\beta)$ defined in terms of $\Psi_{0}$ by \eqref{eq: y in terms of m or mPhi without derivatives}
is well-defined and cannot have singularities for real $\tau$ if
$|\Re\beta|<1/2$.

If we define $u(\tau;\kappa)$ by $u(\tau;\kappa)^{2}=y(\tau;\beta)$
with $\kappa^{2}=1-e^{-2\pi i\beta}$, then it is easily verified
by \eqref{eq: 34th painleve for y} that $u$ solves the Painlevé
II equation \eqref{P2}. By exploring the asymptotic behavior of $y(\tau;\beta)$
(or, equivalently, of $u(\tau;\kappa)$) as $\tau\rightarrow\pm\infty$,
we will be able to identify $u(\tau;\kappa)$ as the Ablowitz-Segur
solution of the Painlevé II equation characterized by \eqref{u+}
and \eqref{u-}. This identification, which will follow from \eqref{eq: asymptotic for y(tau) at +Infty}
below, completes the proof of Theorem \ref{theorem poles}.

\subsection{Final transformation}

Introduce the new function 
\begin{equation}
R(z)=S(z)\cdot\begin{cases}
\left(P^{(\infty)}(z)\right)^{-1}, & z\in\mathbb{C}\setminus\overline{U^{-1}\cup U^{1}\cup(-1,1)},\\
\left(P^{(-1)}(z)\right)^{-1}, & z\in U^{(-1)},\\
\left(P^{(1)}(z)\right)^{-1}, & z\in U^{(1)},
\end{cases}
\end{equation}
which tends to the identity matrix as $z\to\infty$ and which has
jump matrices $G_{R}$ on the contour $\Gamma_{R}$ that tend to identity
as $n\to\infty$: 
\begin{equation}
G_{R}(z)=\begin{cases}
P_{+}^{(\infty)}(z)\left(P_{-}^{(\infty)}(z)\right)^{-1}, & z\in\left(-1,1\right)\setminus\overline{U^{-1}\cup U^{1}},\\
P^{(\infty)}(z)\left(P^{(-1)}(z)\right)^{-1}, & z\in\partial U^{-1},\\
P^{(\infty)}(z)\left(P^{(1)}(z)\right)^{-1}, & z\in\partial U^{1},
\end{cases}=I+\bigO\left(\frac{1}{n^{1/3}(1+|z|)^{p}}\right),\;z\in\Gamma_{R}.\label{eq: G_R is everywhere close to I}
\end{equation}
This, in turn, implies (see \cite{DeiftZhou93}) that for sufficiently
large $n$ 
\begin{equation}
R(z)=I+\mathcal{O}\left(\frac{1}{(1+\left|z\right|)n^{1/3}}\right),\mbox{ uniformly for }z\in\mathbb{C}\setminus\Gamma_{R},\label{asymptotics R}
\end{equation}
where $\Gamma_{R}$ is the jump contour for $R$.

\section[Asymptotics of \texorpdfstring{$u(\tau;\kappa)$ as $\tau\rightarrow+\infty$}{u(t;k)
as t->+infinity}]{\label{sec: as u+}Asymptotics of \boldmath{$u(\tau;\kappa)$
as $\tau\rightarrow+\infty$}}

From Section \ref{section: Lax}, we know that $y(\tau;\beta)$ defined
by \eqref{eq: def of y(tau)} solves the Painlevé XXXIV equation \eqref{eq: 34th painleve for y},
and this implies that $u$ defined by $u(\tau;\kappa)^{2}=y(\tau;\beta)$
(with relation \eqref{def beta} between $\kappa$ and $\beta$) solves
the Painlevé II equation \eqref{P2}. We now proceed with proving
the asymptotics of $y(\tau;\beta)=u(\tau;\kappa)^{2}$ as stated in
\eqref{u+}. In this section it is assumed that $\tau>0$. The analysis
performed here is largely analogous to the one in \cite{ItsKuiOes09}
with one additional new technical feature -- the need to introduce
an additional triangular parametrix near the point $z=0$ of the discontinuity
of the (triangular) jump matrix (see Subsection \ref{sub: 4.3.2}).

\subsection{Rescaling and shift of the jump contour}

Introduce 
\begin{equation}
A(z)=\tau^{-\sigma_{3}/4}\Psi_{0}(\tau z;\tau).\label{eq: def of A+}
\end{equation}
One can easily check that it satisfies the following RH problem. 
\begin{itemize}
\item[(a)] $A:\mathbb{C}\setminus\Gamma_{\Psi_{0}}\rightarrow\mathbb{C}^{2\times2}$
is analytic. 
\item[(b)] $A$ has the same jump relations as $\Psi_{0}$. 
\item[(c)] $A(z)=\left(I+O\left(\frac{1}{z}\right)\right)z^{\sigma_{3}/4}\frac{1}{\sqrt{2}}\begin{pmatrix}1 & -i\\
-i & 1
\end{pmatrix}e^{-\tau^{3/2}\left(\frac{2}{3}z^{3/2}+z^{1/2}\right)\sigma_{3}}$ as $z\rightarrow\infty$. 
\item[(d)] $A(z)$ inherits its behavior at $z=0$ from $\Psi_{0}$ very easily.
\end{itemize}
From (\ref{eq:Limit definition of y(tau)}) we get 
\begin{equation}
y(\tau)=\frac{i}{\sqrt{\tau}}\lim_{z\rightarrow0}\left[z\frac{\mathrm{d}A(z)}{\mathrm{d}z}A(z)^{-1}\right]_{21}.\label{eq: y(tau) in terms of A+}
\end{equation}

We shall further write 
\begin{equation}
s=\tau^{3/2}.\label{eq:definition of s+}
\end{equation}
With respect to the domains defined in Fig.\ \ref{fig: RH for B},
define 
\begin{equation}
B(z)=\begin{cases}
A(z)\cdot\begin{pmatrix}1 & 0\\
1 & 1
\end{pmatrix}, & z\in II',\\
A(z)\cdot\begin{pmatrix}1 & 0\\
-1 & 1
\end{pmatrix}, & z\in III',\\
A(z), & z\in I\cup II''\cup III''\cup IV.
\end{cases}\label{eq: def of B+}
\end{equation}
\begin{figure}[h]
\centering{}\scalebox{0.8}{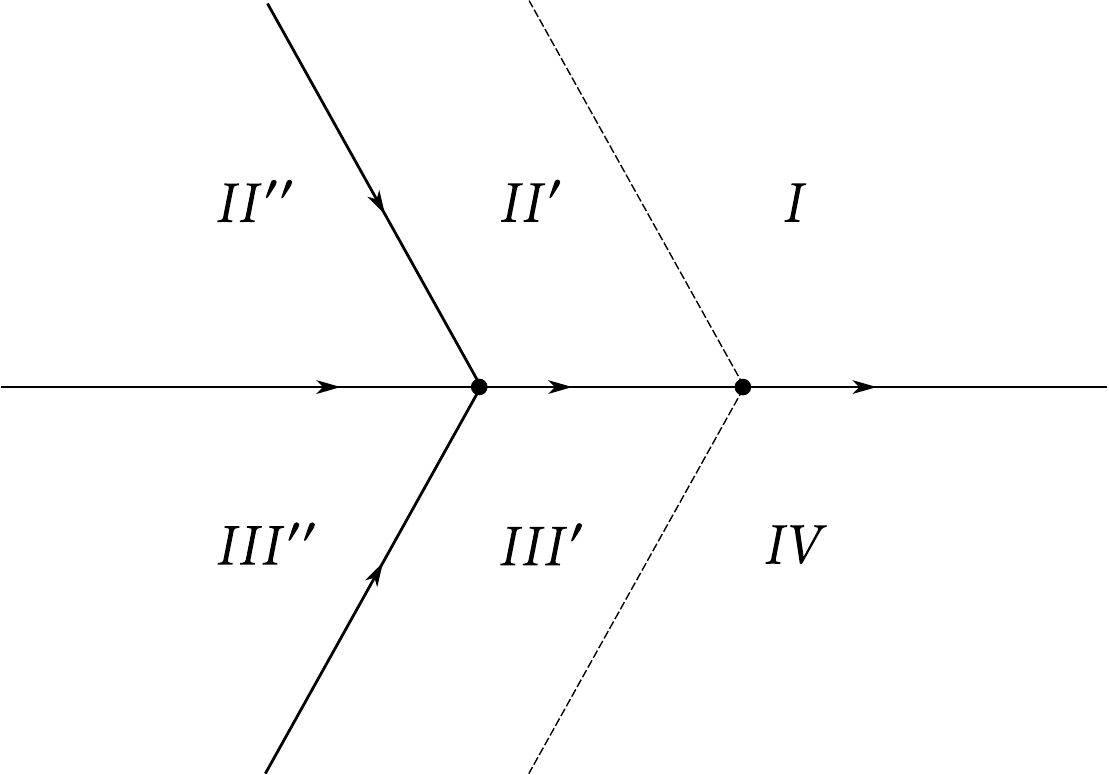}\protect\caption{The contours $\Gamma_{B}$ and the RH problem for $B(z)$.\label{fig: RH for B}}
\end{figure}

This function satisfies the following RH problem. 
\begin{itemize}
\item[(a)] $B:\mathbb{C}\setminus\Gamma_{B}\rightarrow\mathbb{C}^{2\times2}$
is analytic. 
\item[(b)] $B_{+}(z)=B_{-}(z)\cdot\begin{cases}
\begin{pmatrix}1 & e^{-2\pi i\beta}\\
0 & 1
\end{pmatrix}, & z\in\gamma_{B1},\\
\begin{pmatrix}1 & 0\\
1 & 1
\end{pmatrix}, & z\in\gamma_{B2}\cup\gamma_{B4},\\
\begin{pmatrix}0 & 1\\
-1 & 0
\end{pmatrix}, & z\in\gamma_{B3},\\
\begin{pmatrix}1 & 1\\
0 & 1
\end{pmatrix}, & z\in\gamma_{B5}.
\end{cases}$ 
\item[(c)] $B(z)=\left(I+O\left(\frac{1}{z}\right)\right)z^{\sigma_{3}/4}\frac{1}{\sqrt{2}}\begin{pmatrix}1 & -i\\
-i & 1
\end{pmatrix}e^{-s\left(\frac{2}{3}z^{3/2}+z^{1/2}\right)\sigma_{3}}$ as $z\to\infty$. 
\item[(d)] $B(z)$ has logarithmic behavior near $z=0$. Namely, 
\begin{equation}
B(z)=\widetilde{B}(z)\left(I+\frac{\kappa^{2}}{2\pi i}\begin{pmatrix}0 & 1\\
0 & 0
\end{pmatrix}\log z\right)M_{\pm},\quad z\in\mathbb{H}^{\pm},\label{eq: behavior at zero of B+}
\end{equation}
where $\widetilde{B}(z)$ is some analytic function, $M_{+}=I$ and
$M_{-}=\begin{pmatrix}1 & -e^{-2\pi i\beta}\\
0 & 1
\end{pmatrix}$. The logarithm is in its principle branch with branch cut along the
negative half axis.
\end{itemize}
Finally, the expression for $y(\tau)$ remains unchanged compared
to \eqref{eq: y(tau) in terms of A+}, 
\begin{equation}
y(\tau)=\frac{i}{\sqrt{\tau}}\lim_{z\rightarrow0}\left[z\frac{\mathrm{d}B(z)}{\mathrm{d}z}B(z)^{-1}\right]_{21}.\label{eq: y(tau) in terms of B+}
\end{equation}

This transformation is an important precursor of the introduction
of a new $g$-function that will allow us to ``undress'' the behavior
of the RH problem at infinity (for similar transitions see e.g. \cite{ItsKuiOes09}
and \cite{XuDaiZha14}).

\subsection{Normalization at infinity}

We now introduce the following $g$-function, 
\begin{equation}
\widehat{g}(z)=\frac{2}{3}(z+1)^{3/2},\quad-\pi<\arg(z+1)<\pi.\label{eq:g_m-function}
\end{equation}
Note that 
\begin{equation}
\widehat{g}(z)-\left(\frac{2}{3}z^{3/2}+z^{1/2}\right)=\frac{1}{4z^{1/2}}+\mathcal{O}\left(\frac{1}{z^{3/2}}\right)\mbox{ as }z\rightarrow\infty,\;z\notin(-\infty,-1].\label{eq: expansion of g_m(z)}
\end{equation}
Next, define 
\begin{equation}
C(z)=\begin{pmatrix}1 & -is/4\\
0 & 1
\end{pmatrix}B(z)e^{s\widehat{g}(z)\sigma_{3}}.\label{eq: def of C+}
\end{equation}
The constant prefactor in this definition is needed to conserve the
$O\left(\frac{1}{z}\right)$ term in the asymptotics as $z\rightarrow\infty$.
$C$ satisfies the following RH problem. 
\begin{itemize}
\item[(a)] $C:\mathbb{C}\setminus\Gamma_{B}\rightarrow\mathbb{C}^{2\times2}$
is analytic. 
\item[(b)] $C_{+}(z)=C_{-}(z)\cdot\begin{cases}
\begin{pmatrix}1 & e^{-2\pi i\beta}e^{-2s\widehat{g}(z)}\\
0 & 1
\end{pmatrix}, & z\in\gamma_{B1},\\
\begin{pmatrix}1 & 0\\
e^{2s\widehat{g}(z)} & 1
\end{pmatrix}, & z\in\gamma_{B2}\cup\gamma_{B4},\\
\begin{pmatrix}0 & 1\\
-1 & 0
\end{pmatrix}, & z\in\gamma_{B3},\\
\begin{pmatrix}1 & e^{-2s\widehat{g}(z)}\\
0 & 1
\end{pmatrix}, & z\in\gamma_{B5}.
\end{cases}$ 
\item[(c)] $C(z)=\left(I+\mathcal{O}\left(\frac{1}{z}\right)\right)z^{\sigma_{3}/4}\frac{1}{\sqrt{2}}\begin{pmatrix}1 & -i\\
-i & 1
\end{pmatrix}$ as $z\rightarrow\infty$, $z\notin\Gamma_{B}$. 
\item[(d)] $C(z)=\widetilde{C}(z)\left(I+\frac{\kappa^{2}}{2\pi i}\begin{pmatrix}0 & 1\\
0 & 0
\end{pmatrix}\log z\right)M_{\pm}e^{s\widehat{g}(z)\sigma_{3}},$ for $z\in\mathbb{C}^{\pm}$ near 0, where $\widetilde{C}$ is analytic
in a neighborhood of $0$ and $M_{\pm}$ are the same as before. The
branch of the logarithm is chosen as before.
\end{itemize}
From the definition of $C$, 
\begin{equation}
\left[z\frac{\mathrm{d}C(z)}{\mathrm{d}z}C(z)^{-1}\right]_{21}=\left[z\frac{\mathrm{d}B(z)}{\mathrm{d}z}B(z)^{-1}\right]_{21}+s\widehat{g}'(z)\left[zB(z)\sigma_{3}B(z)^{-1}\right]_{21}.\label{eq: useful equation for C+}
\end{equation}
The second term in this expression tends to zero as $z\rightarrow0$
due to the behavior of $B(z)$, hence 
\begin{equation}
y(\tau)=\frac{i}{\sqrt{\tau}}\lim_{z\rightarrow0}\left[z\frac{\mathrm{d}C(z)}{\mathrm{d}z}C(z)^{-1}\right]_{21}.\label{eq: y(tau) in terms of C(z)}
\end{equation}

\subsection{Construction of parametrices}

\subsubsection[Global Airy solution \texorpdfstring{$C^{(\Ai)}$}{C(Ai)}]{Global Airy solution \boldmath{$C^{(\Ai)}$}}

The jumps of $C(z)$ near $z=-1$ are very similar to the jumps of
the standard Airy RH problem. Let us look for a function $C^{(\Ai)}$
that satisfies the following RH problem. 
\begin{itemize}
\item[(a)] $C^{(\Ai)}:\mathbb{C}\setminus\Gamma_{B}\rightarrow\mathbb{C}^{2\times2}$
is analytic. 
\item[(b)] $C^{(\Ai)}(z)$ has the same jumps on $\Gamma_{B}\setminus[-1,+\infty)$
as $C(z)$ and its jump on $(-1,+\infty)$ is $\begin{pmatrix}1 & e^{-2s\widehat{g}(z)}\\
0 & 1
\end{pmatrix}$. 
\item[(c)] $C^{(\Ai)}(z)=\left(I+\mathcal{O}\left(\frac{1}{z}\right)\right)z^{\sigma_{3}/4}\frac{1}{\sqrt{2}}\begin{pmatrix}1 & -i\\
-i & 1
\end{pmatrix}$ as $z\rightarrow\infty$. 
\end{itemize}
We seek $C^{(\Ai)}$ in the form 
\begin{equation}
C^{(\Ai)}(z)=\widehat{C}^{(\Ai)}(z)e^{s\widehat{g}(z)\sigma_{3}}.\label{eq: formula for C^Ai}
\end{equation}
If we define an auxiliary matrix function (whose jumps are shown in
Fig. \ref{fig: Airy problem}) 
\begin{equation}
\begin{cases}
\Phi^{(\Ai)}= & \begin{pmatrix}-y_{1} & -y_{2}\\
-y_{1}' & -y_{2}'
\end{pmatrix}\;\mbox{in}\;II,\\
\Phi^{(\Ai)}= & \begin{pmatrix}-y_{2} & y_{1}\\
-y_{2}' & y_{1}'
\end{pmatrix}\;\mbox{in}\;III,
\end{cases}\qquad\begin{cases}
\Phi^{(\Ai)}= & \begin{pmatrix}y_{0} & -y_{2}\\
y_{0}' & -y_{2}'
\end{pmatrix}\;\mbox{in}\;I,\\
\Phi^{(\Ai)}= & \begin{pmatrix}y_{0} & y_{1}\\
y_{0}' & y_{1}'
\end{pmatrix}\;\mbox{in}\;IV,
\end{cases}\label{eq: solution Phi-Airy}
\end{equation}
where 
\begin{equation}
y_{0}(z)=\mathrm{Ai}(z),\quad y_{1}(z)=e^{2\pi i/3}\mathrm{Ai}(e^{2\pi i/3}z),\quad y_{2}(z)=e^{4\pi i/3}\mathrm{Ai}(e^{4\pi i/3}z),\label{eq: Airy model solution auxiliary defs}
\end{equation}
then a standard argument shows that $\widehat{C}^{(\Ai)}$ must have
the form 
\begin{equation}
\widehat{C}^{(\Ai)}(z)=\sqrt{2\pi}\begin{pmatrix}0 & -1\\
-i & 0
\end{pmatrix}\tau^{\sigma_{3}/4}\Phi^{(\Ai)}(\tau(z+1)),\label{eq:formula for C^Ai hat}
\end{equation}

This in particular implies a refined asymptotics for $C^{\left(\Ai\right)}$:

\begin{gather}
C^{(\Ai)}(z)=\left(I+\frac{m^{\Ai}}{z}+\mathcal{O}\left(\frac{1}{z^{2}}\right)\right)z^{\sigma_{3}/4}\frac{1}{\sqrt{2}}\begin{pmatrix}1 & -i\\
-i & 1
\end{pmatrix},\\
m^{\Ai}=\frac{\sigma_{3}}{4}+\frac{7i}{48s}\begin{pmatrix}0 & 1\\
0 & 0
\end{pmatrix}.\label{eq: def of m^Ai}
\end{gather}
Airy solutions like this are discussed in much detail, for example,
in \cite{DeKrMcVeZ99}.

\begin{figure}[h]
\centering{}\scalebox{0.8}{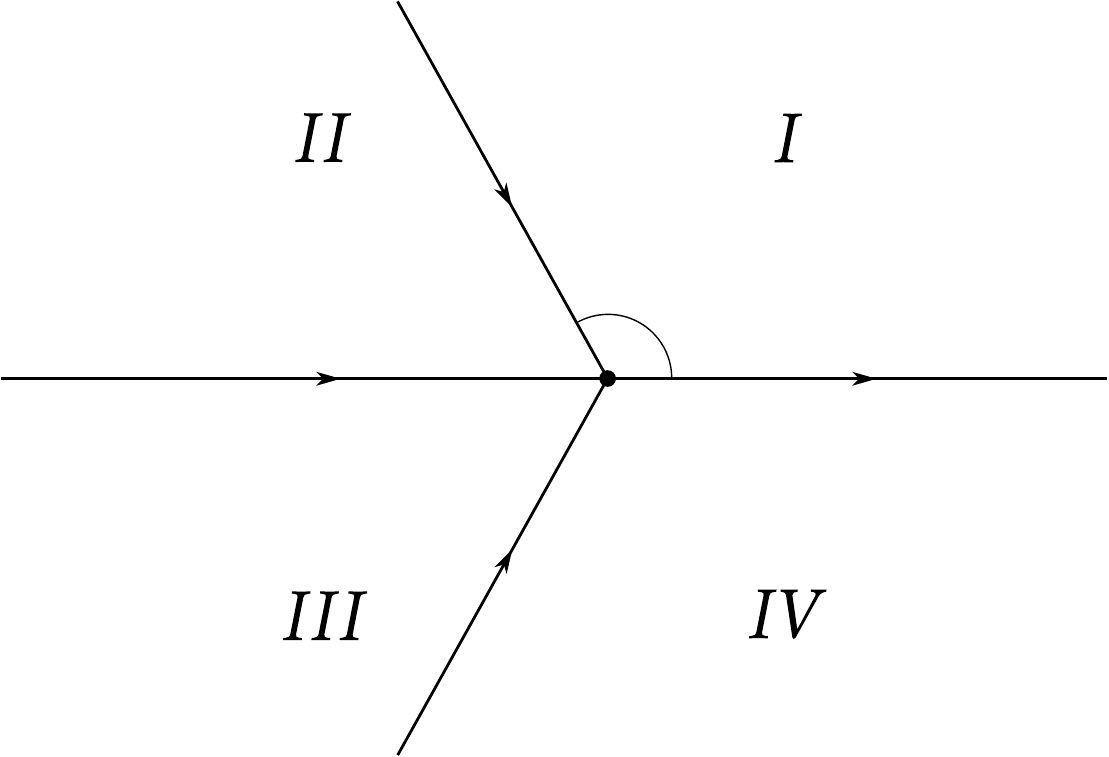}\caption{The standard Airy RH problem. The union of contours is referred to
as $\Gamma_{\Ai}$.\label{fig: Airy problem}}
\end{figure}

\subsubsection[Local solution \texorpdfstring{$C^{(0)}$}{C(0)}]{Local solution \boldmath{$C^{(0)}$}\label{sub: 4.3.2}}

We also need a local parametrix for $C$ near $z=0$. Let $U^{0}$
be a small open disk around $0$ of radius less than $1$, say, $1/2$.
Then we have to find the function $C^{(0)}$ which satisfies the following
RH problem. 
\begin{itemize}
\item[(a)] $C^{(0)}:\overline{U^{0}}\setminus[0,+\infty)\rightarrow\mathbb{C}^{2\times2}$
is analytic. 
\item[(b)] $C_{+}^{(0)}(z)=C_{-}^{(0)}(z)\begin{pmatrix}1 & (e^{-2\pi i\beta}-1)e^{-2s\widehat{g}(z)}\\
0 & 1
\end{pmatrix},\quad z\in(0,+\infty)\cap U^{0}$ (contour oriented to the right). 
\item[(c)] $C^{(0)}(z)=I+o\left(1\right)$ as $s\rightarrow\infty$, uniformly
on $\partial U^{0}$. 
\item[(d)] $C^{(0)}(z)\sim\left(I+\frac{\kappa^{2}}{2\pi i}\begin{pmatrix}0 & 1\\
0 & 0
\end{pmatrix}\log z\right)e^{s\widehat{g}(z)\sigma_{3}}$ as $z\rightarrow0$. The branch cut of the logarithm is along the
positive half axis.
\end{itemize}
Due to the simple algebraic structure of the jumps, this problem can
be solved exactly in terms of integrals of elementary functions. Namely,
the solution is 
\begin{equation}
C^{(0)}(z)=\begin{pmatrix}1 & {\displaystyle -\frac{\kappa^{2}}{2\pi i}\int_{0}^{1/2}\frac{e^{-2s\widehat{g}(z')}}{z'-z}\mathrm{d}z'}\\
0 & 1
\end{pmatrix},\;z\in\mathbb{C}\setminus\left[0,\frac{1}{2}\right].\label{eq: parametrix solution C^(0)+}
\end{equation}
This function clearly has the requested jumps and has the same general
logarithmic behavior near $z=0$. Moreover, this function satisfies
the matching condition on $\partial U^{0}$ and, in fact, with some
$c>0$ we have 
\begin{equation}
C^{(0)}(z)=I+\mathcal{O}\left(e^{-cs}\right)\mbox{ as }s\rightarrow\infty,\mbox{ uniformly on }\partial U^{0}.\label{eq: C^(0) is exponentially close to I}
\end{equation}

We will also need the fact that 
\begin{equation}
\lim_{z\rightarrow0}z\frac{\mathrm{d}C^{(0)}}{\mathrm{d}z}\left(C^{(0)}\right)^{-1}=\begin{pmatrix}0 & {\displaystyle \frac{\kappa^{2}}{2\pi i}e^{-2s\widehat{g}(0)}}\\
0 & 0
\end{pmatrix}.\label{eq: limit at z-0 for C^(0)}
\end{equation}

\subsection{Final transformation}

Using the functions built in the previous subsection, we can now perform
the final transformation of the RH analysis in the case where $\tau\rightarrow+\infty$.

Define 
\begin{equation}
D(z)=\begin{cases}
C(z)\cdot\left(C^{(0)}(z)\right)^{-1}\cdot\left(C^{(\Ai)}(z)\right)^{-1}, & z\in U^{0}\setminus\mathbb{R},\\
C(z)\cdot\left(C^{(\Ai)}(z)\right)^{-1}, & z\in\mathbb{C}\setminus\overline{U^{0}\cup\Gamma_{B}}.
\end{cases}\label{eq: def of D+}
\end{equation}
This function has the following properties. 
\begin{itemize}
\item[(a)] $D:\mathbb{C}\setminus\left(\left[\frac{1}{2},+\infty\right)\cup\partial U^{0}\right)\rightarrow\mathbb{C}^{2\times2}$
is analytic. 
\item[(b)] Assuming the counterclockwise orientation of $\partial U^{0}$, 
\begin{equation}
D_{+}(z)=D_{-}(z)\cdot\begin{cases}
C_{-}^{(\Ai)}(z)\cdot\begin{pmatrix}1 & -\kappa^{2}e^{-2s\widehat{g}(z)}\\
0 & 1
\end{pmatrix}\cdot\left(C_{-}^{(\Ai)}(z)\right)^{-1}, & z\in\left(\frac{1}{2},+\infty\right),\\
C^{(\Ai)}(z)\cdot\left(C^{(0)}(z)\right)^{-1}\cdot\left(C^{(\Ai)}(z)\right)^{-1}, & z\in\partial U^{0}.
\end{cases}\label{eq: jumps of D+}
\end{equation}

\item[(c)] $D(z)=I+\mathcal{O}\left(\frac{1}{z}\right)$, as $z\rightarrow\infty$. 
\end{itemize}
Using the asymptotic expansion for $C^{(\Ai)}$, it is easy to check
that, with some $c>0$, 
\begin{equation}
\left(D_{-}(z)\right)^{-1}D_{+}(z)=I+\mathcal{O}\left(e^{-cs|z|}\right)\mbox{ as }s\rightarrow\infty,\mbox{ uniformly for }z\in\left(\frac{1}{2},+\infty\right).\label{eq:jumps of D are exponentially close to I}
\end{equation}
As for the jump on $\partial U^{0}$, by virtue of (\ref{eq: C^(0) is exponentially close to I})
and the boundedness of $C^{(\Ai)}$, it is also close to the identity
matrix: 
\begin{gather}
\left(D_{-}(z)\right)^{-1}D_{+}(z)=C^{(\Ai)}(z)\cdot\left(I+\mathcal{O}\left(e^{-cs}\right)\right)\cdot\left(C^{(\Ai)}(z)\right)^{-1}=I+\mathcal{O}\left(e^{-cs}\right)\label{eq: jumps of D+ are close to 1}\\
\mbox{ as }s\rightarrow\infty,\mbox{ uniformly for }z\in\partial U^{0},\mbox{ with some }c>0.\nonumber 
\end{gather}

Using these estimates, in a standard way \cite{DeiftZhou93} one shows
that, for any $z\in\mathbb{C}\setminus\Gamma_{D}$, 
\begin{equation}
D(z)=I+\mathcal{O}\left(\frac{e^{-cs}}{1+|z|}\right)\mbox{ as }s\rightarrow\infty,\ c>0.\label{eq: estimate for D+}
\end{equation}
The error term is uniform on compact subsets of $\mathbb{C}\setminus\Gamma_{D}$.

\subsection[Asymptotics for $y$ and uniformity of error terms]{Asymptotics for \boldmath{$y$} and uniformity of error terms}

Following the transformations $\Phi\mapsto\Psi_{0}\mapsto A\mapsto B\mapsto C\mapsto D$
(eqs. \eqref{def Phi}, \eqref{eq: def of A+}, \eqref{eq: def of B+},
\eqref{eq: def of C+}, \eqref{eq: def of D+}) backwards, we can
recover the connection between the asymptotic expansions of $\Phi$
and $D$. Namely, for large $z\in\mathbb{C}\setminus\left(II'\cup III'\right)$,
we have 
\begin{equation}
\Phi(z)=\tau^{\sigma_{3}/4}D\left(\frac{z}{\tau}-1\right)\sqrt{2\pi}\begin{pmatrix}0 & -1\\
-i & 0
\end{pmatrix}\tau^{\sigma_{3}/4}\Phi^{(\Ai)}(z).\label{eq: Phi in terms of D+}
\end{equation}
Next, we write, as usual, 
\begin{gather}
D(z)=I+\frac{m^{D}}{z}+\frac{m^{D;2}}{z^{2}}+\mathcal{O}\left(\frac{1}{z^{3}}\right)\mbox{ as }z\rightarrow\infty,\\
\Phi(z)=\left(I+\frac{m^{\Phi}}{z}+\frac{m^{\Phi;2}}{z^{2}}+\mathcal{O}\left(\frac{1}{z^{3}}\right)\right)z^{\sigma_{3}/4}\frac{1}{\sqrt{2}}\begin{pmatrix}1 & -i\\
-i & 1
\end{pmatrix}e^{-\frac{2}{3}z^{3/2}\sigma_{3}}\mbox{ as }z\rightarrow\infty,\label{eq: new Phi expansion}
\end{gather}
Let us refer to the matrix coefficients in front of $z^{-k}$ in these
expansions as $m^{D;k}$ and $m^{\Phi;k}$ (they are only functions
of $\tau$). Using \eqref{eq: Phi in terms of D+}, we see that each
of the matrices $m^{\Phi;k}$ in the expansion for $\Phi$ is merely
a linear combination of a finite number of the matrix coefficients
$m^{D;k}$, with coefficients rational in $\tau^{1/4}$.

Using \eqref{eq: estimate for D+}, one shows that the matrices $m^{D;k}$
are exponentially small, 
\begin{equation}
m^{D;k}(\tau)=\mathcal{O}\left(e^{-c\tau^{3/2}}\right)\mbox{ as }\tau\rightarrow+\infty,\mbox{ with some }c>0
\end{equation}
for all $k$. It immediately follows that 
\begin{equation}
m^{\Phi;k}(\tau)=\widetilde{m}^{\Ai;k}+\mathcal{O}\left(\tau^{k+\frac{1}{2}}e^{-c\tau^{3/2}}\right)\mbox{ as }\tau\rightarrow+\infty,
\end{equation}
thus $m^{\Phi;k}$ are bounded at large $\tau$.

%In particular, this lets us calculate %\begin{equation}%m^{\Phi}(\tau)=\tau\cdot\tau^{\sigma_{3}/4}\left(m^{D}\left(\tau\right)+m^{\Ai}-\frac{\sigma_{3}}{4}\right)\tau^{-\sigma_{3}/4},%\end{equation}%where $m^{\Ai}$ was calculated in (\ref{eq: def of m^Ai}).%Therefore %\begin{equation}%m^{\Phi}(\tau)=\frac{7i}{48}\begin{pmatrix}0 & 1\\%0 & 0%\end{pmatrix}+\mathcal{O}\left(\tau^{3/2}e^{-c\tau^{3/2}}\right)\mbox{ as }\tau\rightarrow+\infty.%\end{equation}

These facts imply that the asymptotic expansion (\ref{eq: asymptotic condition for Phi(z)})
for $\Phi$ is \textit{uniform} for $\tau\in\left[\tau_{0},+\infty\right)$
for any $\tau_{0}\in\mathbb{R}$.

Since $C(z)=D(z)\cdot C^{(\Ai)}(z)\cdot C^{(0)}(z)$ in $U^{0}$ and
both $D(z)$ and $C^{(\Ai)}$ are bounded there, we get from (\ref{eq: y(tau) in terms of C(z)})
that 
\begin{equation}
y(\tau)=\frac{i}{\sqrt{\tau}}\lim_{z\rightarrow0}\left[z\frac{\mathrm{d}C(z)}{\mathrm{d}z}C(z)^{-1}\right]_{21}=\frac{\kappa^{2}e^{-2s\widehat{g}(0)}}{2\pi\sqrt{\tau}}\left[D(0)C^{(\Ai)}(0)\begin{pmatrix}0 & 1\\
0 & 0
\end{pmatrix}\left(D(0)C^{(\Ai)}(0)\right)^{-1}\right]_{21}\mbox{ for large }\tau>0.\label{eq: y(tau) in terms of D+}
\end{equation}
From (\ref{eq: formula for C^Ai}), (\ref{eq:formula for C^Ai hat}),
and the asymptotics for $\Phi^{(\Ai)}$ it follows that 
\begin{equation}
C^{(\Ai)}(0)=\sqrt{2\pi}\begin{pmatrix}0 & -1\\
-i & 0
\end{pmatrix}\tau^{\sigma_{3}/4}\Phi^{(\Ai)}(\tau)e^{\frac{2}{3}\tau^{3/2}\sigma_{3}}=\frac{1}{\sqrt{2}}\begin{pmatrix}1 & -i\\
-i & 1
\end{pmatrix}+\mathcal{O}\left(\frac{1}{\tau^{3/2}}\right)\mbox{ as }\tau\rightarrow+\infty.\label{eq: C^(Ai)(0) asymptotic}
\end{equation}
Taking into account that $D(0)$ converges to $I$ very rapidly and
using the definitions (\ref{eq:g_m-function}) and (\ref{eq:definition of s+}),
we arrive at the final expression for the asymptotic behavior of $y(\tau;\beta)$,
where we emphasize the dependence on $\beta$:

\begin{equation}
y(\tau;\beta)=e^{-\frac{4}{3}\tau^{3/2}}\left(\frac{\kappa^{2}}{4\pi\sqrt{\tau}}+\mathcal{O}\left(\frac{1}{\tau^{2}}\right)\right)=\left(\kappa\Ai\left(\tau\right)\right)^{2}\left(1+\mathcal{O}\left(\tau^{-2/3}\right)\right)\mbox{ as }\tau\rightarrow+\infty.\label{eq: asymptotic for y(tau) at +Infty}
\end{equation}

Together with $y(\tau;\beta)=u(\tau;\kappa)^{2}$, this proves \eqref{u+}.

\section{Asymptotics of the recurrence coefficients}

\label{section: asymptotics recur}

In this section, we compute asymptotics for the recurrence coefficients
$R_{n}$ and $Q_{n}$. Our calculations in this section are similar
to those in \cite{XuZhao11}, but we believe it is convenient for
the reader to give some details of the calculations because of differences
in notations.

\subsection[Auxiliary asymptotics of $G_{R}$]{Auxiliary asymptotics of \boldmath{$G_{R}$}}

We now need to compute the precise asymptotic behavior of $G_{R}$,
the jump matrix for $R$ (see \eqref{eq: G_R is everywhere close to I}).
Finding an explicit expression for the two leading terms in $G_{R}$
on $\partial U^{1}$ is the most sophisticated part of this calculation.
First, expand 
\begin{equation}
G_{R}(z)=P^{(\infty)}(z)\left(P^{(1)}(z)\right)^{-1}=P^{(\infty)}(z)e^{-\frac{2}{3}\zeta(z)^{3/2}\sigma_{3}}e^{i\pi\beta\sigma_{3}/2}\Phi\left(\zeta(z)\right)^{-1}E(z)^{-1}\mbox{ for }z\in\partial U^{1}\label{eq: G_R in terms of Psi}
\end{equation}
with $E(z)$ defined in (\ref{eq: definition of E(z)}). Recall that
we have the asymptotic expansion \eqref{eq: asymptotic condition for Phi(z)}
for $\Phi$, uniformly for $\tau\geqslant\tau_{0}$ with any fixed
$\tau_{0}\in\mathbb{R}$. Therefore, one verifies using \eqref{eq: definition of P(infty)}
that, as $n\to\infty$, 
\begin{multline}
G_{R}(z)=e^{i\pi\beta\sigma_{3}/2}\frac{1}{\sqrt{2}}\begin{pmatrix}1 & i\\
i & 1
\end{pmatrix}\left(\frac{z-1}{z+1}\right)^{\sigma_{3}/4}\left(I-\frac{m_{21}^{\Phi}}{\zeta(z)^{1/2}}\begin{pmatrix}0 & 0\\
1 & 0
\end{pmatrix}-\frac{m_{11}^{\Phi}}{\zeta(z)}\begin{pmatrix}1 & 0\\
0 & -1
\end{pmatrix}+\mathcal{O}\left(n^{-1}\right)\right)\;\times\\
\times\;\left(\frac{z-1}{z+1}\right)^{-\sigma_{3}/4}\frac{1}{\sqrt{2}}\begin{pmatrix}1 & -i\\
-i & 1
\end{pmatrix}e^{-i\pi\beta\sigma_{3}/2},
\end{multline}
which gives us the following expansion of $G_{R}$ as $n\to\infty$:
\begin{equation}
G_{R}(z)=I-G_{1}(z)n^{-1/3}+G_{2}(z)n^{-2/3}+\mathcal{O}\left(n^{-1}\right),\mbox{ uniformly for }z\in\partial U^{1}\mbox{ and }\tau\geqslant\tau_{0},\label{eq:expansion of G_R part 3}
\end{equation}
where 
\begin{equation}
G_{1}(z)=\frac{im_{21}^{\Phi}\left(z+1\right)^{1/2}}{2}\frac{n^{1/3}}{\left(z-1\right)^{1/2}\zeta(z)^{1/2}}\begin{pmatrix}1 & -ie^{i\pi\beta}\\
-ie^{-i\pi\beta} & -1
\end{pmatrix}\label{eq:def of G_1(z)}
\end{equation}
and 
\begin{equation}
G_{2}(z)=\frac{im_{11}^{\Phi}n^{2/3}}{\zeta(z)}\begin{pmatrix}0 & e^{i\pi\beta}\\
-e^{i\pi\beta} & 0
\end{pmatrix}.\label{eq:def of G_2(z)}
\end{equation}

\subsection[Asymptotics of $R_{n}$]{Asymptotics of \boldmath{$R_{n}$}}

We can use the following simple identity for the recurrence coefficient
$R_{n}$ defined in (\ref{recur}) (see e.g. \cite{Deift99}): 
\begin{equation}
R_{n}=m_{12}^{Y}m_{21}^{Y},\label{eq: R_n}
\end{equation}
where the matrix $m^{Y}$ is defined in terms of the large $z$ expansion
of $Y$: 
\begin{equation}
Y(z)=\left(I+\frac{m^{Y}(t)}{z}+\mathcal{O}\left(\frac{1}{z^{2}}\right)\right)z^{n\sigma_{3}}.\label{eq:definition of mY}
\end{equation}
In order to compute $m^{Y}$, we will need similar large $z$ expansions
for the following functions 
\begin{align}
R(z) & =I+\frac{m^{R}(t)}{z}+\mathcal{O}\left(\frac{1}{z^{2}}\right),\label{eq:def of mR}\\
P^{(\infty)} & =I+\frac{m^{\infty}(t)}{z}+\mathcal{O}\left(\frac{1}{z^{2}}\right),\label{eq:def of m-Infty}\\
g(z) & =\log z-\frac{1}{8z^{2}}+\mathcal{O}\left(\frac{1}{z^{4}}\right).\label{eq:expansion of g(z)}
\end{align}
Unfolding the transformations $Y\mapsto T\mapsto S\mapsto R$ at large
$z$, we obtain the identity 
\begin{equation}
m^{Y}=\sqrt{2n}e^{nl\sigma_{3}/2}\left(2n\right)^{n\sigma_{3}/2}\left(m^{R}+m^{\infty}\right)\left(2n\right)^{-n\sigma_{3}/2}e^{-nl\sigma_{3}/2}.\label{eq: mY in terms of mInf and mR}
\end{equation}

Since we can reformulate the RH problem for $R$ in terms of an integral
equation 
\begin{equation}
R_{-}(z)=I+\frac{1}{2\pi i}\int_{\Gamma_{R}}\frac{R_{-}(z)\left(G_{R}(z')-I\right)}{z'-z}\mathrm{d}z',\label{eq: integral equation for R}
\end{equation}
we have 
\begin{equation}
m^{R}=-\frac{1}{2\pi i}\int_{\Gamma_{R}}R_{-}(z')\left(G_{R}(z')-I\right)\mathrm{d}z'.\label{eq: mR part 1}
\end{equation}
Next we iterate the integral equation to find an asymptotic expansion
for $R_{-}$ as $n\to\infty$. Given that integration over the contours
other than $\partial U^{1}$ gives only a $\mathcal{O}\left(n^{-1}\right)$
contribution (because the jump matrix $G_{R}=I+\bigO(n^{-1})$ on
$\Gamma_{R}\setminus\partial U^{1}$), the large $n$ expansion for
$m^{R}$ is 
\begin{equation}
m^{R}=-\frac{1}{2\pi i}\ointctrclockwiseop_{\partial U^{1}}\left(G_{R}(z')-I\right)\mathrm{d}z'-\frac{1}{2\pi i}\ointctrclockwiseop_{\partial U^{1}}\rho_{1}(z')\left(G_{R}(z')-I\right)\mathrm{d}z'+\mathcal{O}\left(n^{-1}\right).
\end{equation}

We can now substitute the asymptotics (\ref{eq:expansion of G_R part 3})
to get, after a straightforward calculation, 
\begin{equation}
m^{R}=n^{-1/3}\res_{z=1}G_{1}(z)-n^{-2/3}\res_{z=1}G_{2}(z)+n^{-2/3}\res_{z=1}G_{1}(z)\cdot\res_{z=1}\frac{G_{1}(z)}{z-1}+\mathcal{O}\left(n^{-1}\right),\label{eq: mR part 3}
\end{equation}
as $n\to\infty$. Now, from (\ref{eq:def of G_1(z)}) and the expansions
\eqref{eq: expansion zeta}, \eqref{eq: expansion tau} we find that,
as $n\to\infty$, 
\begin{equation}
\res_{z=1}G_{1}(z)=\frac{im_{21}^{\Phi}(t)}{2}\begin{pmatrix}1 & -ie^{i\pi\beta}\\
-ie^{-i\pi\beta} & -1
\end{pmatrix}+\mathcal{O}\left(n^{-2/3}\right)
\end{equation}
and 
\begin{equation}
\res_{z=1}\frac{G_{1}(z)}{z-1}=\frac{im_{21}^{\Phi}(t)}{10}\begin{pmatrix}1 & -ie^{i\pi\beta}\\
-ie^{-i\pi\beta} & -1
\end{pmatrix}+\mathcal{O}\left(n^{-2/3}\right),
\end{equation}
as well as, from (\ref{eq:def of G_2(z)}), 
\begin{equation}
\res_{z=1}G_{2}(z)=\frac{im_{11}^{\Phi}(t)}{2}\begin{pmatrix}0 & e^{i\pi\beta}\\
-e^{-i\pi\beta} & 0
\end{pmatrix}+\mathcal{O}\left(n^{-2/3}\right).
\end{equation}
Note that $\res_{z=1}G_{1}$ is nilpotent, thus the third term in
(\ref{eq: mR part 3}) is negligible. Furthermore, from the relations
(\ref{eq: m in terms of mPhi}) and (\ref{eq:m11 through m21}), we
find 
\begin{equation}
m_{11}^{\Phi}=\frac{1}{2}\left(m_{21}^{\Phi}\right)^{2}-\frac{i}{2}\left(m_{21}^{\Phi}\right)'.\label{eq: m_11^Phi in terms of m_21^Phi}
\end{equation}
Substituting all previous results into (\ref{eq: mR part 3}) we obtain
the final formula 
\begin{equation}
m^{R}=\frac{im_{21}^{\Phi}(t)}{2}\begin{pmatrix}1 & -ie^{i\pi\beta}\\
-ie^{-i\pi\beta} & -1
\end{pmatrix}n^{-1/3}-\frac{im_{11}^{\Phi}(t)}{2}\begin{pmatrix}0 & e^{i\pi\beta}\\
-e^{-i\pi\beta} & 0
\end{pmatrix}n^{-2/3}+\mathcal{O}\left(n^{-1}\right).\label{eq: mR final}
\end{equation}
The second matrix in \eqref{eq: mY in terms of mInf and mR}, $m^{\infty}$,
can be easily found from (\ref{eq: definition of P(infty)}): 
\begin{equation}
m^{\infty}=\frac{i}{2}\begin{pmatrix}0 & e^{i\pi\beta}\\
-e^{-i\pi\beta} & 0
\end{pmatrix}.\label{eq: m-infty}
\end{equation}
All the operations performed to obtain the asymptotics of $m^{R}$
from the asymptotics of $\Phi$ preserve the uniformity in $\tau\in\left[\tau_{0},\infty\right)$,
or equivalently $t\in\left[t_{0},\infty\right)$, for any $\tau_{0},t_{0}\in\mathbb{R}$.

By substituting $m^{R}$ and $m^{\infty}$ into (\ref{eq: mY in terms of mInf and mR})
we find the large $n$ expansion for $R_{n}$, 
\begin{multline}
R_{n}=m_{12}^{Y}m_{21}^{Y}=2n\left(m_{12}^{R}+m_{12}^{\infty}\right)\left(m_{21}^{R}+m_{21}^{\infty}\right)=\\
=2n\left(\frac{m_{21}^{\Phi}}{2}n^{-1/3}-\frac{im_{11}^{\Phi}}{2}n^{-2/3}+\frac{i}{2}+\mathcal{O}\left(n^{-1}\right)\right)\left(\frac{m_{21}^{\Phi}}{2}n^{-1/3}+\frac{im_{11}^{\Phi}}{2}n^{-2/3}-\frac{i}{2}+\mathcal{O}\left(n^{-1}\right)\right),
\end{multline}
which, by (\ref{eq: m_11^Phi in terms of m_21^Phi}), simplifies to
\begin{equation}
R_{n}(\lambda_{0};\beta)=\frac{n}{2}-\frac{y(t;\beta)}{2}n^{1/3}+\mathcal{O}\left(1\right)\mbox{ as }n\rightarrow\infty,\mbox{ for all }t\in\mathbb{R},\ \mbox{ uniformly for }t\geqslant t_{0},\label{eq: final R_n asymptotics}
\end{equation}
since $\left(-im_{21}^{\Phi}\right)'=y$. This result holds for all
$t\in\mathbb{R}$ and $\beta$ such that $|\Re\beta|<1/2$. This asymptotic
series formally matches the classical Hermite recurrence coefficient
asymptotics when $\lambda>1$ ($t\rightarrow+\infty)$ and the non-critical
asymptotics from \cite{ItsKras08} when $\lambda<1$ ($t\rightarrow-\infty)$.

\subsection[Asymptotics of \texorpdfstring{$p_{n}\left(\lambda_{0}\right)$}{pn(lambda0)}
and $Q_{n}$]{Plancherel-Rotach type formula and asymptotics for \boldmath{$Q_{n}$}}

We can express the orthogonal polynomial $p_{n}$ in terms of the
RH solution $Y$, 
\begin{equation}
p_{n}\left(\lambda\sqrt{2n}\right)=Y_{11}\left(\lambda\sqrt{2n}\right)=\lim_{z\rightarrow\lambda}\left(2n\right)^{n/2}S_{11}\left(z\right)e^{ng_{+}\left(z\right)},
\end{equation}
where the limit for $S$ is taken for $z$ approaching $\lambda=1+\frac{t}{2}n^{-2/3}$
from the upper half plane, outside the lens-shaped region in Fig.
\ref{fig:The-opening-of lenses}. If $z$ lies in this region and
$z\in U^{1}$ (the small disk around $1$ in which the local parametrix
$P^{(1)}$ was constructed), then we can unwind the transformations
$S\mapsto P^{(1)}\mapsto\Phi\mapsto\Psi_{0}$ to obtain 
\begin{equation}
S(z)=R\left(z\right)P^{(1)}\left(z\right)=R\left(z\right)E\left(z\right)\begin{pmatrix}1 & {\displaystyle -\frac{i\tau^{2}}{4}}\\
0 & 1
\end{pmatrix}\Psi_{0}\left(\zeta\left(z\right)-\tau\right)e^{-i\pi\beta\sigma_{3}/2}e^{\frac{2}{3}\zeta(z)^{3/2}\sigma_{3}}.
\end{equation}
In order to compute the limit $z\to\lambda$, we need to use the small
$\xi$ expansion of $\Psi_{0}$ in sector $I$ given in \eqref{eq: model behavior at zero}.
After a straightforward calculation, using also \eqref{asymptotics R}
and \eqref{eq: expansion tau}, we get 
\begin{equation}
\lim_{z\rightarrow\lambda}S_{11}\left(z\right)e^{ng_{+}\left(z\right)}=in^{1/6}c\left(t\right)e^{-\frac{2}{3}is+ng_{+}(\lambda)}\left(1+\mathcal{O}\left(n^{-1/3}\right)\right),\qquad n\to\infty.
\end{equation}
Therefore, 
\begin{equation}
p_{n}\left(\lambda\sqrt{2n}\right)=\left(2n\right)^{n/2}e^{-\frac{2}{3}is+ng_{+}(\lambda)}ic(t)n^{1/6}\left(1+\mathcal{O}\left(n^{-1/3}\right)\right),\qquad n\to\infty.
\end{equation}
From the Lax pair identity (\ref{eq: connection c^2 to m_21}) and
$y(t;\beta)=u(t;\kappa)^{2}$, 
\begin{equation}
\left(ic\left(t;\beta\right)\right)^{2}=\frac{2\pi u\left(\tau;\kappa\right)^{2}}{\kappa^{2}}\label{relation c u 1}
\end{equation}
and 
\begin{equation}
ic\left(t;\beta\right)=\pm\frac{\sqrt{2\pi}u\left(\tau;\kappa\right)}{\kappa}.\label{sign}
\end{equation}
The right hand side does not depend on the sign of $\kappa$ (indeed,
changing $\kappa$ to $-\kappa$ changes $u$ to $-u$), and we can
verify which sign is correct using the asymptotics for $c$ as $t\to\infty$.
Since 
\[
c\left(\tau\right)=\lim_{z\rightarrow0}\left(\Psi_{0}\left(z\right)\right)_{21},
\]
working backwards along the transformations $\Psi_{0}\mapsto A\mapsto B\mapsto C\mapsto D$
for both $\tau\rightarrow\pm\infty$, we can easily recover the asymptotics
for $c\left(\tau\right)$. It turns out that 
\begin{equation}
ic\left(\tau;\beta\right)=\frac{e^{-\frac{2}{3}\tau^{3/2}}}{\sqrt{2}\tau^{1/4}}\left(1+\mathcal{O}\left(\tau^{-2}\right)\right)\mbox{ as }\tau\rightarrow+\infty,\label{c as}
\end{equation}
which implies that the correct sign in \eqref{sign} is $+$. 
\begin{rem}
In the special case $\beta=0$, the model RH problem for $\Psi_{0}$
reduces to the Airy model RH problem. In this case, we have 
\begin{equation}
ic(\tau;\beta=0)=\sqrt{2\pi}\Ai(\tau),\label{c Airy}
\end{equation}
which is indeed consistent with \eqref{c as}. 
\end{rem}
We thus have 
\begin{equation}
p_{n}\left(\lambda\sqrt{2n}\right)=\frac{\sqrt{2\pi}}{\kappa}(2n)^{n/2}e^{-\frac{2}{3}is+ng_{+}(\lambda)}u(t;\kappa)n^{1/6}\left(1+\mathcal{O}\left(n^{-1/3}\right)\right),\qquad n\to\infty.
\end{equation}

Now we need the expansion of $g(z)$ near $z=1$: 
\begin{equation}
g(z)=\frac{1}{2}-\log2+2\left(z-1\right)-\frac{2}{3}2^{3/2}\left(z-1\right)^{3/2}+\mathcal{O}\left(z-1\right)^{2}\mbox{ as }z\rightarrow1.
\end{equation}
Substituting $\lambda=1+\frac{t}{2}n^{-2/3}$, we have 
\begin{equation}
2ng_{+}\left(\lambda\right)=n-2n\log2+2tn^{1/3}+\frac{4}{3}is+\mathcal{O}\left(n^{-1/3}\right)\mbox{ as }n\rightarrow\infty.
\end{equation}

This gives us the asymptotics as $n\to\infty$ of the polynomials
$p_{n}$ near the critical point, 
\begin{equation}
p_{n}\left(\lambda\sqrt{2n}\right)=\frac{\sqrt{2\pi}}{\kappa}\left(\frac{ne}{2}\right)^{n/2}n^{1/6}e^{tn^{1/3}}u\left(t;\kappa\right)\left(1+\mathcal{O}\left(n^{-1/3}\right)\right).\label{eq: Plancherel-Rotach type formula}
\end{equation}

By multiplying the recurrence relation (\ref{recur}) by $p_{n}(x)w(x)$
and integrating, we find 
\begin{equation}
Q_{n}=-h_{n}^{-1}p_{n}\left(\lambda\sqrt{2n}\right)^{2}e^{-2n\lambda^{2}}\sinh\left(i\pi\beta\right).\label{id Qn}
\end{equation}
Note that 
\begin{multline}
h_{n}=-\lim_{z\rightarrow\infty}2\pi i\,Y_{21}\left(z\sqrt{2n}\right)\left(z\sqrt{2n}\right)^{n+1}=-\lim_{z\rightarrow\infty}2\pi i\sqrt{2n}\left(2n\right)^{n}e^{nl}z\,S_{12}(z)=\\
=-2\pi i\sqrt{2n}\left(2n\right)^{n}e^{nl}\left(m_{12}^{\infty}+m_{12}^{R}\right),
\end{multline}
thus the following large $n$ asymptotics hold for the normalizing
coefficients $h_{n}$ 
\begin{gather}
h_{n}=\frac{\pi\sqrt{2n}n^{n}}{2^{n}e^{n}}e^{i\pi\beta}\left(1-im_{21}^{\Phi}(t)n^{-1/3}-m_{11}^{\Phi}(t)n^{-2/3}+\mathcal{O}\left(n^{-1}\right)\right).\label{as hn}
\end{gather}
This proves \eqref{eq: h-n asymptotics}. Equivalently, this result
can be deduced from the identity 
\begin{equation}
h_{n}=\frac{H_{n+1}}{H_{n}},
\end{equation}
expressing $h_{n}$ as a ratio of two Hankel determinants, together
with the asymptotics \eqref{hankelas}. Substituting \eqref{as hn}
and \eqref{PR as} in \eqref{id Qn}, we obtain (\ref{Qnas}).

Lastly, we note that we can easily obtain the asymptotics of the coefficients
in \eqref{as hn} as $t\to-\infty$. Formally this can be done by
computing an antiderivative of the asymptotics of $y\left(t\right)$.
The following asymptotics were obtained rigorously by solving the
RH problem in the limit $t\to-\infty$. For $\left|\Re\beta\right|<1/2$,

\begin{multline}
-im_{21}^{\Phi}\left(\tau;\beta\right)=-2i\beta\sqrt{-\tau}-\frac{1}{4i\left(-\tau\right)}\left(\frac{\Gamma(1-\beta)}{\Gamma(\beta)}e^{i\theta(\tau;\beta)}-\frac{\Gamma(1+\beta)}{\Gamma(-\beta)}e^{-i\theta(\tau;\beta)}\right)-\\
-\frac{3\beta^{2}}{2\left(-\tau\right)}+\mathcal{O}\left(\frac{1}{\left(-\tau\right)^{5/2-3\left|\Re\beta\right|}}\right),\mbox{ as }\tau\rightarrow-\infty.\label{eq: m21Phi asymptotic for general beta-1}
\end{multline}

When $\beta=i\kappa$, $\kappa\in\mathbb{R}$, this becomes

\begin{multline}
-im_{21}^{\Phi}\left(\tau;i\kappa\right)=2\kappa\sqrt{-\tau}+\frac{\kappa}{2\left(-\tau\right)}\cos\left(\frac{4}{3}(-\tau)^{3/2}+3\kappa\log(-\tau)+6\kappa\log2-2\arg\Gamma(i\kappa)\right)+\\
+\frac{3\kappa^{2}}{2\left(-\tau\right)}+\mathcal{O}\left(\frac{1}{\left(-\tau\right)^{5/2}}\right),\mbox{ as }\tau\rightarrow-\infty.\label{eq: asymptotic of -im21^Phi real case-1}
\end{multline}

When $\beta=1/2+i\gamma$, $\gamma\in\mathbb{R}$, we have
\begin{equation}
-im_{21}^{\Phi}\left(\tau;\frac{1}{2}+i\gamma\right)=\sqrt{-\tau}\left(2\gamma-\tg\left(\frac{\widetilde{\theta}}{2}\right)\right)+\mathcal{O}\left(\frac{1}{\tau}\right),\mbox{ as }\tau\rightarrow-\infty.\label{eq: m21Phi asymptotic for 1/2-1}
\end{equation}
These formulas complement Theorem \ref{theorem rec}.

\section{Hankel determinants: alternative proof of Theorem \ref{theorem hankel}}

\label{section: Hankel}

\subsection{Differential identity}

Here, we derive a differential identity for the logarithm of the Hankel
determinant $H_{n}(\lambda_{0},\beta)$. It is expressed in terms
of $Y$ defined in \eqref{eq: explicit solution of the OP RH problem}. 
\begin{prop}
We have 
\begin{equation}
\frac{\partial}{\partial\lambda_{0}}\log H_{n}(\lambda_{0},\beta)=\frac{1}{\pi}\sin\pi\beta\,\left(Y^{-1}Y'\right)_{21}(\lambda_{0})e^{-\lambda_{0}^{2}}.\label{diff id}
\end{equation}
Here $'$ is the derivative of $Y(z)$ with respect to $z$. \end{prop}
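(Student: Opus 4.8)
The natural route is through the moment matrix. Write $H_n(\lambda_0,\beta)=\det M$ with $M=(\mu_{j+k})_{j,k=0}^{n-1}$ and $\mu_m=\int_{-\infty}^{\infty}x^m w(x)\,\mathrm dx$. Since the only $\lambda_0$-dependence of the weight \eqref{weight} is the position of its jump, differentiating $\mu_m=e^{i\pi\beta}\int_{-\infty}^{\lambda_0}x^m e^{-x^2}\mathrm dx+e^{-i\pi\beta}\int_{\lambda_0}^{\infty}x^m e^{-x^2}\mathrm dx$ gives $\partial_{\lambda_0}\mu_m=(e^{i\pi\beta}-e^{-i\pi\beta})\lambda_0^m e^{-\lambda_0^2}=2i\sin(\pi\beta)\,\lambda_0^m e^{-\lambda_0^2}$. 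Hence $\partial_{\lambda_0}M=2i\sin(\pi\beta)\,e^{-\lambda_0^2}\,vv^{T}$ is rank one, with $v=(1,\lambda_0,\dots,\lambda_0^{n-1})^{T}$, and Jacobi's formula yields
\[
\frac{\partial}{\partial\lambda_0}\log H_n(\lambda_0,\beta)=\tr\!\left(M^{-1}\partial_{\lambda_0}M\right)=2i\sin(\pi\beta)\,e^{-\lambda_0^2}\,v^{T}M^{-1}v .
\]

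Next I would identify $v^{T}M^{-1}v$ with the Christoffel--Darboux kernel on the diagonal. Setting $K_n(x,y)=\sum_{k=0}^{n-1}p_k(x)p_k(y)/h_k$, the identity $K_n(x,y)=\sum_{j,k=0}^{n-1}x^j(M^{-1})_{jk}y^k$ follows at once from the reproducing property $\int K_n(x,y)x^m w(x)\,\mathrm dx=y^m$ for $0\le m\le n-1$ together with $MM^{-1}=I$; in particular $v^{T}M^{-1}v=K_n(\lambda_0,\lambda_0)$. It then remains to translate $K_n$ into the language of $Y$. Using $\det Y\equiv 1$, so that $Y^{-1}=\begin{pmatrix}Y_{22}&-Y_{12}\\-Y_{21}&Y_{11}\end{pmatrix}$, and the explicit form \eqref{eq: explicit solution of the OP RH problem} with $Y_{11}=p_n$, $Y_{21}=-2\pi i h_{n-1}^{-1}p_{n-1}$, a short computation combined with the Christoffel--Darboux formula gives $\bigl(Y(y)^{-1}Y(x)\bigr)_{21}=-Y_{21}(y)Y_{11}(x)+Y_{11}(y)Y_{21}(x)=2\pi i(x-y)K_n(x,y)$. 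Crucially, only the first column of $Y$ and its $x$-derivative enter, and that column is entire, so the fact that $\lambda_0$ lies on the jump contour of $Y$ is irrelevant; taking the confluent limit $x\to y=\lambda_0$ produces $K_n(\lambda_0,\lambda_0)=\tfrac{1}{2\pi i}\bigl(Y^{-1}Y'\bigr)_{21}(\lambda_0)$.

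Combining the three displays gives
\[
\frac{\partial}{\partial\lambda_0}\log H_n(\lambda_0,\beta)=2i\sin(\pi\beta)\,e^{-\lambda_0^2}\cdot\frac{1}{2\pi i}\bigl(Y^{-1}Y'\bigr)_{21}(\lambda_0)=\frac1\pi\sin(\pi\beta)\,\bigl(Y^{-1}Y'\bigr)_{21}(\lambda_0)\,e^{-\lambda_0^2},
\]
which is \eqref{diff id}. There is no genuine analytic obstacle here: the argument is a sequence of standard identities. The one step that demands care is the bookkeeping of constants and signs in passing from the moment-matrix derivative to the $Y$-kernel identity (and checking that the confluent limit is legitimate because $Y_{11},Y_{21}$ are entire); everything else is routine. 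An alternative, slightly longer, derivation would differentiate $\log H_n=\sum_{k=0}^{n-1}\log h_k$ termwise using $\partial_{\lambda_0}h_k=2i\sin(\pi\beta)\,e^{-\lambda_0^2}p_k(\lambda_0)^2$, which telescopes to the same kernel expression; I would present the moment-matrix version as the cleanest.
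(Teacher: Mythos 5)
Your proof is correct but follows a genuinely different route from the paper's. The paper invokes a general differential identity for Hankel determinants from Krasovsky (their equation \eqref{diff id1}), expressed through the normalized polynomials $P_k=\kappa_kp_k$ and their $\lambda_0$-derivatives, and then works through the auxiliary integrals $J_1,J_2$ using integration by parts and orthogonality to arrive at a Wronskian which is finally translated into $Y$. You instead differentiate the moment matrix $M$ directly: the jump in the weight makes $\partial_{\lambda_0}M$ rank one, Jacobi's formula gives $2i\sin(\pi\beta)e^{-\lambda_0^2}\,v^{T}M^{-1}v$, and the quadratic form is identified with the Christoffel--Darboux kernel $K_n(\lambda_0,\lambda_0)$ via the reproducing property; the confluent-limit identity $K_n(\lambda_0,\lambda_0)=\tfrac{1}{2\pi i}\bigl(Y^{-1}Y'\bigr)_{21}(\lambda_0)$ (legitimate because only the entire first column of $Y$ enters, so the jump across $\lambda_0$ is immaterial) then closes the argument exactly as in the paper's last step. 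Your route is more elementary and self-contained, requiring no external citation, and the telescoping variant you sketch, $\partial_{\lambda_0}\log H_n=\sum_k\partial_{\lambda_0}h_k/h_k$ with $\partial_{\lambda_0}h_k=2i\sin(\pi\beta)e^{-\lambda_0^2}p_k(\lambda_0)^2$, is an equally legitimate third route arriving at the same kernel. One small bookkeeping observation: your computation gives $h_{n-1}K_n(\lambda_0,\lambda_0)=p_n'p_{n-1}-p_np_{n-1}'$, which is the negative of the Wronskian $p_np_{n-1}'-p_n'p_{n-1}$ displayed in the paper's intermediate equation \eqref{diff id0}; since both your final answer and the Proposition's identity \eqref{diff id} are mutually consistent, this appears to be a sign typo in the paper's intermediate line rather than a substantive issue.
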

\begin{proof}
We write $P_{k}=\kappa_{k}p_{k}$, $\kappa_{k}=\frac{1}{\sqrt{h_{k}}}>0$
for the normalized orthogonal polynomials with respect to the weight
$w$. We start from the general identity (equation (17) in \cite{Krasovsky07})
\begin{equation}
\frac{\partial}{\partial\lambda_{0}}\log H_{n}(\lambda_{0},n)=-n\frac{\dot{\kappa}_{n-1}}{\kappa_{n-1}}+\frac{\kappa_{n-1}}{\kappa_{n}}\left(J_{1}-J_{2}\right),\label{diff id1}
\end{equation}
where 
\begin{align}
J_{1} & =\int_{\mathbb{R}}\dot{P}_{n}(x)P_{n-1}'(x)w(x)\mathrm{d}x,\label{def J1}\\
J_{2} & =\int_{\mathbb{R}}P_{n}'(x)\dot{P}_{n-1}(x)w(x)\mathrm{d}x.\label{def J2}
\end{align}
Here and below dots denote $\lambda_{0}$-derivatives and primes denote
$x$-derivatives.

To compute $J_{1}$, we proceed as follows: by (\ref{def J1}) and
(\ref{weight}), we have 
\begin{multline}
J_{1}=\frac{\partial}{\partial\lambda_{0}}\left(\int_{\mathbb{R}}P_{n}(x)P_{n-1}'(x)w(x)\mathrm{d}x\right)-\int_{\mathbb{R}}P_{n}(x)\dot{P}_{n-1}'(x)w(x)\mathrm{d}x\\
+2i\sin(\pi\beta)P_{n}(\lambda_{0})P_{n-1}'(\lambda_{0})e^{-\lambda_{0}^{2}}.
\end{multline}
The first two terms vanish by orthogonality, and we obtain 
\begin{equation}
J_{1}=2i\sin(\pi\beta)(P_{n}P_{n-1}')(\lambda_{0})e^{-\lambda_{0}^{2}}.\label{J1}
\end{equation}

Similarly, by (\ref{def J2}) and (\ref{weight}), we have 
\begin{equation}
J_{2}=\frac{\partial}{\partial\lambda_{0}}\left(\int_{\mathbb{R}}P_{n}'(x)P_{n-1}(x)w(x)dx\right)-\int_{\mathbb{R}}\dot{P}_{n}'(x)P_{n-1}(x)w(x)dx+2i\sin(\pi\beta)P_{n}'(\lambda_{0})P_{n-1}(\lambda_{0})e^{-\lambda_{0}^{2}}.
\end{equation}
Using the orthogonality relations, we can compute the first two terms
and we get 
\begin{equation}
J_{2}=-n\frac{\kappa_{n}}{\kappa_{n-1}^{2}}\dot{\kappa_{n-1}}+2i\sin(\pi\beta)\left(P_{n}'P_{n-1}\right)(\lambda_{0})e^{-\lambda_{0}^{2}}.\label{J2}
\end{equation}

Substituting (\ref{J1}) and (\ref{J2}) into (\ref{diff id1}), we
get 
\begin{eqnarray}
\frac{\partial}{\partial\lambda_{0}}\log H_{n}(\lambda_{0},\beta) & = & \frac{2i\kappa_{n-1}}{\kappa_{n}}\left(P_{n}P_{n-1}'-P_{n}'P_{n-1}\right)(\lambda_{0})\sin(\pi\beta)e^{-\lambda_{0}^{2}}\\
 & = & \frac{2i}{h_{n-1}}\left(p_{n}p_{n-1}'-p_{n}'p_{n-1}\right)(\lambda_{0})\sin(\pi\beta)e^{-\lambda_{0}^{2}},\label{diff id0}
\end{eqnarray}
and using (\ref{eq: explicit solution of the OP RH problem}), we
obtain (\ref{diff id}). 
\end{proof}

\subsection[Asymptotics for the log. derivative of \texorpdfstring{$H_{n}$}{H\_n(lambda0,b)}]{Asymptotics for the logarithmic derivative of \boldmath{$H_{n}\left(\lambda_{0},\beta\right)$}}

Let $\lambda_{0}$ be of the form \eqref{scaling lambda0}. The results
in Section \ref{section: RH OP} are valid in the limit where $n\to\infty$,
uniformly for $t\geq t_{0}$ for any fixed $t_{0}\in\mathbb{R}$.

Inverting the transformations $Y\mapsto T$ and $T\mapsto S$ from
Section \ref{sub:Overview-of-transformations}, it follows from (\ref{diff id})
that 
\begin{equation}
\frac{\partial}{\partial\lambda_{0}}\ln H_{n}(\lambda_{0},\beta)=\frac{\sin\left(\pi\beta\right)}{\pi\sqrt{2n}}\,\lim_{z\to\lambda}\left(S^{-1}(z)S'(z)\right)_{21}
\end{equation}
and the limit is taken in the region outside the lens, see Fig. \ref{fig:The-opening-of lenses}.
By \eqref{eq: variational property of g 1}, \eqref{eq: jump of g}
and \eqref{eq: def of h(z)} we have $2g_{+}\left(\lambda\right)-2\lambda^{2}-l=2h\left(\lambda\right)$,
hence 
\[
\frac{\partial}{\partial\lambda_{0}}\ln H_{n}\left(\lambda_{0},\beta\right)=\frac{\sin\pi\beta}{\pi\sqrt{2n}}\lim_{z\to\lambda}\left(S^{-1}\left(z\right)S'(z)\right)_{21}e^{2nh\left(\lambda\right)}.
\]
Near $\lambda$, we have $S=RP^{(1)}$, and this implies 
\begin{multline}
\frac{\partial}{\partial\lambda_{0}}\ln H_{n}(\lambda_{0},\beta)=\frac{1}{\sqrt{2n}}\frac{1}{\pi}\sin\pi\beta\,\left(\left(P^{(1)}\right)^{-1}\left(P^{(1)}\right)'\right)_{21}(\lambda)\,e^{2nh\left(\lambda\right)}\\
+\frac{1}{\sqrt{2n}}\frac{1}{\pi}\sin\pi\beta\,\left(\left(P^{(1)}\right)^{-1}R^{-1}R'P^{(1)}\right)_{21}(\lambda)\,e^{2nh\left(\lambda\right)}.
\end{multline}
Since $R$ is close to $I$, the second term on the right hand side
is small. Using the asymptotics \eqref{asymptotics R} for $R$, we
obtain 
\begin{equation}
\frac{\partial}{\partial\lambda_{0}}\ln H_{n}(\lambda_{0},\beta)=\frac{1}{\sqrt{2n}}\frac{1}{\pi}\sin\pi\beta\,\left(\left(P^{(1)}\right)^{-1}\left(P^{(1)}\right)'\right)_{21}(\lambda)\,e^{2nh\left(\lambda\right)}+\mathcal{O}\left(n^{-1/2}\right)\,e^{2nh\left(\lambda\right)},
\end{equation}
as $n\to\infty$, uniformly for $t\geq t_{0}$. To compute the remaining
matrix entry, we can use \eqref{eq: P(1) parametrix}, which yields
\begin{multline}
\frac{\partial}{\partial\lambda_{0}}\ln H_{n}(\lambda_{0},\beta)=\zeta'(\lambda)\frac{1}{\sqrt{2n}}e^{-i\pi\beta}\frac{1}{\pi}\sin\pi\beta\,\left(\Psi_{0}^{-1}\Psi_{0}'\right)_{21}(0)\\
+\frac{1}{\sqrt{2n}}\frac{1}{\pi}\sin\pi\beta\,\left(\Phi^{-1}(\tau)E^{-1}(\lambda)E'(\lambda)\Phi(\tau)\right)_{21}(\lambda)+\mathcal{O}\left(n^{-1/2}\right)e^{2nh\left(\lambda\right)},
\end{multline}
as $n\to\infty$. By \eqref{eq: definition of E(z)}, the second term
in the right hand side is of order $\mathcal{O}(n^{-1/6}e^{-\tau})$
uniformly for $t\geq t_{0}$. The first term will be larger than the
last two: by \eqref{eq: expansion zeta}, we get 
\begin{equation}
\frac{\partial}{\partial\lambda_{0}}\log H_{n}(\lambda_{0},\beta)=\sqrt{2}n^{1/6}e^{-i\pi\beta}\frac{1}{\pi}\sin\pi\beta\,\left(\Psi_{0}^{-1}\Psi_{0}'\right)_{21}(0)+\mathcal{O}\left(n^{-1/6}e^{-\tau}\right),
\end{equation}
as $n\to\infty$, uniformly for $t\geq t_{0}$. Write 
\begin{equation}
r(\tau):=\left(\Psi_{0}^{-1}\Psi_{0}'\right)_{21}(0;\tau).\label{def r}
\end{equation}
Then, as $n\to\infty$, 
\begin{equation}
\frac{\partial}{\partial\lambda_{0}}\log H_{n}(\lambda_{0},\beta)=\sqrt{2}n^{1/6}e^{-i\pi\beta}\frac{1}{\pi}\sin\left(\pi\beta\right)\,r(\tau)+\Delta\left(n,t\right),\label{as diff id}
\end{equation}
where $\Delta\left(n,t\right)=\mathcal{O}\left(n^{-1/6}e^{-\tau}\right)$
uniformly for $t\geq t_{0}$ as $n\to\infty$.

\subsection[Expression for $r$ in terms of $u$]{Expression for \boldmath{$r$ in terms of $u$}}
\begin{prop}
Let $r$ be defined by (\ref{def r}), $\Psi_{0}$ as introduced in
Section \ref{sub:Local-parametrix-near-1}, and let $u$ be the Painlevé
II solution characterized by \eqref{u+}. The following identity holds,
\begin{equation}
\frac{\partial}{\partial\tau}r(\tau;\beta)=\frac{-2\pi i}{1-e^{-2i\pi\beta}}u(\tau;\kappa)^{2},\label{id q y}
\end{equation}
where $\kappa$ and $\beta$ are related by \eqref{def beta}. \end{prop}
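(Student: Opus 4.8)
The strategy is to differentiate $r(\tau)=\bigl(\Psi_0^{-1}\Psi_0'\bigr)_{21}(0;\tau)$ (throughout, $'=\partial_\xi$) by exploiting the Lax pair \eqref{eq:Lax system xi}--\eqref{eq:Lax system tau} for $\Psi_0$. First I would derive the general identity
\begin{equation*}
\frac{\partial}{\partial\tau}\Bigl(\Psi_0^{-1}\frac{\partial\Psi_0}{\partial\xi}\Bigr)=\Psi_0^{-1}\,\frac{\partial V}{\partial\xi}\,\Psi_0,
\end{equation*}
which follows by a one-line manipulation from $\partial_\tau\Psi_0=V\Psi_0$, $\partial_\xi\Psi_0=U\Psi_0$, and the commutativity of the mixed partials of $\Psi_0$: using $\partial_\tau(\Psi_0^{-1})=-\Psi_0^{-1}V$ and $\partial_\tau\partial_\xi\Psi_0=\partial_\xi(V\Psi_0)=(\partial_\xi V)\Psi_0+V\,\partial_\xi\Psi_0$, the two terms containing $V\,\partial_\xi\Psi_0$ cancel, so the compatibility condition \eqref{eq: Lax compatibility condition} is not even needed. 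The point of this identity is that, by the explicit form of $V$ in \eqref{eq:Lax system tau}, $\partial_\xi V=-i\bigl(\begin{smallmatrix}0&1\\0&0\end{smallmatrix}\bigr)$ is a \emph{constant} matrix.

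Next I would record that $\det\Psi_0\equiv1$: the jump matrices in \eqref{eq:model problem jumps-1} are unimodular, the piecewise-constant factors $M(\xi)$ in \eqref{eq:constant factors M} and the unipotent logarithmic factor in \eqref{eq: model behavior at zero} have determinant one, and the normalization \eqref{eq:model asymptotic at infinity} forces $\det\Psi_0\to1$ at infinity; being entire, $\det\Psi_0$ is then identically $1$. Hence for every $\xi$ we have $\bigl(\Psi_0^{-1}\bigl(\begin{smallmatrix}0&1\\0&0\end{smallmatrix}\bigr)\Psi_0\bigr)_{21}=(\Psi_0^{-1})_{21}(\Psi_0)_{21}=-\bigl((\Psi_0)_{21}\bigr)^2$, and combining this with the previous step,
\begin{equation*}
\frac{\partial}{\partial\tau}\bigl(\Psi_0^{-1}\Psi_0'\bigr)_{21}(\xi;\tau)=i\,\bigl((\Psi_0)_{21}(\xi;\tau)\bigr)^2.
\end{equation*}

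It then remains to let $\xi\to0$ from the sector $I$ in which $r$ is defined (the sector used for the $p_n$ asymptotics). From the local expansion \eqref{eq: model behavior at zero}, and since $\bigl(\begin{smallmatrix}a&b\\c&d\end{smallmatrix}\bigr)\bigl(\begin{smallmatrix}0&1\\0&0\end{smallmatrix}\bigr)$ has vanishing $(21)$-entry, one finds $(\Psi_0)_{21}(\xi;\tau)=c(\tau)+O(\xi\log\xi)\to c(\tau)$, with $c(\tau)=\lim_{\xi\to0}(\Psi_0(\xi))_{21}$. As this convergence is locally uniform in $\tau$ (away from poles of $y$, of which there are none for $|\Re\beta|<1/2$ by Theorem \ref{theorem poles}), integrating the second display in $\tau$ and then sending $\xi\to0$ yields $\partial_\tau r(\tau)=i\,c(\tau)^2$. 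Here one also uses that $r$ itself is well-defined, i.e. that the $(21)$-entry of $\Psi_0^{-1}\Psi_0'=\Psi_0^{-1}U\Psi_0$ stays regular at $\xi=0$; this is because, by \eqref{eq:Lax system xi} and \eqref{eq: model behavior at zero}, the residue of $U$ at $0$ is conjugated by the leading matrix of $\Psi_0$ there into a matrix supported only in the $(12)$-slot.

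Finally I would substitute the Lax-pair relation \eqref{eq: connection c^2 to m_21}, $\tfrac{\kappa^2}{2\pi i}c^2=(m_{21}^\Phi)'$, together with \eqref{eq: def of y(tau)}, $-i(m_{21}^\Phi)'=y=u(\tau;\kappa)^2$ (equivalently \eqref{relation c u 1}), to get $c^2=-\tfrac{2\pi}{\kappa^2}u(\tau;\kappa)^2$, whence
\begin{equation*}
\frac{\partial}{\partial\tau}r(\tau;\beta)=i\,c(\tau)^2=\frac{-2\pi i}{\kappa^2}\,u(\tau;\kappa)^2=\frac{-2\pi i}{1-e^{-2i\pi\beta}}\,u(\tau;\kappa)^2,
\end{equation*}
which is \eqref{id q y}. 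The main obstacle is the local analysis at $\xi=0$: one must identify the $(21)$-entry as precisely the entry of $\Psi_0^{-1}\Psi_0'$ that remains finite there despite the logarithmic singularity of $\Psi_0$, and justify exchanging $\partial_\tau$ with the $\xi\to0$ limit; once that is in place, the rest is a short manipulation of the Lax pair.
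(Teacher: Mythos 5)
Your proof is correct, and it takes a genuinely different route from the paper's. The paper gauge-transforms via $\widehat{\Psi}_0=\begin{pmatrix}1&-m_{21}\\0&1\end{pmatrix}\Psi_0$ to simplify the $\tau$-Lax equation, expands $\Psi_0$ to order $\xi$ at the origin as $\begin{pmatrix}a&b\\c&d\end{pmatrix}\left(I+E_1\xi+\mathcal{O}(\xi^2)\right)\left(I+\frac{\kappa^2}{2\pi i}\begin{pmatrix}0&1\\0&0\end{pmatrix}\log\xi\right)$, identifies $r=E_{1,21}$, and matches coefficients in the $\tau$-equation to obtain $E_{1,21}'=ic^2$. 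You instead differentiate $\Psi_0^{-1}\partial_\xi\Psi_0$ in $\tau$ directly using both Lax equations and commutativity of mixed partials, landing on the exact identity $\partial_\tau\left(\Psi_0^{-1}\Psi_0'\right)=\Psi_0^{-1}(\partial_\xi V)\Psi_0$, and then exploit that $\partial_\xi V=-i\begin{pmatrix}0&1\\0&0\end{pmatrix}$ is constant and $\det\Psi_0\equiv1$ to obtain $\partial_\tau\left(\Psi_0^{-1}\Psi_0'\right)_{21}(\xi;\tau)=i\left((\Psi_0)_{21}(\xi;\tau)\right)^2$ for every $\xi$, after which the claim follows from $\xi\to0$ and \eqref{relation c u 1}. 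What your version buys is an identity uniform in $\xi$ rather than a statement read off only at the origin, which both sidesteps the gauge transformation and makes the exchange of $\partial_\tau$ with the $\xi\to0$ limit more transparent; it also correctly notes that the compatibility condition never needs to be invoked explicitly, since it is automatic from the existence of $\Psi_0$. Your observation that the residue of $U=\partial_\xi\Psi_0\,\Psi_0^{-1}$ at the origin conjugates into the $(12)$-slot, so that the $(21)$-entry of $\Psi_0^{-1}\Psi_0'$ is regular there, is also correct and supplies the regularity of $r$ that the paper obtains implicitly through the identification $r=E_{1,21}$.
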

\begin{proof}
Define 
\begin{equation}
\widehat{\Psi}_{0}(\xi)=\begin{pmatrix}1 & -m_{21}\\
0 & 1
\end{pmatrix}\Psi_{0}(\xi).\label{def hatPsi}
\end{equation}
This transformation has the advantage that it simplifies the $\tau$-equation
in the Lax pair \eqref{eq:Lax system xi}, \eqref{eq:Lax system tau}.
We have 
\begin{equation}
\left(\frac{\partial}{\partial\tau}\widehat{\Psi}_{0}\right)\widehat{\Psi}_{0}^{-1}=-i\xi\begin{pmatrix}0 & 1\\
0 & 0
\end{pmatrix}-i\begin{pmatrix}0 & w\\
-1 & 0
\end{pmatrix},\label{Lax tau}
\end{equation}
where $w$ is some unknown function of $\tau$. In what follows, primes
will be used for differentiation w.r.t. $\tau$.

Now, we start from \eqref{eq: model behavior at zero}. In sector
I, we can write 
\begin{equation}
\Psi_{0}(\xi)=\begin{pmatrix}a & b\\
c & d
\end{pmatrix}(I+E_{1}\xi+\mathcal{O}(\xi^{2}))\left(I+\frac{\kappa^{2}}{2\pi i}\begin{pmatrix}0 & 1\\
0 & 0
\end{pmatrix}\ln\xi\right),\qquad\xi\to0,\label{hatPsi0 expansion}
\end{equation}
for some matrix $E_{1}$ which is $\xi$-independent. We easily see
from (\ref{def r}) and (\ref{def hatPsi}) that 
\begin{equation}
r(\tau)=\left(\Psi_{0}^{-1}\Psi_{0}'\right)_{21}(0;\tau)=\left(\widehat{\Psi}_{0}^{-1}\widehat{\Psi}_{0}'\right)_{21}(0;\tau)=E_{1,21}(\tau).\label{eq q E}
\end{equation}
Substituting (\ref{hatPsi0 expansion}) in (\ref{Lax tau}), we obtain
\begin{equation}
E_{1,21}'(\tau)=ic^{2}(\tau).\label{eq r E2}
\end{equation}
By \eqref{relation c u 1}, we have 
\begin{equation}
E_{1,21}'(\tau)=-\frac{2\pi i}{\kappa^{2}}u(\tau;\kappa)^{2}.
\end{equation}
Together with (\ref{eq q E}) and \eqref{eq r E2}, this implies (\ref{id q y}). 
\end{proof}

\subsection{Proof of Theorem \ref{theorem hankel}}

As $n\to\infty$, we have $\tau\sim t$, see \eqref{eq: expansion tau}.
Integrating (\ref{as diff id}) from $\lambda_{0}=\sqrt{2n}(1+t_{0}n^{-2/3}/2)$
to $\lambda_{1}=\sqrt{2n}(1+t_{1}n^{-2/3}/2)$ with $t_{0}<t_{1}$,
we obtain 
\begin{multline}
\ln H_{n}(\sqrt{2n}(1+\frac{t_{0}}{2}n^{-2/3}),\beta)-\ln H_{n}(\sqrt{2n}(1+\frac{t_{1}}{2}n^{-2/3}),\beta)=\\
-e^{-i\pi\beta}\frac{1}{\pi}\sin\pi\beta\,\int_{t_{0}}^{t_{1}}r(\tau)\mathrm{d}t+\frac{1}{\sqrt{2}}n^{-1/6}\int_{t_{0}}^{t_{1}}\Delta\left(n,t\right)\mathrm{d}t.\label{eq:hankel integral}
\end{multline}
We note that here, as well as in \eqref{as diff id}, 
\[
\tau=\zeta\left(1+\frac{t}{2}n^{-2/3}\right).
\]
Writing the left hand side of this expression in an explicit form,
one can easily check that there exists a positive constant $c_{0}$
such that $\tau\geq c_{0}t$ for all $t>1$ and all $n>1$. Hence
we can let $t_{1}\to+\infty$ in \eqref{eq:hankel integral} and,
taking into account that $e^{-i\pi\beta n}H_{n}\left(\sqrt{2n}\left(1+t_{1}n^{-2/3}/2\right),\beta\right)$
tends to the Gaussian Hankel determinant $H_{n}\left(\lambda_{0},0\right)$
without the jump, arrive at the estimate 
\begin{equation}
\log H_{n}(\sqrt{2n}(1+\frac{t_{0}}{2}n^{-2/3}),\beta)-\log H_{n}\left(\lambda_{0},0\right)-i\pi\beta n=-e^{-i\pi\beta}\frac{1}{\pi}\sin\pi\beta\,\int_{t_{0}}^{\infty}r(\tau)\mathrm{d}\tau+\mathcal{O}\left(n^{-1/3}\right),\label{as Hankel}
\end{equation}
or 
\begin{equation}
H_{n}(\sqrt{2n}(1+\frac{t_{0}}{2}n^{-2/3}),\beta)=-e^{i\pi\beta n}H_{n}\left(\lambda_{0},0\right)\exp\left(-\frac{e^{-i\pi\beta}\sin\pi\beta}{\pi}\,\int_{t_{0}}^{\infty}r(\tau)\mathrm{d}\tau\right)(1+o(1)),
\end{equation}
as $n\to\infty$. We note that we have replaced $\mathrm{d}t$ with
$\mathrm{d}\tau$ in the integral $\int_{t_{1}}^{\infty}r\left(\tau\right)\mathrm{d}\tau$.
This is justified in the limit $n\to\infty$ since $\tau=t\left(1+\mathcal{O}\left(tn^{-2/3}\right)\right)$
and because $r\left(\tau\right)$, being proportional to an integral
of $u\left(\tau;\kappa\right)^{2}$, decays exponentially at positive
infinity. Finally, substituting (\ref{id q y}) into this expression
and integrating by parts (keeping in mind the above mentioned decay
of $r\left(\tau\right)$), we obtain (\ref{hankelas}).

\section*{Acknowledgements}

The work of A. Bogatskiy was supported by the SPbGU grant 11.38.215.2014.
T. Claeys was supported by the European Research Council under the
European Union's Seventh Framework Programme (FP/2007/2013)/ ERC Grant
Agreement n.~307074 and by the Belgian Interuniversity Attraction
Pole P07/18. The work of A. Its was supported in part by NSF grant
DMS-1361856 and the SPbGU grant 11.38.215.2014.

\global\long\def\mkbibnamefirst#1{\textsc{#1}}

\global\long\def\mkbibnamelast#1{\textsc{#1}}

\global\long\def\mkbibnameprefix#1{\textsc{#1}}

\global\long\def\mkbibnameaffix#1{\textsc{#1}}

\SetTracking{encoding=*, shape=sc}{40} %reduce spacing between small caps

\printbibliography[heading=bibintoc]

\end{document}